\definecolor{armygreen}{rgb}{0.29, 0.33, 0.13}
\definecolor{maroon(x11)}{rgb}{0.69, 0.19, 0.38}
\definecolor{darkorange}{rgb}{1.0, 0.55, 0.0}
\definecolor{auburn}{rgb}{0.43, 0.21, 0.1}
\newtheorem{Theorem}{Theorem}
\newtheorem{Corollary}{Corollary}
\newcommand{\clo}{\overline}
\newcommand{\ra}{\rangle}
\newcommand{\la}{\langle}
\newcommand{\usc}{u_{\rm sc}}
\newcommand{\uinc}{u_{\rm inc}}
\newcommand{\Omm}{\Omega^{-}}
\newcommand{\uij}{u_{i,j}}
\newcommand{\uimj}{u_{i-1,j}}
\newcommand{\uiMj}{u_{i+1,j}}
\newcommand{\uijm}{u_{i,j-1}}
\newcommand{\uijM}{u_{i,j+1}}
\newcommand{\Sph}{\mathbb{S}}
\definecolor{hotcolor}{rgb}{1,0,0}
\journal{ArXiv}
\begin{document}

\begin{frontmatter}




\title{\textbf{High order local absorbing boundary conditions for acoustic \\waves in terms of farfield expansions}}

\author[label1]{Vianey Villamizar\corref{cor1}}
\ead{vianey@mathematics.byu.edu}

\author[label2]{Sebastian Acosta}
\ead{sacosta@bcm.edu}
\ead[url]{sites.google.com/site/acostasebastian01}

\author[label1]{Blake Dastrup}
\ead{blakedast@gmail.com}

\address[label1]{Department of Mathematics, Brigham Young University, Provo, UT}
\address[label2]{Department of Pediatrics - Cardiology, Baylor College of Medicine, Houston, TX}
\cortext[cor1]{Corresponding author}

\begin{abstract}
We devise a new high order local absorbing boundary condition (ABC) for radiating problems and scattering of { time-harmonic acoustic waves} from obstacles of arbitrary shape. By introducing an artificial boundary $S$ enclosing the  scatterer, 
the original unbounded domain $\Omega$ is decomposed into a bounded computational domain $\Omega^{-}$ and an exterior unbounded domain $\Omega^{+}$. Then, we define interface conditions at the artificial boundary $S$, from truncated versions of the well-known Wilcox and Karp farfield expansion representations of the exact solution in the exterior region $\Omega^{+}$. As a result, we obtain a new local absorbing boundary condition (ABC) for a bounded problem on $\Omega^{-}$, which effectively accounts for the outgoing behavior of the scattered field. Contrary to the low order absorbing conditions previously defined, the order of the error induced by this ABC can easily match the order of the numerical method in $\Omega^{-}$. We accomplish this by simply adding as many terms as needed to the truncated farfield expansions of Wilcox or Karp. The convergence of these expansions guarantees that the order of approximation of the new ABC can be increased arbitrarily without having to enlarge the radius of the artificial boundary. We include numerical results in two and three dimensions which demonstrate the improved accuracy and simplicity of this new formulation when compared to other absorbing boundary conditions.
\end{abstract}

\begin{keyword}
Acoustic scattering \sep Nonreflecting boundary condition \sep High order absorbing boundary condition \sep Helmholtz equation\sep Farfield pattern

\end{keyword}

\end{frontmatter}


\section{Introduction} \label{Section.Intro}
Equations modeling wave phenomena  in fields such as geophysics, oceanography, and acoustics among others, are normally defined on unbounded domains. Due to the complexity of the corresponding boundary value problems (BVP),  in general, an explicit analytical technique cannot be found. Therefore, they are treated by numerical methods.
 Major challenges appear when numerically solving wave problems defined in these unbounded regions using volume discretization methods. One of them consists of the appropriate definition of absorbing boundary conditions (ABC) on artificial boundaries such that the solution of the new bounded problem approximates to a reasonable degree the solution of the original unbounded problem in their common domain.
That is why the  definition of ABCs for wave propagation problems in unbounded domain plays a key role in computation.

Historically two main approaches were initially followed in the evolution of ABCs, as described by Givoli in \cite{GivoliReview2}.  First, low order local ABCs were constructed. Undoubtedly, one of the most important ABCs in this category was introduced by Bayliss-Gunzburger-Turkel in their celebrated paper \cite{Bayliss01}. This condition is denoted as BGT in the ABC literature. Other well-known conditions in this category were introduced by Engquist-Majda \cite{Engquist01}, Feng \cite{Feng} and Li-Cendes \cite{Li-Cendes}.
Some of them became references for many that followed thereafter. Several years later in the late 1980s and early 1990s, exact non-local ABCs made their appearance. Since their definitions are based on Dirichlet-to-Neumann (DtN) maps, they are called DtN absorbing boundary conditions. The pioneer work in their formulations and implementations was performed by Keller-Givoli \cite{Keller01,Givoli-Keller1990} and Grote-Keller \cite{Grote-Keller01}. The main virtue of the DtN absorbing conditions is that they approximate the field at the artificial boundary almost exactly. Therefore, the accuracy of the numerical computation depends almost entirely on the accuracy of the numerical method employed for the computation at the interior points. 

The BGT absorbing condition consists of a sequence of differential operators applied at the artificial boundary (a circle or a sphere of radius $R$) which progressively annihilate the first terms of a farfield expansion of the outgoing wave valid in the exterior of the artificial boundary. We call the first of these operators BGT$_1$. In three dimensions, it provides an accuracy of $O(1/R^{3})$ and involves a first order normal derivative. The next condition in this sequence, BGT$_2$ has  $O(1/R^{5})$ accuracy and includes a second order normal derivative in its definition. They are called BGT operators of order one and order two, respectively.
The drawback of the BGT and of the other ABCs in the first category is that to increase the order of the approximation at the boundary, the order of the derivatives present in their definitions also needs  to be increased. This leads to  impractical ABCs due to the difficulty found in their implementations beyond the first two orders. There is also a downside for the DtN-ABCs stemming from the fact the computation of the field at any boundary point involves all the other boundary points which leads to partially dense matrices at the final stage of the numerical computation.  

The above disadvantages are overcome by the introduction of {\it high order local} ABCs without high order derivatives. According to \cite {GivoliReview2}, they are sequences of ABCs of increasing accuracy which are also practically implementable for an arbitrarily high order. 
Several ABCs have been formulated within this category in recent years. A detailed description of some of them is found in \cite {GivoliReview2}. A common feature of all these high order local ABCs is the presence of auxiliary variables which avoid the occurrence of high derivatives (beyond order two) in the ABC's formulation. Probably, the best known of all these high order local conditions was formulated by Hagstrom-Hariharan \cite{Hagstrom98} which we denote as HH. They start representing the outgoing solution by an infinite series in inverse powers of $\frac{1}{R}$, where $R$ is the radius of a circular or spherical artificial boundary. 
Analogous to the BGT formulation, the key idea in this formulation is the construction of a sequence of operators that approximately annihilate the residual of the preceding term in the sequence. As a result, a sequence of conditions in the form of recurrence formulas for a set of unknown auxiliary variables is obtained. The expression for the first auxiliary variable coincides with BGT$_1$. Similarly, combining the formulas for the first two auxiliary variables, the HH absorbing condition reduces to BGT$_2$. Actually, Zarmi \cite{ZarmiThesis} proves that HH is equivalent to BGT for all orders. The difference between these two formulations is that HH does not involve high derivatives owing to the use of the auxiliary variables. Thus, HH can be implemented for arbitrarily high order. The three-dimensional (3D) HH can be considered an exact ABC since it is obtained from an exact representation of the solution in the exterior of the artificial boundary. However, the two-dimensional (2D) HH is only asymptotic because it is obtained from an asymptotic expansion of the exact representation of the solution. Recently, Zarmi-Turkel \cite{Zarmi-Turkel} generalized the HH construction of local high order ABCs. They developed an annihilating technique that can be applied to rather general series representation of the solution in the exterior of the computational domain.  As a result, they were able to reobtain HH and derive new high order local ABCs such as a high order extension of Li-Cendes ABC \cite{Li-Cendes}. 

Our construction of high order local ABCs proceeds in the opposite way of the  previous ABCs discussed above. Instead of defining local differential operators which progressively annihilate the first terms of a series representation of the solution in the exterior of the artificial boundary, we use a truncated version of the series representations directly to define the ABC without defining special differential operators at the boundary.  As a consequence, the derivation of the absorbing condition is extremely simple. Moreover, the order of the error induced by this ABC can be easily improved by simply adding as many terms as needed to the truncated farfield expansions 
The series representations employed are Karp's farfield expansion \cite{Karp} in 2D, and Wilcox's farfield expansion \cite{Wil-1956} in 3D. They are exact representations of the outgoing wave outside the circular or spherical artificial boundaries of radius $R$, respectively. Therefore, the resulting ABC which we call 
Karp's double farfield expansion (KDFE) and Wilcox farfield expansion (WFE), respectively, can be considered exact ABCs. The exact character of KDFE represents an improvement over HH in 2D, which is only asymptotically valid. Instead of having unknown auxiliary functions as part of the new condition, we simply consider as unknowns the original angular functions appearing in Wilcox's or Karp's farfield expansions. To determine these angular functions, we use the recurrence formulas derived from Wilcox's or Karp's theorems which do not disturb the local character of the ABC. A relevant feature of the farfield expansions approach is that the coefficient (angular function) of their leading term is the farfield pattern of the propagating wave. Thus, no additional computation is required to obtain an approximation for this important profile. For comparison purposes, we also obtain a farfield expansion ABC from the asymptotic farfield expansion of Karp's exact series. We call it Karp's single farfield expansion (KSFE) absorbing boundary condition. 

An important consideration is that the formulation of these absorbing boundary conditions depends on existing knowledge of an exact or asymptotic series representation for the outgoing waves of the problem being studied. This limits the use of Karp and Wilcox farfield expansions ABCs to problems in the entire plane or space, respectively.  As a consequence, problems involving straight  infinite boundaries as waveguide problems, half-plane, or quarter-plane cannot be modeled by these ABCs. For these type of problems, the most popular method to formulate ABCs is the perfectly matched layer (PML) introduced by Berenger \cite{Berenger01}. However, a class of high order absorbing boundary conditions has also been employed by several authors. For instance, 
Hagstrom, Mar-Or, and Givoli \cite{H-MO-Givoli2008} obtained high order local ABCs for two-dimensional waveguide problems modeled by the wave equation. This ABC was first formulated for the wave equation by Hagstrom-Warburton \cite{H-W} which in turn is based on a modification of the Higdon ABCs \cite{Higdon1987}. More recently,
Rabinovich and et al. \cite{Rab-Giv-Bec-2010} adapted Hagstrom-Warburton ABC to time-harmonic problems in a waveguide and a quarter-plane modeled by the Helmholtz equation. 

The outline of the succeeding sections is as follows. In Section \ref{Section.Formulation}, details about the expansions KDFE, KSFE, and WFE are given. Also, the relationships between lower orders of KSFE absorbing boundary condition, BGT$_1$, and BGT$_2$ are established. Then in Section \ref{Section:NumMethd}, the numerical method is described in the 2D case for KDFE. In particular, the discrete equations at the boundary are carefully derived. This is followed by an analysis of the structure of the matrices defining the ultimate linear systems for KSFE, KDFE, and DtN boundary value problems, respectively. Finally, numerical results for scattering and radiating problems, from circular and complexly shaped obstacles in 2D, and also from spherical obstacles in 3D, employing the novel farfield expansions ABCs, are reported in Section \ref{Section.Numerics2D}.

\section{High order local absorbing conditions from farfield expansions
} \label{Section.Formulation}

We start this Section by considering the scattering problem of a time-harmonic incident wave, $\uinc$, from a single obstacle in two or three dimensions. This scatterer is an impenetrable obstacle that occupies a simply connected bounded region  with boundary $\Gamma$. The open unbounded region in the exterior of $\Gamma$ is denoted as 
$\Omega$. This region $\Omega$ is occupied by a homogeneous and isotropic medium. Both the incident field $\uinc$ and the scattered field $\usc$ satisfy the Helmholtz equation in $\Omega$. For simplicity, we assume a Dirichlet boundary condition (soft obstacle)  on $\Gamma$. However,  the analysis in this article can be easily extended to Neumann or Robin boundary conditions, and to a bounded penetrable scatterer with inhomogeneous and anisotropic properties. Then, $\usc$ solves the following boundary value problem (BVP):
\begin{eqnarray}
&& \Delta \usc + k^2 \usc = f \quad\qquad \text{in $\Omega$}, \label{BVPsc1} \\
&& \usc = - \uinc \qquad\qquad \text{on $\Gamma$,} \label{BVPsc2} \\
&& \lim_{r \rightarrow \infty} r^{(\delta-1)/2} \left( \partial_{r} \usc- \mathrm{i} k \usc \right) = 0.\label{BVPsc3}
\end{eqnarray}
 {The wave number $k$ and the source $f$ may vary in space}. Equation (\ref{BVPsc3}) is known as the Sommerfeld radiation condition where $r = |\textbf{x}|$ and $\delta=2$ or 3 for two or three dimensions, respectively. It implies that $\usc$ is an outgoing wave. This boundary value problem is well-posed under classical and weak formulations \cite{ColtonKress02,Nedelec01,McLean2000}.

As pointed out in the introduction, the unbounded BVP (\ref{BVPsc1})-(\ref{BVPsc3}) needs to be transformed into a bounded BVP before a numerical solution can be sought. 
This is typically done by introducing an artificial boundary $S$ enclosing the obstacle followed by defining an appropriate absorbing boundary condition (ABC) on $S$. We choose a circular or spherical artificial boundary for the two- or three-dimensional scenarios, respectively. As a result, the region $\Omega$ is divided into two open regions. The region 
$\Omega^{-},$ bounded internally by the obstacle boundary $\Gamma$ and externally by the artificial boundary $S$ (a circle or a sphere of radius $r=R$), and the open unbounded connected region $\Omega^{+}= \Omega \setminus \clo{\Omega^{-}}$.  {We assume that the source $f$ has its support in $\Omega^{-}$, and the wave number $k$ is constant in $\Omega^{+}$}. An appropriate ABC should induce no or little spurious reflections from the artificial boundary $S$ in order to maintain a good accuracy for the numerical solution inside $\Omega^{-}$.  

As an intermediate step before constructing our high order local ABC in the next sections, we consider the following equivalent interface problem to the original BVP (\ref{BVPsc1})-(\ref{BVPsc3}) for  $\usc^{-} = \usc |_{\Omega^{-}}$ and $\usc^{+} = \usc |_{\Omega^{+}}$:
\begin{eqnarray}
&& \Delta \usc^{-} + k^2 \usc^{-} = f, \quad\qquad  \text{in $\Omega^{-}$}, \label{BVPInterf1} \\
&& \Delta \usc^{+} + k^2 \usc^{+} = 0, \quad\qquad  \text{in $\Omega^{+}$}, \label{BVPInterf2} \\
&& \usc^{-} = - \uinc, \qquad\qquad\,  \text{on $\Gamma$,} \label{BVPInterf3}
\end{eqnarray}
\noindent with the interface and Sommerfeld conditions:
\begin{eqnarray}
&& \usc^{-} = \usc^{+}, \qquad\qquad\quad  \text{on $S$,}\label{BVPInterf4}  \\
&&\partial_{\nu} \usc^{-} = \partial_{\nu} \usc^{+}  \quad\qquad\,\,\,\,\text{on $S$}, \label{BVPInterf5} \\
&& \lim_{r \rightarrow \infty} r^{(\delta-1)/2}\left( \partial_{r} \usc^{+} - \mathrm{i} k \usc^{+} \right) = 0,  \label{BVPInterf6}
\end{eqnarray}
where $\partial_{\nu}$ denotes the derivative in the outer normal direction on $S$. The original scattering problem (\ref{BVPsc1})-(\ref{BVPsc3}), and the interface problem (\ref{BVPInterf1})-(\ref{BVPInterf6}) are equivalent as shown in \cite[Thm 1]{JCP2010} or \cite[Lemma 4.19]{McLean2000}. 
As a consequence, by simply requiring the Cauchy data to match at the artificial boundary $S$, all higher order derivatives also match at the interface.  {This matching condition at the artificial boundary will ultimately lead to a bounded BVP in $\Omega^{-}$ whose numerical solution approximates to a reasonable degree the solution of the original unbounded problem in $\Omega^{-}$. This bounded BVP is constructed by realizing that there is an analytical representation of the solution $\usc^{+}$ for the portion of the interface problem defined in $\Omega^{+}$. By matching, at the artificial boundary $S$, this analytical solution with the solution $\usc^{-}$ defined in the interior region $\Omega^{-}$, the bounded BVP sought in $\Omega^{-}$ is finally obtained. The numerical solution of this bounded BVP in $\Omega^{-}$ is the main subject of this work. Furthermore, once this numerical solution for $\usc^{-}$ is obtained, the analytical representation for $\usc^{+}$ can be evaluated in $\Omega^{+}$. Details of the derivation of the bounded BVP in $\Omega^{-}$ are given in the sections below.  Moreover, since the problem in $\Omega^{-}$ is to be solved numerically, we can consider a rather general source term $f$ and a variable wave number $k$ inside $\Omega^{-}$. However, for sake of simplicity, from now on we assume $f=0$ and $k$ constant.}

\subsection{Karp's double farfield expansion (KDFE) absorbing boundary condition in 2D} \label{Section.ABC2D}

Here, we consider the outgoing field $\usc^{+}$ satisfying the 2D Helmholtz equation exterior to a circle $r=R$ and the Sommerfeld radiation condition (\ref{BVPsc3}) for $\delta = 2$. Our derivation of the new exact absorbing boundary condition is based on a well-known representation of outgoing solutions of the Helmholtz equation in 2D by two infinite series in powers of  $1/kr$. This representation is provided by the following theorem due to Karp.

\begin{Theorem}[Karp \cite{Karp}] \label{Thm.Karp}
Let $\usc^{+}$ be an outgoing solution of the two-dimensional Helmholtz equation in the exterior region to a circle $r=R$. Then, $\usc^{+}$ can be represented by a convergent expansion
\begin{eqnarray}
\usc^{+} (r,\theta) = H_0(kr) \sum_{l=0}^{\infty} \frac{F_l(\theta)}{(kr)^l} + H_1(kr) \sum_{l=0}^{\infty} \frac{G_l(\theta)}{(kr)^l}, \qquad \mbox{for}\,\, r> R. \label{KarpExp}
\end{eqnarray}
This series is uniformly and absolutely convergent for $r>R$ and can be differentiated term by term with respect to $r$ and $\theta$ any number of times.
\end{Theorem}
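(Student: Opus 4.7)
The plan is to reduce Karp's expansion to the classical Fourier--Hankel representation of outgoing solutions in the exterior of a disk, and then algebraically rearrange that representation into the stated double series using identities for integer-order Hankel functions.

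First, by separation of variables in polar coordinates together with the Sommerfeld radiation condition, any outgoing solution of the two-dimensional Helmholtz equation in $r>R$ admits the Fourier--Hankel representation
\begin{equation*}
\uscp(r,\theta) \;=\; \sum_{n \in \mathbb{Z}} a_n\, H_n^{(1)}(kr)\, e^{\mathrm{i} n \theta},
\end{equation*}
with coefficients $a_n$ determined by the trace of $\uscp$ on any circle $r = R_0$ with $R_0 > R$. Because the Helmholtz operator is elliptic with analytic coefficients, $\uscp$ is real-analytic in $r>R$, so the $a_n$ decay faster than any inverse power of $n$; consequently the series is absolutely and uniformly convergent, and termwise differentiable to all orders, on every annulus $r \geq R_0 > R$.

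Second, I would prove by induction on $|n|$, using the three-term recurrence $H_{n+1}^{(1)}(z) = (2n/z)\, H_n^{(1)}(z) - H_{n-1}^{(1)}(z)$ and the identity $H_{-n}^{(1)} = (-1)^n H_n^{(1)}$, that for every integer $n$ there exist polynomials $P_n$ and $Q_n$ in $1/z$ with real coefficients such that
\begin{equation*}
H_n^{(1)}(z) \;=\; P_n(1/z)\, H_0^{(1)}(z) \;+\; Q_n(1/z)\, H_1^{(1)}(z).
\end{equation*}
Substituting this identity into the Fourier--Hankel series for $\uscp$ and formally collecting the coefficients of $H_0(kr)$ and $H_1(kr)$ produces two power series in $1/(kr)$ whose coefficients are trigonometric series in $\theta$. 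These will be the claimed angular functions $F_l$ and $G_l$.

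Third, to justify the interchange of summation I would derive, from the recurrence, explicit bounds on the coefficients of $P_n$ and $Q_n$ and combine them with the super-polynomial decay of $a_n$ to produce an absolutely convergent majorant, uniformly in $r \geq R_0$. Fubini's theorem then legitimizes the rearrangement, and the resulting double series inherits absolute and uniform convergence on every compact subset of $\{r > R\}$. Termwise differentiability in $r$ and $\theta$ is preserved, since the operations carried out on the original series, namely multiplication by polynomials in $1/(kr)$ and Fourier summation in $\theta$, commute with differentiation and the majorants remain summable after differentiation.

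The main obstacle is the convergence estimate in the third step: the coefficients of $P_n$ and $Q_n$ generated by the recurrence grow combinatorially in $n$, and this growth must be controlled for every $r>R$, not merely for $r$ sufficiently large. Quantitative bounds of the form $|P_n(1/z)| + |Q_n(1/z)| \lesssim C(|z|)^{|n|}$ on compact subsets of $\{|z| > kR\}$, combined with the essentially factorial decay of $|a_n|$ inherited from analyticity of $\uscp$, should be enough to close the estimate, but verifying this carefully is the technical heart of the argument.
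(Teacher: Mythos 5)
The paper itself offers no proof of this statement: Theorem~\ref{Thm.Karp} is quoted from Karp \cite{Karp} and used as a black box, so there is no in-paper argument to compare against. Your outline is, in substance, a reconstruction of Karp's original proof: pass to the Fourier--Hankel series $\uscp=\sum_{n}a_nH_n^{(1)}(kr)e^{\mathrm{i}n\theta}$, write $H_n^{(1)}(z)=P_n(1/z)H_0^{(1)}(z)+Q_n(1/z)H_1^{(1)}(z)$ (your $P_n,Q_n$ are the Lommel polynomials), and rearrange by powers of $1/(kr)$. That is the right, and indeed the historical, route.

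There is, however, one genuine logical gap in your first step. You claim that real-analyticity of $\uscp$ forces the $a_n$ to decay faster than any inverse power of $n$ and that \emph{consequently} the Hankel series converges absolutely and uniformly on $r\ge R_0>R$. That inference fails: $|H_n^{(1)}(kr)|\sim\frac{(n-1)!}{\pi}\left(\frac{2}{kr}\right)^{n}$ grows factorially in $n$, so super-polynomial decay of $a_n$ is nowhere near enough. The correct mechanism is that the Fourier coefficients $\hat u_n(R_0)$ of the trace decay geometrically, $a_n=\hat u_n(R_0)/H_n^{(1)}(kR_0)$, and for $r\ge R_1>R_0$ one has $|H_n^{(1)}(kr)/H_n^{(1)}(kR_0)|\lesssim (R_0/r)^n$; letting $R_0\downarrow R$ then gives convergence on every compact subset of $\{r>R\}$. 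This is not a pedantic point, because the factorial smallness of $a_n$ comes from the division by $H_n^{(1)}(kR_0)$, not from angular analyticity, and it is precisely this factorial smallness that must beat the factorial growth of the Lommel coefficients in your third step. With that repair your majorant argument does close: the sum $\sum_l |c_{n,l}|(kr)^{-l}$ of absolute values of the coefficients of $P_n,Q_n$ is comparable to the same Lommel combination with all signs made positive, hence again of size $O\!\left((n-1)!\,(2/kr')^n\right)$ for $r'$ slightly smaller than $r$, which the decay of $a_n$ dominates. So: correct approach, essentially Karp's, but do not let the ``super-polynomial decay suffices'' claim stand.
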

\noindent Here, $r$ and $\theta$ are polar coordinates. The functions $H_0$ and $H_1$ are Hankel functions of first kind of order 0 and 1, respectively. Karp also claimed that the terms $F_l$ and $G_l$ ($l=1,2,\dots$) can be computed recursively from $F_0$ and $G_0$. To accomplish this, he suggested the substitution of the expansion (\ref{KarpExp}) into Helmholtz equation in polar coordinates and the use of the identities:
$H_0'(z)=-H_1(z)$ and $H_1'(z)= H_0(z) - \frac{1}{z}H_1(z)$. In fact, by doing this and requiring the coefficients of $H_0$ and $H_1$ to vanish, we derive a recurrence formula for the coefficients $F_l$ and $G_l$ of the expansion (\ref{KarpExp}). This result is stated in the following corollary.
\begin{Corollary} \label{KarpRecurrence}
The coefficients $F_l(\theta)$ and $G_l(\theta)$ ($l>1$) of the expansion (\ref{KarpExp}),  can be determined from $F_0(\theta)$ and $G_0(\theta)$ by the recursion formulas
\begin{align}
& 2 l G_{l}(\theta) = (l-1)^2 F_{l-1}(\theta) + d^2_{\theta}  F_{l-1}(\theta) , \qquad && \text{for $l=1,2, \dots$} \label{Recurrence1}\\
& 2 l F_{l}(\theta) = - l^2 G_{l-1}(\theta) - d^2_{\theta} G_{l-1}(\theta), 
\qquad && \text{for $l=1,2, \dots$}. \label{Recurrence2}
\end{align}
\end{Corollary}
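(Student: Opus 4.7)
The plan is to substitute the Karp expansion (\ref{KarpExp}) directly into the Helmholtz equation in polar coordinates, exploit the Bessel equations of orders $0$ and $1$ satisfied by $H_0$ and $H_1$ together with the derivative identities $H_0'=-H_1$ and $H_1'=H_0-z^{-1}H_1$, and then collect like powers of $1/(kr)$ separately against $H_0$ and $H_1$. Setting each resulting coefficient to zero produces exactly the two recurrences in (\ref{Recurrence1})--(\ref{Recurrence2}). Because Theorem \ref{Thm.Karp} guarantees that the series can be differentiated term by term any number of times, this substitution is legitimate without further analytic concerns.

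First, I would pass to the non-dimensional variable $z=kr$, in which the Helmholtz equation reads $\partial_z^2 u + z^{-1}\partial_z u + z^{-2} d_\theta^2 u + u = 0$, and compute the action of this operator on a single summand $H_\nu(z)\, z^{-l}\, \phi(\theta)$ for $\nu\in\{0,1\}$. After differentiating twice and using $H_\nu'' + z^{-1}H_\nu' + (1-\nu^2/z^2)H_\nu = 0$ to eliminate $H_\nu''$, the four terms of the Helmholtz operator collapse to the compact form
\begin{equation*}
z^{-l-2}\bigl[(l^2+\nu^2)\phi(\theta) + \phi''(\theta)\bigr] H_\nu(z) \;-\; 2l\, z^{-l-1}\phi(\theta)\, H_\nu'(z).
\end{equation*}
I would then invoke $H_0' = -H_1$ and $H_1' = H_0 - z^{-1}H_1$ to rewrite the $H_\nu'$ factor in terms of $H_0$ and $H_1$, producing a representation purely in terms of the two Hankel functions at various powers of $1/z$.

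Second, I would sum the resulting expression over $l\ge 0$, separately for the $F_l$-series and the $G_l$-series, and reorganize the result as a single series of the form $\sum_m H_0(z)\, a_m(\theta)\, z^{-m} + \sum_m H_1(z)\, b_m(\theta)\, z^{-m}$. Reading off the coefficient of $H_0(z)\, z^{-m}$ for $m\ge 2$ yields $(m-2)^2 F_{m-2} + d_\theta^2 F_{m-2} - 2(m-1)G_{m-1}$, and the coefficient of $H_1(z)\, z^{-m}$ yields $(m-1)^2 G_{m-2} + d_\theta^2G_{m-2} + 2(m-1) F_{m-1}$. Equating both families to zero and setting $l=m-1$ reproduces (\ref{Recurrence1}) and (\ref{Recurrence2}). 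The initial indices $m=0,1$ give only trivial identities, consistent with $F_0$ and $G_0$ being the free data.

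The main obstacle I anticipate is not the algebra itself, which is routine, but two points of bookkeeping. The first is the careful matching of the index shifts: the term $H_\nu' z^{-l-1}$ contributes at a different power of $1/z$ than $H_\nu z^{-l-2}$, and a miscounted offset collapses the recurrence onto the wrong neighbouring coefficients. The second is the justification for equating the $H_0$ and $H_1$ coefficients to zero independently; this rests on the linear independence of $H_0(z)$ and $H_1(z)$ (their leading asymptotics $\sqrt{2/(\pi z)}\,e^{\mathrm{i}(z-\pi/4)}$ and $-\mathrm{i}\sqrt{2/(\pi z)}\,e^{\mathrm{i}(z-\pi/4)}$ differ by a nonreal constant at leading order), which forces an identity of the form $H_0(z)A(z,\theta)+H_1(z)B(z,\theta)\equiv 0$ with $A,B$ given by convergent series in $1/z$ to imply $A\equiv B\equiv 0$, hence every coefficient in each series vanishes separately.
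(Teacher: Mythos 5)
Your derivation follows the same route as the paper: substitute the Karp expansion into the polar Helmholtz equation, eliminate $H_\nu''$ via the Bessel equation, rewrite $H_0'$ and $H_1'$ with the stated identities, and set the coefficients of $H_0(z)z^{-m}$ and $H_1(z)z^{-m}$ to zero; your index bookkeeping is correct and reproduces (\ref{Recurrence1})--(\ref{Recurrence2}) exactly (using $(m-2)^2+2(m-2)+1=(m-1)^2$ in the $H_1$ family). One minor caveat: your linear-independence justification as stated is too weak, since at leading order $H_1(z)\sim -\mathrm{i}\,H_0(z)$ and so the leading asymptotics alone only force $a_0-\mathrm{i}b_0=0$ rather than $a_0=b_0=0$; separating the two families requires the full asymptotic series of $H_0$ and $H_1$ (or the uniqueness of the Karp representation), a point the paper likewise leaves implicit.
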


 {As discussed in the previous section, we use the semi-analytical representation of $\usc^{+}$ given by (\ref{KarpExp}) and the matching conditions
 (\ref{BVPInterf4})-(\ref{BVPInterf5}), at the interface $S$, to obtain an approximation $u \approx \usc^{-}$ that satisfies the following bounded BVP in the region $\Omm$:}
 \begin{eqnarray}
&& \Delta u + k^2 u = f, \quad\qquad  \text{in $\Omega^{-}$}, \label{BVPBd1} \\
&& u = - \uinc, \qquad\qquad\,  \text{on $\Gamma$,} \label{BVPBd2} \\
&& u(R,\theta)=H_0(kR) \sum_{l=0}^{L-1} \frac{F_l(\theta)}{(kR)^l} + H_1(kR)\sum_{l=0}^{L-1} \frac{G_l(\theta)}{(kR)^l},\label{BVPBd3} \\
&&\partial_{r} u(R,\theta) = \partial_{r}\left( H_0(kr) \sum_{l=0}^{L-1} \frac{F_l(\theta)}{(kr)^l} + H_1(kr)\sum_{l=0}^{L-1} \frac{G_l(\theta)}{(kr)^l}\right) \bigg|_{r=R},\label{BVPBd4} 
\end{eqnarray}
where $R$ is the radius of the circular artificial boundary $S$. This problem is not complete until enough conditions at the artificial boundary $S$, for the two families of unknown angular functions $F_l$ and $G_l$ of Karp's expansion, are specified. Clearly, extra conditions to determine $F_l$ and $G_l$ for $l=1,\dots L-1$ are provided by the recurrence formulas (\ref{Recurrence1}) and (\ref{Recurrence2}). To apply these recurrence formulas, $F_0$ and $G_0$ need to be known.  {The boundary conditions
$(\ref{BVPBd3})$ and $(\ref{BVPBd4})$ may be used to determine $u$ and $F_0$ at the boundary $S$. Therefore, we are still short by another condition to determine $G_0$ at $S$. 
Now, $\usc$ has a second order partial derivative which is continuous with respect to $r$ at $r=R$. Thus, a natural condition to add at $r=R,$ to our new bounded problem (\ref{BVPBd1})-(\ref{BVPBd4}) supplemented  with (\ref{Recurrence1})-(\ref{Recurrence2}), is $\partial_{\nu}^2 \usc^{-} = \partial_{\nu}^2 \usc^{+} $ which can be fully written in terms of $u$ as}
\begin{eqnarray}
\partial_{r}^2 u(R,\theta) = \partial_{r}^2\left( H_0(kr)\sum_{l=0}^{L-1} \frac{F_l(\theta)}{(kr)^l} + H_1(kr)\sum_{l=0}^{L-1} \frac{G_l(\theta)}{(kr)^l}\right) \bigg|_{r=R}, \label{BVPBd5}
\end{eqnarray}
where the second radial derivative $\partial_{r}^2 u$ may also be expressed in terms of $\partial_{r} u$ and $\partial_{\theta}^2 u$ using the Helmholtz equation itself.

 {Summarizing, we approximate the solution of the interface problem (\ref{BVPInterf1})-(\ref{BVPInterf6}) in the region $\Omega^{-}$ by the solution of the bounded BVP consisting of (\ref{BVPBd1})-(\ref{BVPBd5}) and (\ref{Recurrence1})-(\ref{Recurrence2}). The equations (\ref{BVPBd3})-(\ref{BVPBd5}) for the \textit{double} family of farfield functions $F_l$ and $G_l$, supplemented by the recurrence formulas (\ref{Recurrence1})-(\ref{Recurrence2}),  constitute our novel \textit{Karp's Double Farfield Expansion} $(\text{KDFE}_{L})$ absorbing boundary conditions with $L$ terms.}


\subsection{Karp's single farfield expansion (KSFE) absorbing boundary condition in 2D} \label{Section.ABC2DAsymp}

It is possible to approximate the two-family expansion (\ref{KarpExp}) with a one-family expansion by means of an asymptotic approximation for large values of $kr$. A similar procedure was employed in \cite{Bayliss01}. The Hankel functions $H_{0}(z)$ and $H_{1}(z)$ admit the following approximations \cite[\S 9.2]{HandMathFunct}, 
\begin{align}
& H_{0}(z) = \frac{e^{i z}}{\sqrt{z}} \sum_{l=0}^{L-1} \frac{C_{0,l}}{z^l} + O(|z|^{-L}) \qquad \text{and} \qquad H_{1}(z) = \frac{e^{i z}}{\sqrt{z}} \sum_{l=0}^{L-1} \frac{C_{1,l}}{z^l} + O(|z|^{-L}) \label{Eqn:Hpower}
\end{align}
valid for $z \in \mathbb{C}$ with $| \text{arg}(z) | < \pi$ as $|z| \to \infty$. Therefore, after multiplication of the power series of (\ref{KarpExp}) with these approximations for $H_{0}(kr)$ and $H_{1}(kr)$, re-arranging terms of same powers, and neglecting the terms $O(|kr|^{-L})$, we can combine the two families of angular functions $F_l$ and $G_l$ into one family $f_l$. As a result, a new asymptotic series representation of the outgoing wave (\ref{BVP2D3_Asymp} ) is obtained.
Moreover, the application of the 2D Helmholtz operator to the new asymptotic expansion renders a recursive formula (\ref{BVP2D5_Asymp}) for the functions $f_{l}$.  {Thus, in virtue of the approximation (\ref{Eqn:Hpower}) and the matching at the artificial boundary $S$ described in the previous section,  we obtain a new absorbing boundary condition for the  problem (\ref{BVPBd1})-(\ref{BVPBd2}) given by}
\begin{eqnarray}
&& u(R,\theta)=\frac{e^{ikR}}{\sqrt{kR}}\sum_{l=0}^{L-1} \frac{f_l(\theta)}{(kR)^l}\label{BVP2D3_Asymp} \\
&&\partial_{r} u(R,\theta) =  \frac{e^{i k R}}{\sqrt{k R}} \sum_{l=0}^{L-1} \left( ik - \left( l +\tfrac{1}{2} \right)/R  \right) \frac{f_l(\theta)}{(kR)^l}, \label{BVP2D4_Asymp}\\
&& 2 i l f_{l}(\theta) = \left( l - \tfrac{1}{2} \right)^2 f_{l-1}(\theta) + \partial_{\theta}^{2} f_{l-1}(\theta), \qquad l \geq 1. \label{BVP2D5_Asymp}
\end{eqnarray}
We call the boundary condition defined by (\ref{BVP2D3_Asymp})-(\ref{BVP2D5_Asymp}) with a \textit{single} family of farfield functions $f_{l}$ the \textit{Karp's Single Farfield Expansion} $(\text{KSFE}_{L})$ absorbing boundary condition with $L$ terms. As we see in the numerical results in Section \ref{Section.Numerics2D}, both the KSFE$_{L}$ and KDFE$_{L}$ render similar results  as the number of terms $L$ increases, for moderate to large values of $kR$. But, KSFE$_{L}$ exhibits a slower convergence behavior. However, we warn that (as discussed in \cite{Karp}) the approximations (\ref{Eqn:Hpower}) cannot be convergent for fixed $|z|$ as $L \to \infty$, because the Hankel functions possess a branch cut on the negative real axis which prevents them to be expanded by any Laurent series. Thus the number $L$ should be chosen judiciously, especially for small values of $kR$.


\subsubsection{Relationship between KSFE and BGT absorbing conditions}

First, we consider the relationship between the BVPs corresponding to KSFE$_{1}$ and BGT$_{1}$ (the first order ABC from \cite{Bayliss01}). More precisely, we consider $u_{1}$ solving a BVP corresponding to the 
KSFE$_1$ condition (KSFE$_1$-BVP):
\begin{eqnarray}
&& \Delta u_1+ k^2 u_1= 0, \qquad\qquad\qquad\quad\qquad  \text{in $\Omega^{-}$}, \label{BVP1_1} \\
&& u_1 = - \uinc, \quad\qquad\qquad\qquad\qquad\qquad\,  \text{on $\Gamma$,} \label{BVP1_2} \\
&& u_1(R,\theta)= e^{ikR}\frac{f_0(\theta)}{(kR)^{1/2}}\label{BVP1_3} \\
&&\partial_{r} u_1(R,\theta) = \partial_{r}\left(e^{ikr}\frac{f_0(\theta)}{(kr)^{1/2}}  \right) \bigg|_{r=R}=
 e^{ikR} \frac{f_{0}(\theta)}{(kR)^{1/2}}   \left(  ik  - \frac{1}{2R} \right),\label{BVP1_4} 
\end{eqnarray}
and $U_{1}$ solving a BVP corresponding to the BGT$_1$ condition (BGT$_1$-BVP):
\begin{eqnarray}
&& \Delta U_1+ k^2 U_1= 0, \qquad\qquad\qquad\quad\qquad  \text{in $\Omega^{-}$}, \label{BGT1_1} \\
&& U_1 = - \uinc, \quad\qquad\qquad\qquad\qquad\qquad\,  \text{on $\Gamma$,} \label{BGT1_2} \\
&& \partial_{r}U_1(R,\theta) + \frac{1}{2R}U_1(R,\theta) - ikU_1(R,\theta) = 0.\label{BGT1_3} 
\end{eqnarray}
It is clear from combining (\ref{BVP1_3}) and (\ref{BVP1_4}) that a solution $u_{1}$ of (\ref{BVP1_1})-(\ref{BVP1_4}) also
satisfies the BVP (\ref{BGT1_1})-(\ref{BGT1_3}). Conversely, if $U_1$ is a solution of (\ref{BGT1_1})-(\ref{BGT1_3}), then by defining $f_0(\theta)=U_1(R,\theta)(kR)^{1/2}e^{-ikR}$, we immediately show that $U_1$ is a also a solution of (\ref{BVP1_1})-(\ref{BVP1_4}). Furthermore, 
the BVP (\ref{BGT1_1})-(\ref{BGT1_3}) has a unique solution as shown in \cite{Bayliss01}. As a consequence, the BVPs defined by the BGT$_1$ and KSFE$_1$ conditions have the same unique solution, which we state in the form of a theorem.

\begin{Theorem} \label{Equiv1}
The boundary value problems (\ref{BVP1_1})-(\ref{BVP1_4}) and (\ref{BGT1_1})-(\ref{BGT1_3}) are equivalent and they have a unique solution.
\end{Theorem}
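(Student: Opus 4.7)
The plan is to establish the theorem in three parts: (i) show that any solution $u_1$ of the KSFE$_1$-BVP also solves the BGT$_1$-BVP; (ii) show the converse, that any solution $U_1$ of the BGT$_1$-BVP can be recast as a solution of the KSFE$_1$-BVP via an appropriate choice of $f_0$; and (iii) invoke the uniqueness result for the BGT$_1$-BVP from Bayliss-Gunzburger-Turkel to transfer uniqueness to the KSFE$_1$-BVP. The PDE (\ref{BVP1_1}) and the obstacle condition (\ref{BVP1_2}) are identical to (\ref{BGT1_1})-(\ref{BGT1_2}), so only the boundary conditions at $r=R$ must be reconciled.

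For the forward direction, I would substitute the expression for $u_1(R,\theta)$ given by (\ref{BVP1_3}) into the right-hand side of (\ref{BVP1_4}) to eliminate $f_0(\theta)$. This yields
\begin{equation*}
\partial_r u_1(R,\theta) = u_1(R,\theta)\left(ik - \tfrac{1}{2R}\right),
\end{equation*}
which rearranges exactly to the BGT$_1$ condition (\ref{BGT1_3}). Thus any $u_1$ solving (\ref{BVP1_1})-(\ref{BVP1_4}) also solves (\ref{BGT1_1})-(\ref{BGT1_3}).

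For the converse, given a solution $U_1$ of (\ref{BGT1_1})-(\ref{BGT1_3}), I would define the single angular function $f_0(\theta) := U_1(R,\theta)(kR)^{1/2}e^{-ikR}$. With this definition, (\ref{BVP1_3}) holds trivially at $r=R$, and substituting into the right-hand side of (\ref{BVP1_4}) and using (\ref{BGT1_3}) shows that the derivative condition is satisfied as well. Since the interior equation and the inner boundary condition are shared, $U_1$ is a solution of the KSFE$_1$-BVP.

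Uniqueness follows by combining the two directions: any two solutions of the KSFE$_1$-BVP would produce, by the forward direction, two solutions of the BGT$_1$-BVP, contradicting the uniqueness result established in \cite{Bayliss01}. The main obstacle, although purely computational, is verifying that the combined system (\ref{BVP1_3})-(\ref{BVP1_4}) is genuinely equivalent to the single BGT$_1$ condition (\ref{BGT1_3}) rather than merely implying it; the construction of $f_0$ in the converse direction is the key device that closes this equivalence, since without it one could only establish a one-way implication.
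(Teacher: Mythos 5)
Your proof is correct and follows essentially the same route as the paper: eliminating $f_0$ between (\ref{BVP1_3}) and (\ref{BVP1_4}) to recover (\ref{BGT1_3}), defining $f_0(\theta)=U_1(R,\theta)(kR)^{1/2}e^{-ikR}$ for the converse, and importing uniqueness from \cite{Bayliss01}. No substantive differences from the paper's argument.
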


Secondly, we analyze if the BVPs corresponding to KSFE$_2$ and BGT$_2$ are equivalent. The KSFE$_2$-BVP consists of finding a function $u_2$ satisfying Helmholtz equation in $\Omega^{-}$, Dirichlet boundary condition on $\Gamma$, and the following absorbing boundary condition on $S$:
\begin{eqnarray}
&& u_2(R,\theta)= \frac{e^{ikR}}{(kR)^{1/2}}\left( f_0(\theta)+ \frac{f_1(\theta)}{kR}\right)\label{BVP2_3} \\
&&\partial_{r} u_2(R,\theta) = \frac{e^{ikR}}{(kR)^{1/2}} \left(  \left(ik - \frac{1}{2R} \right)f_{0}(\theta) + \left( ik - \frac{3}{2R} \right) \frac{f_{1}(\theta)}{kR}   \right), \label{BVP2_4} \\
&& 2 i f_1(\theta) = \frac{1}{4} f_0(\theta) + f_0''(\theta).\label{BVP2_5}
\end{eqnarray}
Similarly, the BGT$_2$-BVP consists of finding a function $U_2$ satisfying Helmholtz equation in $\Omega^{-}$, Dirichlet boundary condition on $\Gamma$, and the following absorbing boundary condition on $S$:
\begin{equation}
 \partial_{r}U_2 = \frac{( 2(kR)^2 + 3ikR - 3/4) U_2 + \partial^{2}_{\theta}U_2}{2R(1 - ikR)} 
,\label{BGT2ABC} 
\end{equation}
Next, we will prove the following statement about the relationship between the BVPs corresponding to KSFE$_2$, KSFE$_3$, and BGT$_2$. 

\hspace{1cm}
\begin{Theorem}\label{Non-Equiv}
\hfill
\begin{enumerate}
\item
A solution $u_2$ of KSFE$_2$-BVP satisfies BGT$_2$-BVP only up to $O( R^{-7/2})$ at the artificial boundary $S$.
\item
A solution $u_3$ of KSFE$_3$-BVP satisfies BGT$_2$-BVP up to $O(R^{-9/2})$ at the artificial boundary $S$.
\end{enumerate}
\end{Theorem}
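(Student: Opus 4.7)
The plan is to substitute the KSFE$_L$ Dirichlet and Neumann data at $r=R$ directly into the BGT$_2$ boundary condition (\ref{BGT2ABC}) and track the leading order of the resulting residual in powers of $1/R$. To avoid a spurious denominator blowing up as $R\to\infty$, I would first clear the fraction in (\ref{BGT2ABC}) and work with the ``unnormalized'' residual
\begin{equation*}
\tilde{\mathcal{E}}(u) := 2R(1-ikR)\,\partial_r u - \bigl(2(kR)^2 + 3ikR - \tfrac{3}{4}\bigr)u - \partial_\theta^2 u,
\end{equation*}
noting that the normalized BGT$_2$ residual $\mathcal{E}(u) = \tilde{\mathcal{E}}(u)/[2R(1-ikR)]$ picks up an extra factor of order $R^{-2}$ as $R\to\infty$.

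First, I would plug the KSFE$_L$ expressions (\ref{BVP2D3_Asymp})--(\ref{BVP2D4_Asymp}) into $\tilde{\mathcal{E}}$ and expand. A direct computation of $(2R-2ikR^2)(ik-(l+\tfrac12)/R)$ minus $(2(kR)^2+3ikR-\tfrac{3}{4})$ yields $2il\,kR - (2l+\tfrac{1}{4})$ as the coefficient of $f_l(\theta)/(kR)^l$. Carrying along the common factor $A := e^{ikR}/(kR)^{1/2}$ and regrouping by powers of $(kR)^{-1}$, the coefficient of $(kR)^{-n}$ in $\tilde{\mathcal{E}}(u_L)/A$ is $2i(n+1)f_{n+1} - (2n+\tfrac{1}{4})f_n - f_n''$ for $0 \le n \le L-2$, while the top power $n = L-1$ receives only the surviving tail $-(2(L-1)+\tfrac{1}{4})f_{L-1} - f_{L-1}''$.

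The second step is the key use of the KSFE recurrence (\ref{BVP2D5_Asymp}), which when shifted to index $n+1$ reads $2i(n+1)f_{n+1} = (n+\tfrac{1}{2})^2 f_n + f_n''$. Substituting this into the interior coefficient collapses it to $[(n+\tfrac{1}{2})^2 - (2n+\tfrac{1}{4})] f_n = n(n-1)\,f_n$, which vanishes \emph{precisely} for $n\in\{0,1\}$. This is the algebraic ``accident'' underlying the second-order accuracy of BGT$_2$. For $L=2$ only the single interior coefficient (at $n=0$) is killed and the tail at $(kR)^{-1}$ survives, giving $\tilde{\mathcal{E}}(u_2) = -\tfrac{A}{kR}\bigl(\tfrac{9}{4}f_1 + f_1''\bigr)$; for $L=3$ both interior coefficients at $n=0,1$ are killed and the tail at $(kR)^{-2}$ survives, giving $\tilde{\mathcal{E}}(u_3) = -\tfrac{A}{(kR)^2}\bigl(\tfrac{17}{4}f_2 + f_2''\bigr)$.

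Finally, dividing by $2R(1-ikR) \sim -2ikR^2$ and using $|A| = (kR)^{-1/2}$, I would conclude $\mathcal{E}(u_2) = O(R^{-7/2})$ and $\mathcal{E}(u_3) = O(R^{-9/2})$ on $S$, proving both statements. The main obstacle is pure bookkeeping: verifying the identity $(n+\tfrac{1}{2})^2 - (2n+\tfrac{1}{4}) = n(n-1)$ produces exactly two (and only two) consecutive cancellations, which is what cleanly separates KSFE$_2$ from KSFE$_3$ and explains why further increasing $L$ would not lower the BGT$_2$ residual beyond the second statement.
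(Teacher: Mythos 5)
Your proposal is correct and follows essentially the same route as the paper: substitute the KSFE$_L$ Dirichlet and Neumann data into the BGT$_2$ condition, clear the denominator $2R(1-ikR)$, and use the recurrence (\ref{BVP2D5_Asymp}) to cancel the low-order terms of the residual, with the surviving tail term fixing the order in $1/R$. Your general-$L$ bookkeeping, which isolates the coefficient $n(n-1)f_n$ and shows that exactly the $n=0,1$ contributions vanish, is a cleaner packaging of the paper's two case-by-case computations for $L=2$ and $L=3$, and your explicit residuals are consistent (up to an immaterial multiplicative constant) with the paper's equation (\ref{diff}).
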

\begin{proof}
We will prove statement (a) by showing that when $U_{2}$ is replaced by $u_{2}$ in (\ref{BGT2ABC}), then the left hand side (lhs) of (\ref{BGT2ABC}) is equal to its right hand side (rhs) up to $O(R^{-3/2})$.
To obtain the expression for the lhs, we replace $\partial_r U_2$ in (\ref{BGT2ABC}) with $\partial_r u_2$ and use (\ref{BVP2_4}). This leads to
\begin{equation}
\text{lhs} = \left[ \left( ik - \frac{1}{2R} \right) f_{0} + \left( ik - \frac{3}{2R} \right) \frac{f_{1}}{kR}  \right] \frac{e^{ikR}}{(kR)^{1/2}}. \label{lhs2}
\end{equation}
On the other hand, replacing $U_2$ by $u_2$ defined by (\ref{BVP2_3}) into the rhs of (\ref{BGT2ABC}), we obtain,
\begin{equation}
\text{rhs} = \frac{1}{2R \left( 1 - ikR \right)} \left[ \left( 2(kR)^2 + 3ikR - \frac{3}{4} \right)\left( f_{0} + \frac{f_{1}}{kR} \right) + f''_{0} + \frac{f''_{1}}{kR} \right] \frac{e^{ikR}}{(kR)^{1/2}}. \label{rhs2}
\end{equation}
Now, using the recurrence formula (\ref{BVP2_5}) in (\ref{lhs2})-(\ref{rhs2}), we obtain,
\begin{equation}
\left( 1- ikR \right) \left( \text{lhs} - \text{rhs} \right) =  \frac{ik e^{ikR}}{(kR)^{5/2}} \left( \frac{9}{16} f_{0} + \frac{5}{2} f''_{0} + f''''_{0} \right). \label{diff}
\end{equation}
Hence, division by $(1-ikR)$ renders the statement (a).

A similar procedure leads to the proof of statement (b). First, we consider BVP defining the absorbing condition KSFE$_3$ which consists of finding a function $u_3$ satisfying Helmholtz equation in $\Omega^{-}$, Dirichlet boundary condition on $\Gamma$, and the following absorbing boundary condition on $S$:
\begin{eqnarray}
&& u_3(R,\theta)= \frac{e^{ikR}}{(kR)^{1/2}}\left(f_0(\theta) + \frac{f_1(\theta)}{kR} +  \frac{f_2(\theta)}{(kR)^{2}}\right), \label{BVP3_3} \\
&& \partial_{r} u_3(R,\theta) = \frac{e^{ikR}}{(kR)^{1/2}} \left( \left( ik - \frac{1}{2R} \right) f_{0}(\theta) + \left( ik - \frac{3}{2R} \right) \frac{f_{1}(\theta)}{kR} + \left( ik - \frac{5}{2R} \right) \frac{f_{2}(\theta)}{(kR)^2} \right),\label{BVP3_4} \\
&& 2i f_1(\theta) = \frac{1}{4}f_0(\theta) + f_0''(\theta), \label{BVP3_5} \\
&& 4i f_2(\theta) = \frac{9}{4}f_1(\theta) + f_1''(\theta). \label{BVP3_6}
\end{eqnarray}
When replacing $U_3$ with $u_3$, then lhs of (\ref{BGT2ABC}) becomes equal to
\begin{equation}
\text{lhs} = \frac{e^{ikR}}{(kR)^{1/2}} \left( \left( ik - \frac{1}{2R} \right) f_{0}(\theta) + \left( ik - \frac{3}{2R} \right) \frac{f_{1}(\theta)}{kR} + \left( ik - \frac{5}{2R} \right) \frac{f_{2}(\theta)}{(kR)^2} \right). \label{lhs3}
\end{equation}

Similarly, substituting $u_3$ into the rhs of (\ref{BGT2ABC}) leads to
\begin{equation}
2R (1 - ikR) \, \text{rhs} = \left( 2 (kR)^2 + 3ikR - 3/4 \right) \left( f_0 + \frac{f_1}{kR} + \frac{f_{2}}{(kR)^2} \right) + f''_0 + \frac{f''_1}{kR} + \frac{f''_2}{(kR)^2}.  \label{rhs3}
\end{equation}
Then, using the recurrence formulas (\ref{BVP3_5})-(\ref{BVP3_6}), we obtain that $(1 - ikR)\left( \text{lhs} - \text{rhs} \right) = O (R^{-7/2})$.
Finally, the statement (b) is proved by dividing both sides by $(1 - ikR)$.
\end{proof}

It was shown in \cite{Bayliss01} that a solution $U_2$ of the BGT$_2$-BVP approximates the exact solution of (\ref{BVPsc1})-(\ref{BVPsc3}) to $O (R^{-9/2} )$ when $R\rightarrow\infty$. From our previous results, 
we conclude that BGT$_2$-BVP and KSFE$_2$-BVP are not equivalent. Since a solution of KSFE$_2$-BVP satisfies BGT$_2$  to $O(R^{-7/2})$, a solution of KSFE$_2$-BVP will be a poorer approximation to the exact solution than $U_2$. However, the solution of KSFE$_3$-BVP satisfies BGT$_2$ to $O (R^{-9/2})$ also. It means that the solutions of BGT$_2$-BVP and KSFE$_3$-BVP approximate the exact solution at a comparable rate. This behavior is confirmed in our numerical experiments in Section \ref{Section.Numerics2D}.


\subsection{Wilcox's farfield expansion absorbing boundary condition in 3D} \label{Section.ABC3D}
For the 3D case ($\delta=3$), we also use a representation of outgoing waves by an infinite series in powers of $1/{kr}$. This representation is provided by a well-known theorem due to Atkinson and Wilcox, which is stated here for completeness.

\begin{Theorem}[Atkinson-Wilcox \cite{Wil-1956}]\label{Thm.Wilcox}
Let $\usc^{+}$ be an outgoing solution of the three-dimensional Helmholtz equation in the exterior region to a sphere of radius $r=R$. Then, $\usc^{+}$ can be represented by a convergent expansion
\begin{eqnarray}
\usc^{+}(r,\theta,\phi) = \frac{e^{ikr}}{kr}\sum_{l=0}^{\infty} \frac{F_l(\theta,\phi)}{(kr)^l} \qquad\qquad \text{for $r> R$}.\label{WilcoxExp}
\end{eqnarray}
This series is uniformly and absolutely convergent for $r>R$, $\theta$, and $\phi$. It can be differentiated  term by term with respect to $r$, $\theta$, and $\phi$ any number of times and the resulting series all converge absolutely and uniformly. Moreover the coefficients $F_l$ ($l \ge 1$) can be determined by the recursion formula,
\begin{eqnarray}
2 i l F_l(\theta,\phi) = l(l-1) F_{l-1}(\theta,\phi) + \Delta_{\Sph} F_{l-1}(\theta,\phi), \qquad l \geq 1. \label{AW-Recursive}
\end{eqnarray}

\end{Theorem}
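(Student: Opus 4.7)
The plan is to prove this in two independent pieces. First, I would establish the existence and convergence of the series by expanding $\usc^{+}$ in spherical harmonics and invoking the exact finite expansion of the spherical Hankel functions of the first kind. Second, with the form of the expansion in hand, I would derive the recursion (\ref{AW-Recursive}) by plugging into the Helmholtz equation in spherical coordinates and equating coefficients of like powers of $1/(kr)$.

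For the convergence, I would start from the spherical-harmonic decomposition
\begin{equation*}
\usc^{+}(r,\theta,\phi) = \sum_{n=0}^{\infty}\sum_{m=-n}^{n} b_{nm}\, h_{n}^{(1)}(kr)\, Y_{n}^{m}(\theta,\phi), \qquad r>R,
\end{equation*}
which is forced by separation of variables together with the Sommerfeld radiation condition, since only the first-kind spherical Hankel radial factor is outgoing. The $b_{nm}$ can be read off on any auxiliary sphere $r=R'>R$, and because $\usc^{+}$ is real-analytic on that sphere they decay faster than any polynomial in $n$. I would then substitute the finite expansion
\begin{equation*}
h_{n}^{(1)}(kr) = \frac{e^{ikr}}{kr} \sum_{j=0}^{n} \frac{\alpha_{n,j}}{(kr)^{j}},
\end{equation*}
with explicit coefficients $\alpha_{n,j}$ of order $(n+j)!/(2^{j}j!(n-j)!)$, and interchange the two sums. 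Regrouping by powers of $1/(kr)$ produces a series of the form $(e^{ikr}/kr)\sum_{l}F_{l}(\theta,\phi)/(kr)^{l}$ with
\begin{equation*}
F_{l}(\theta,\phi) = \sum_{n\ge l}\sum_{m=-n}^{n} b_{nm}\,\alpha_{n,l}\,Y_{n}^{m}(\theta,\phi).
\end{equation*}
The rapid decay of $b_{nm}$ dominates the factorial growth of $\alpha_{n,l}$ in $n$ for each fixed $l$, and it also dominates the polynomial-in-$n$ factors that appear after differentiating in $r$, $\theta$, $\phi$, because $\Delta_{\Sph}Y_{n}^{m} = -n(n+1)Y_{n}^{m}$. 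This yields absolute and uniform convergence, together with termwise differentiability, on any closed subset of $\{r>R\}$.

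Once the expansion is in place, the recursion becomes a direct manipulation. Writing $u=(e^{ikr}/kr)V$ with $V=\sum_{l\ge 0}F_{l}(\theta,\phi)/(kr)^{l}$ and using the spherical form
\begin{equation*}
\partial_{r}^{2}u + \frac{2}{r}\partial_{r}u + \frac{1}{r^{2}}\Delta_{\Sph}u + k^{2}u = 0,
\end{equation*}
the $k^{2}u$ contribution cancels with the piece coming from $\partial_{r}^{2}$ hitting $e^{ikr}$ twice. Multiplying the remainder by $kr\,e^{-ikr}$ and collecting powers of $1/(kr)$, I expect the coefficient of $(kr)^{-l}$ to reduce to $2ilF_{l} - l(l-1)F_{l-1} - \Delta_{\Sph}F_{l-1}$. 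Uniform convergence forces this coefficient to vanish for every $l\ge 1$, which is exactly (\ref{AW-Recursive}).

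The main obstacle is the convergence step: obtaining decay of $b_{nm}$ strong enough to tame the factorial growth of $\alpha_{n,l}$ in $n$ and to justify the interchange of the double summation. This decay ultimately comes from the fact that $\usc^{+}$ is real-analytic on spheres $r=R'>R$ (equivalently, its spherical-harmonic coefficients decay exponentially in $n$), which in turn follows from interior elliptic regularity for the Helmholtz operator in the source-free region $r>R$. Once that estimate is in hand, the remaining work is routine bookkeeping, and the recurrence is essentially an exercise in matching series coefficients.
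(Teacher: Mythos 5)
You should first note that the paper itself offers no proof of this theorem: it is quoted from Atkinson--Wilcox \cite{Wil-1956} and explicitly ``stated here for completeness,'' so there is no in-paper argument to measure yours against. On its own merits, your plan is sound and is essentially the classical proof: expand the radiating field in spherical harmonics with outgoing radial factors $h_{n}^{(1)}(kr)$, exploit the fact that each $h_{n}^{(1)}(kr)$ is exactly $e^{ikr}/(kr)$ times a terminating polynomial in $1/(kr)$, rearrange the double sum, and then read off the recursion by matching powers of $1/(kr)$ in the Helmholtz equation --- that last step is precisely the device the paper does use to derive the analogous 2D recursion in Corollary \ref{KarpRecurrence}, and your computed coefficient $2ilF_{l}-l(l-1)F_{l-1}-\Delta_{\Sph}F_{l-1}$ is correct. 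Two points in your convergence discussion deserve tightening. For fixed $l$ the coefficients $\alpha_{n,l}=(n+l)!/\bigl(2^{l}l!(n-l)!\bigr)$ grow only polynomially in $n$ (like $n^{2l}$), not factorially; the genuinely dangerous growth is along the diagonal $l\approx n$, where $\alpha_{n,n}=(2n-1)!!$, and that is what must be controlled when summing over $l$ and $n$ simultaneously. Correspondingly, ``faster than any polynomial'' decay of $b_{nm}$ is not strong enough there; what actually saves the argument is that $b_{nm}=c_{nm}(R')/h_{n}^{(1)}(kR')$ with $|h_{n}^{(1)}(kR')|\sim(2n-1)!!\,(kR')^{-n-1}$ as $n\to\infty$, so the $b_{nm}$ decay super-exponentially, like $(kR')^{n}/(2n-1)!!$, which cancels $\alpha_{n,n}$ and leaves a geometric factor $(R'/r)^{l}$ with $R'\in(R,r)$ arbitrary. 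With that bookkeeping made explicit, your outline is a complete and correct route to the theorem.
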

\noindent Here, $r$, $\theta$, and $\phi$ are spherical coordinates and $\Delta_{\Sph}$ is the Laplace-Beltrami operator in the angular coordinates $\theta$ and $\phi$. See \cite{Bayliss01}.

 {Following an analogous procedure to the one employed in Section \ref{Section.ABC2D} for the 2D case, we use a truncated version of the series (\ref{WilcoxExp}) defined in $\Omega^{+}$ to match the solution in $\Omega^{-}$ through the interface conditions (\ref{BVPInterf4})-(\ref{BVPInterf5}).} This yields an approximation $u \approx \usc^{-}$ that is defined to be the solution of the following BVP in the region $\Omega^{-}$:
\begin{eqnarray}
&& \Delta u + k^2 u = 0, \quad\qquad  \text{in $\Omega^{-}$}, \label{BVP3D1} \\
&& u = - \uinc, \qquad\qquad\,  \text{on $\Gamma$,} \label{BVP3D2} \\
&& u(R,\theta,\phi)=\frac{e^{ikR}}{kR}\sum_{l=0}^{L-1} \frac{F_l(\theta,\phi)}{(kR)^l}\label{BVP3D3} \\
&&\partial_{r} u(R,\theta,\phi) = \frac{e^{i k R}}{k R} \sum_{l=0}^{L-1} \left( ik - \frac{l+1}{R} \right) \frac{F_{l}(\theta,\phi)}{(kR)^l}, \label{BVP3D4}\\
&& 2ilF_l(\theta,\phi) = l(l-1) F_{l-1}(\theta,\phi) + \Delta_{\Sph} F_{l-1}(\theta,\phi), \qquad l \geq 1. \label{BVP3D5}
\end{eqnarray}
The equations (\ref{BVP3D3})-(\ref{BVP3D5}) form the absorbing boundary condition with $L$ terms which we call {\it Wilcox farfield expansion absorbing boundary condition} and denote WFE$_L$. We also denote the BVP (\ref{BVP3D1})-(\ref{BVP3D5}) as WFE$_L$-BVP.
 {Contrary to the 2D case, there is only one family of unknown angular functions $F_l$ in this case. Hence, we only need the interface conditions (\ref{BVPInterf4})-(\ref{BVPInterf5}) plus the recurrence formula (\ref{AW-Recursive}) to define the new farfield expansion ABC at the artificial boundary $S$.}

The WFE-BVP (\ref{BVP3D1})-(\ref{BVP3D5}) can also be posed in weak form which is essential for the finite element methods. First we define the following (affine) spaces to deal with the Dirichlet boundary conditions (\ref{BVP3D2}),
\begin{eqnarray*}
&& H^{1}_{\rm \Gamma , Dir}(\Omega^{-}) = \left\{ \text{$u \in H^{1}(\Omega^{-})$ : $u = - \uinc$ on $\Gamma$} \right\}, \\
&& H^{1}_{\rm \Gamma,0}(\Omega^{-}) = \left\{ \text{$u \in H^{1}(\Omega^{-})$ : $u = 0$ on $\Gamma$} \right\}.
\end{eqnarray*}
We require the solution $(u,F_{0},F_{1},...,F_{L-1})$ to satisfy $u \in H^{1}_{\rm \Gamma , Dir}(\Omega^{-})$, $F_{l} \in H^{1}(S)$ for $l=0,...,L-2$, $F_{L-1} \in H^{0}(S)$, and
\begin{eqnarray}
&& - \la \nabla u , \nabla v \ra_{\Omega} + k^2 \la u , v \ra_{\Omega} + \frac{e^{ikR}}{kR}\sum_{l=0}^{L-1} \frac{ik - (l+1)/R}{(kR)^l} \la F_{l}, v \ra_{S} = 0, \quad \text{for all $v \in H^{1}_{\rm \Gamma,0}(\Omega^{-})$},  \label{BVP3Dvar1} \\
&& \la u , v_0 \ra_{S} = \frac{e^{ikR}}{kR} \sum_{l=0}^{L-1} \frac{1}{(kR)^l} \la F_{l} , v_0 \ra_{S},  \qquad \text{for all $v_0 \in H^{0}(S)$}, \label{BVP3Dvar2} \\
&& 2 i l \la F_{l} , v_l \ra_{S} = l(l-1) \la F_{l-1} , v_l \ra_{S} - \la \nabla_{S} F_{l-1}, \nabla_{S} v_l\ra_{S},  \qquad \text{for all $v_l \in H^{1}(S)$, $l \geq 1$}. \label{BVP3Dvar3}
\end{eqnarray}
where the symbol $\nabla_{S}$ represent the gradient in the geometry of the sphere $S$, and the functions $F_{l}$, originally defined on the unit-sphere, can be seen as defined on the sphere $S$ of radius $R$ by writing the argument as $\hat{x} = x/R$ where $x \in S$ and $R$ is fixed.

\section{Numerical method} \label{Section:NumMethd}

We start this section describing how to obtain a numerical approximation of the solution for the acoustic scattering of a plane wave from a circular shaped obstacle of radius $r=r_0$ using Karp farfield expansions as ABC. As discussed in previous sections, our approach consists of numerically solving the KDFE$_L$-BVP defined by  (\ref{BVPBd1})-(\ref{BVPBd2}) with the ABC given by (\ref{BVPBd3})-(\ref{BVPBd5}) supplemented by the recurrence formulas (\ref{Recurrence1})-(\ref{Recurrence2}).
For this particular circular scatterer, polar coordinates $(r,\theta)$ is the natural choice of coordinate system.
However, we will extend the discussion to more general obstacle shapes in generalized curvilinear coordinates in the next section. The numerical method chosen is based on a centered second  order finite difference. The number of grid points in the radial direction is $N$ and in the angular direction is $m+1.$
 Therefore, the step sizes in the radial and angular directions are $\Delta r =(R-r_0)/(N-1)$ and $\Delta\theta = 2\pi/m,$ respectively. Also, $r_i = (i-1)\Delta r,$ $\theta_j=(j-1)\Delta \theta$ and $u_{i,j}=u(r_i,\theta_j),$ where $i=1,\dots N$ and $j=1,\dots, m+1.$  Since the pairs $(r_i,\theta_1)$ and $(r_i,\theta_{m+1})$ represent the same physical point, $u_{i,1}=u_{i,m+1},$ for $i=1,\dots N.$
The discretization of the governing equations varies according to the location of the grid points. The areas of interest within the numerical domain and its boundaries are:  the obstacle boundary $\Gamma$, the interior of the domain $\Omega^{-},$ and the artificial boundary $S$.

At the obstacle boundary $u=-\uinc$ holds. Then, we start constructing the corresponding linear system $A{\bf U}={\bf b}$ by simply including the negative value of $\uinc$ at each boundary grid point in the forcing vector ${\bf b}$, and let the corresponding entry in the coefficient matrix $A$ equal unity. This results in an identity matrix of size $m\times m$ in the upper left-hand corner of the matrix $A$.

At the interior points $(r_i,\theta_j)$ ($i=2,\dots N-2$, $j=1,\dots m$) in $\Omega^{-},$ we discretize Helmholtz equation to obtain
\begin{eqnarray}
&&\alpha_i^{+}\uiMj + \alpha_i^{-}\uimj + \alpha_i\uij + \beta_i\uijM + \beta_i\uijm =0,\label{intequ}\\
&&\alpha_i^{+} = \tfrac{1}{\Delta r^2} + \tfrac{1}{(2\Delta r) r_i},\qquad
\alpha_i^{-} = \tfrac{1}{\Delta r^2} - \tfrac{1}{(2\Delta r) r_i},\nonumber\\
&&\alpha_i = k^2 - \tfrac{2}{\Delta r^2} - \tfrac{2}{\Delta \theta^2 r_i^2},\qquad
\beta_i = \tfrac{1}{\Delta \theta^2 r_i^2}.\nonumber
\end{eqnarray}
This discrete equation renders  $(N-3) m$ new rows to the sparse matrix  $A$ with a total of $5(N-3)m$ non-zero entries.

At the interior points $(r_{N-1},\theta_j)$ with $j=1,\dots m,$ we replace the 
$u_{N,j}$ term by $H_0(kR) \sum_{l=0}^{L-1} \frac{F_{l,j}}{(kR)^l} + H_1(kR)\sum_{l=0}^{L-1} \frac{G_{l,j}}{(kR)^l}$ from the farfield absorbing condition.  This leads to the following discrete equation:
\begin{eqnarray}
&&\alpha_{N-1}^{+}\sum_{l=0}^{L-1}\frac{H_0(kR)}{(kR)^l}  F_{l,j} + \alpha_{N-1}^{+}\sum_{l=0}^{L-1}\frac{H_1(kR)}{(kR)^l}  G_{l,j}\nonumber\\
 &&\quad+ \alpha_{N-1}^{-}u_{N-2,j} + \alpha_{N-1}u_{N-1,j}+ \beta_{N-1}u_{N-1,j+1} + \beta_{N-1}u_{N-1,j-1}=0.\label{interioreq}
\end{eqnarray}
This equation adds $ m$ new rows to the matrix $A$ with a total of $2(L+2)m$ nonzero entries.

Also at the artificial boundary points $(r_{N},\theta_j)$ with $j=1,\dots m,$ the discrete equation  (\ref{intequ}) is written as
 \begin{eqnarray}
&&\alpha_N^{+} u_{N+1,j}+ \alpha_N^{-} u_{N-1,j} + \alpha_N u_{N,j}+ \beta_{N} u_{N,j+1}+ \beta_N u_{N,j-1} =0.
\label{AB}
\end{eqnarray}

Now, consider the discretization of equation (\ref{BVPBd4}) using centered finite difference, {\it i.e.}
\begin{equation}
u_{N+1,j} = u_{N-1,j}  
-2\Delta r \sum_{l=0}^{L-1}A_l(kR)F_{l,j} - 
2\Delta r \sum_{l=0}^{L-1}B_l(kR)G_{l,j}, \label{1deriv}
\end{equation}
where 
\begin{equation}
A_l(kR)=\frac{kH_1(kR)}{(kR)^l} + \frac{klH_0(kR)}{(kR)^{l+1}}\quad\mbox{and}\quad 
B_l(kR)=\frac{k(l+1)H_1(kR)}{(kR)^{l+1}} - \frac{kH_0(kR)}{(kR)^{l}}. \nonumber
\end{equation}

Substitution of $u_{N+1,j}$, $u_{N,j+1}$, and $u_{N,j-1}$ into (\ref{AB}) using the previous expression and Karp's expansion, respectively, leads to the following set of $m$ equations for $j=1,\dots m$,
\begin{eqnarray}
&&(\alpha_N^{+} +\alpha_N^{-} )u_{N-1,j}  + \sum_{l=0}^{L-1}C_l(kR) F_{l,j} +\sum_{l=0}^{L-1}D_l(kR) G_{l,j}+ \label{uneq}\\
&&\beta_{N}\sum_{l=0}^{L-1}\frac{H_0(kR)}{(kR)^l}  F_{l,j+1} + \beta_{N}\sum_{l=0}^{L-1}\frac{H_1(kR)}{(kR)^l}  G_{l,j+1}
+ \beta_{N}\sum_{l=0}^{L-1}\frac{H_0(kR)}{(kR)^l}  F_{l,j-1} + \beta_{N}\sum_{l=0}^{L-1}\frac{H_1(kR)}{(kR)^l}  G_{l,j-1}=0,
\nonumber
\end{eqnarray}
where the coefficients $C_l$ and $D_l$ are given by
\begin{eqnarray}
C_l(kR) =  -\alpha_N^{+}2\Delta r k \,A_l(kR)
 + \alpha_N \frac{H_0(kR)}{(kR)^l} \quad \mbox{and}\quad
D_l(kR) =  -\alpha_N^{+} 2\Delta r k \,B_l(kR) + \alpha_N \frac{H_1(kR)}{(kR)^l}.\nonumber
\end{eqnarray}
The number of non-zero entries for these set of equation is $nz = (6L+1)m$.

Another set of $m$ equations is obtained from the discretization of the continuity condition on the second derivative combined with (\ref{1deriv}), and Karp farfield expansion,
\begin{eqnarray}
\frac{2}{\Delta r^2}u_{N-1,j} + \sum_{l=0}^{L-1}M_l(kR)F_{l,j} + \sum_{l=0}^{L-1}N_l(kR)G_{l,j}=0,
\end{eqnarray}
where 
\begin{eqnarray}
&&M_l(kR) =- \frac{2}{\Delta r} k\, A_l(kR) +
k^2\left(\frac{H_0(kR)}{(kR)^l} - (2l+1) \frac{H_1(kR)}{(kR)^{l+1}}- \frac{l(l+1)H_0(kR)}
{(kR)^{l+2}}\right) - \frac{2H_0(kR)}{\Delta r^2(kR)^l} \nonumber\\
&& N_l(kR) = - \frac{2}{\Delta r} k \,B_l(kR)+
k^2\left(-(2l+1)\frac{H_0(kR)}{(kR)^{l+1}} +\frac{H_1(kR)}{(kR)^{l}}- \frac{(l+1)(l+2)H_1(kR)}
{(kR)^{l+2}}\right) - \frac{2H_1(kR)}{\Delta r^2(kR)^l} \nonumber
\end{eqnarray}
The total number of nonzero entries for these equations is $(2L+1)m$.

Finally, each one of the recurrence formulas (\ref{Recurrence1})-(\ref{Recurrence2}) contribute with $(L-1)m$  new equations. They are given by
\begin{eqnarray}
 &&2lG_{l,j} = \left((l-1)^2 -\frac{2}{\Delta \theta^2}\right)F_{l-1,j} + \frac{1}{\Delta \theta^2}F_{l-1,j+1} + \frac{1}{\Delta \theta^2}F_{l-1,j-1}, \label{RecurrGF1}  \\
 &&2lF_{l,j} = \left(-l^2 +\frac{2}{\Delta \theta^2}\right)G_{l-1,j} - \frac{1}{\Delta \theta^2}G_{l-1,j+1} - \frac{1}{\Delta \theta^2}G_{l-1,j-1}  \label{RecurrGF2}
\end{eqnarray}
for $l=1,\dots L-1$ and $j=1,\dots m$. The number of nonzero entries is four for each $j$ and for each recurrence formula.

The above discrete equations are written as a linear system of equations $A{\bf U} = {\bf b}$. The matrix $A$ structure depends on how the unknown vector ${\bf U}$ is ordered. We chose ${\bf U}$ as follows:
\begin{eqnarray}
&&{\bf U}=
\Big[ \overbrace{u_{1,1} ...u_{1,m}}^{\text{at boundary }} ~ ~
\overbrace{u_{2,1} ...u_{2,m}...
u_{N-1,1}...u_{N-1,m}}^{\text{at interior grid points}} ~ ~\nonumber\\
&&\qquad\qquad\qquad\overbrace{F_{0,1}...F_{0,m}\, G_{0,1}...G_{0,m}...\,F_{L-1,1}...F_{L-1,m}\,G_{L-1,1}...G_{L-1,m}}^{\text{at artificial boundary}}
 \Big]^{T} \label{VectorU}
\end{eqnarray}
From the previous discrete equations (\ref{intequ})-(\ref{VectorU}), (\ref{uneq})-(\ref{RecurrGF2}), it can be seen that the matrix $A$ has dimension $(N-1+2L)m\times(N-1+2L)m$. Furthermore, adding the $m$ non-zero entries corresponding to the upper left-hand corner subdiagonal matrix of $A$ to the non-zero entries of the discrete equations  (\ref{intequ})-(\ref{VectorU}) and (\ref{uneq})-(\ref{RecurrGF2}), it can be shown that the non-zero entries of $A$ are $nz=(5N-16)m+18Lm$. 

A completely analogous work can be performed for the discretization of KSFE-BVPs.  However, the BVP defined by (\ref{BVPBd1})-(\ref{BVPBd2}) with the KSFE$_L$ condition (\ref{BVP2D3_Asymp})-(\ref{BVP2D5_Asymp}) has only  one family of unknown farfield angular coefficients $f_l(\theta)$ ($l=0,\dots L-1$). As a consequence, the matrix $A$ corresponding to its discrete equations has dimension $(N-1+L)m\times(N-1+L)m$. Moreover, it can be shown that its number of non-zero entries is $nz=(5N-13)m+8Lm$. For purpose of comparison, we also consider the discretization of the Dirichlet-to-Neumann  boundary value problem (DtN-BVP) derived by Keller and Givoli \cite{Keller01}.  For this BVP the matrix $A$, obtained by employing a second order centered finite difference method, has dimension $Nm\times Nm$ and its non-zero entries are $nz=(5N-8)m+m^2$.
A relevant feature of the matrices $A$ for the KDFE-BVP and KSFE-BVP is that they do not have full blocks as found in the case of DtN-BVP. In fact, the number of non-zero entries for the DtN-BVP matrix is $O(m^2)$ against $O(Lm)$ for the KDFE-BVP and KSFE-BVP matrices, respectively. Now, the number L of terms in the farfield expansion is always much smaller than m (nodes in the angular direction). As a consequence, the non-zeros of the matrices associated to the KDFE and KSFE boundary value problems are considerable less than those of the matrix corresponding to the DtN-BVP for the same problem.
This is a key property for the computational efficiency of the numerical technique proposed in this work. Furthermore, this is why higher order local ABCs are preferred over global exact ABCs such as DtN.

\section{Applications of farfield ABCs}\label{Problems}

\subsection{Scattering from a circular obstacle}\label{ScattCircle}
To illustrate the computational advantage of the exact farfield expansions ABCs over the DtN-ABC, we consider the acoustic scattering of a plane wave propagating along the positive $x$-axis from a circular obstacle of radius $r_0=1$. We place the artificial boundary at $R=2$ and select a frequency $k=2\pi$  for the incident wave.  Then, we apply the centered finite difference scheme described in Section \ref{Section:NumMethd} for the KDFE-BVP. For purpose of comparison, we also apply it with its respective modifications to KSFE-BVP and DtN-BVP.  The points per wavelength in each case is $PPW=20$. The number of terms employed for KDFE$_L$ is $L=3$ and for the KSFE$_L$ is $L=8$. These choices of $L$ made possible that the three numerical solutions approximate the exact solution at the artificial boundary with about the same relative error of $3.8 \times 10^{-3}$ in the L$^2$-norm. In Fig.~\ref{SpyGraph}, the structure of their respective matrices are depicted. Although the matrix corresponding to the DtN-ABC has the smallest dimension, it has more than one and a half times as many non-zero entries as the farfield expansions ABCs. As the number of point per wavelength increases, this difference is even bigger since the number of points for the DtN-ABC is $O(m^2)$ while for the farfield ABCs is only $O(Lm)$. 

\begin{figure}[h!]
\begin{center}
\begin{subfigure}{0.32\textwidth}
\includegraphics[width=\textwidth]{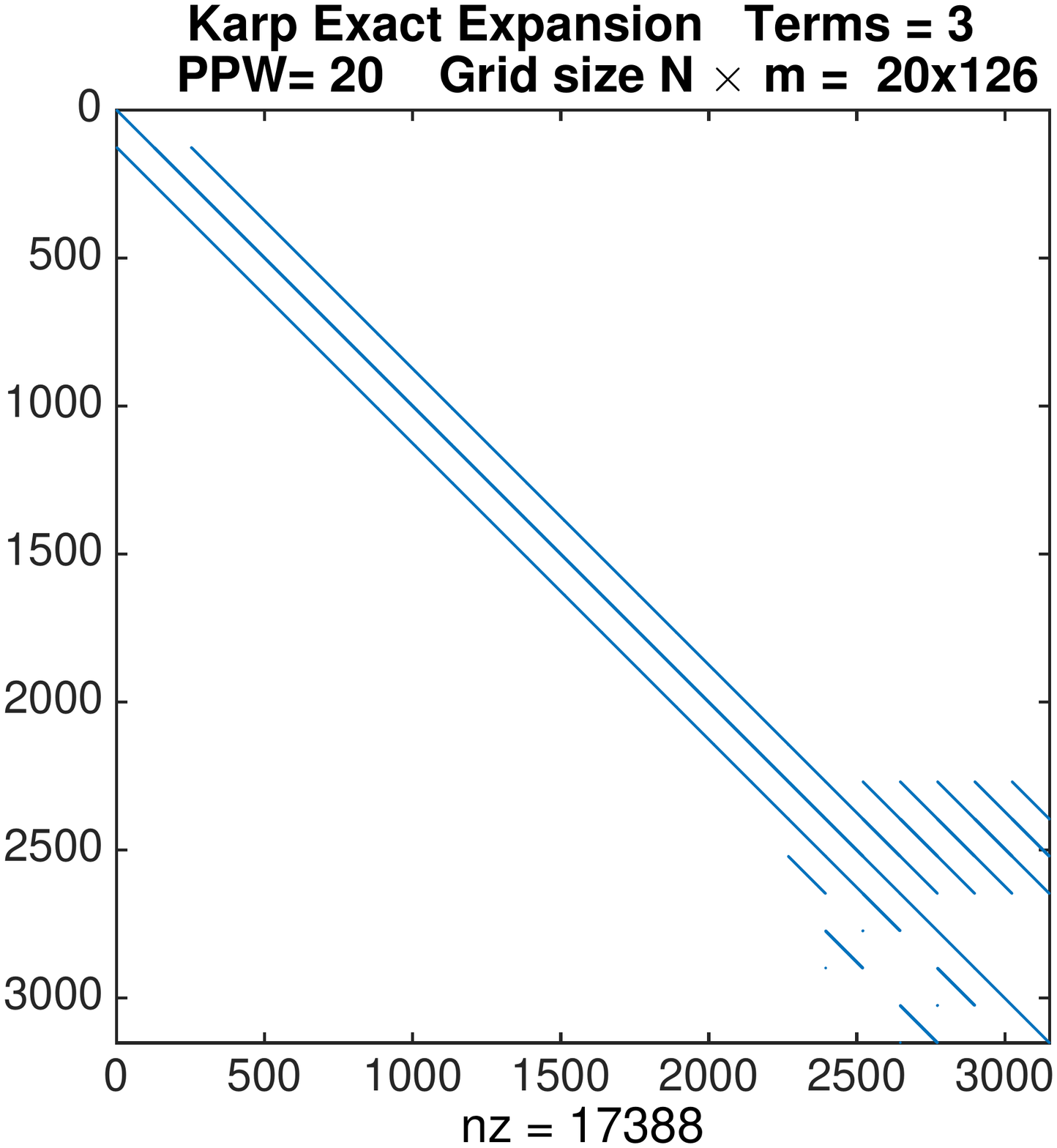}
\caption{}
\label{SpyKDFE20}
\end{subfigure}
\begin{subfigure}{0.32\textwidth}
\includegraphics[width=\textwidth]{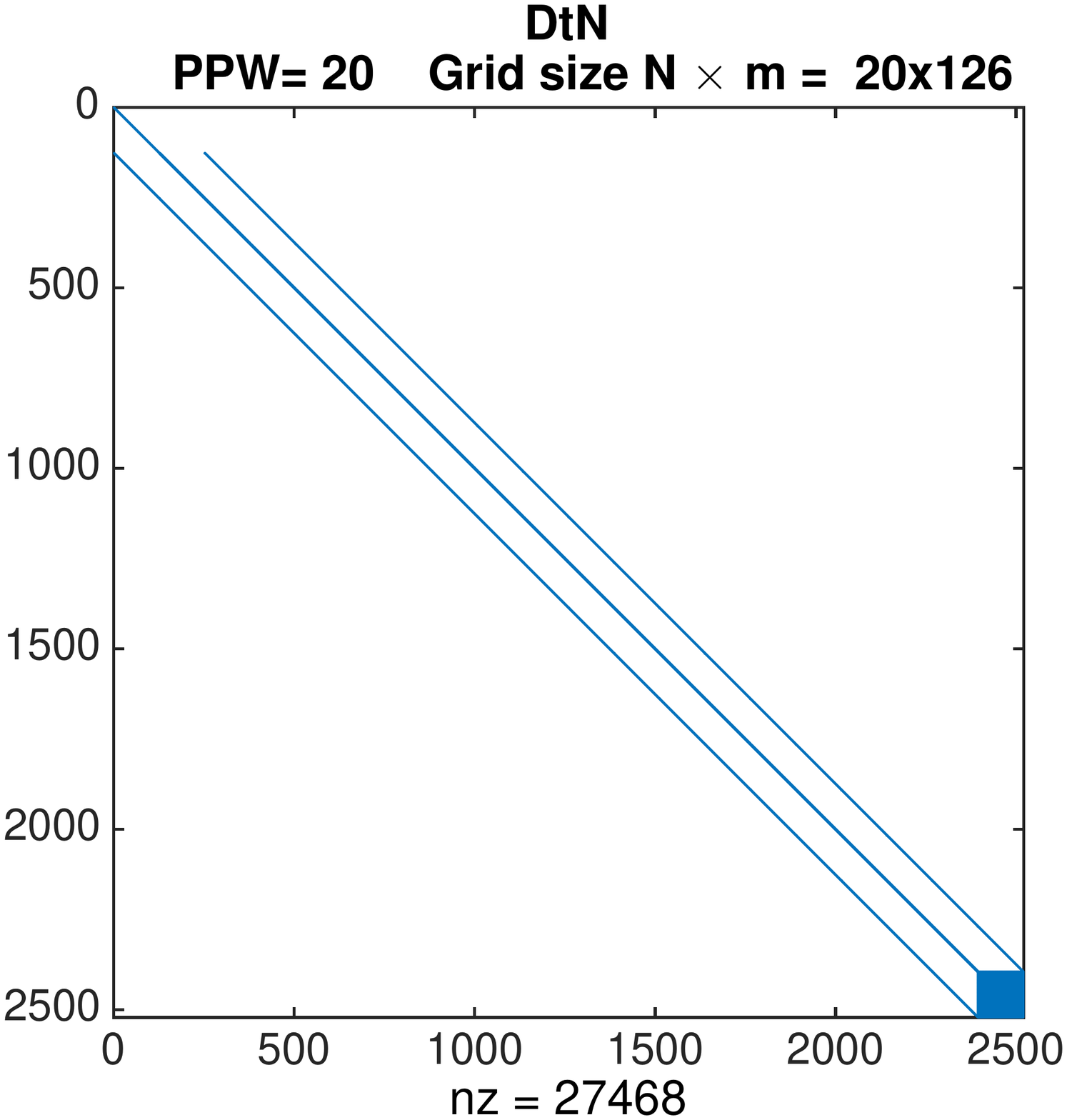} 
\caption{}
\label{SpyDtN20}
\end{subfigure}
\begin{subfigure}{0.32\textwidth}
\includegraphics[width=\textwidth]{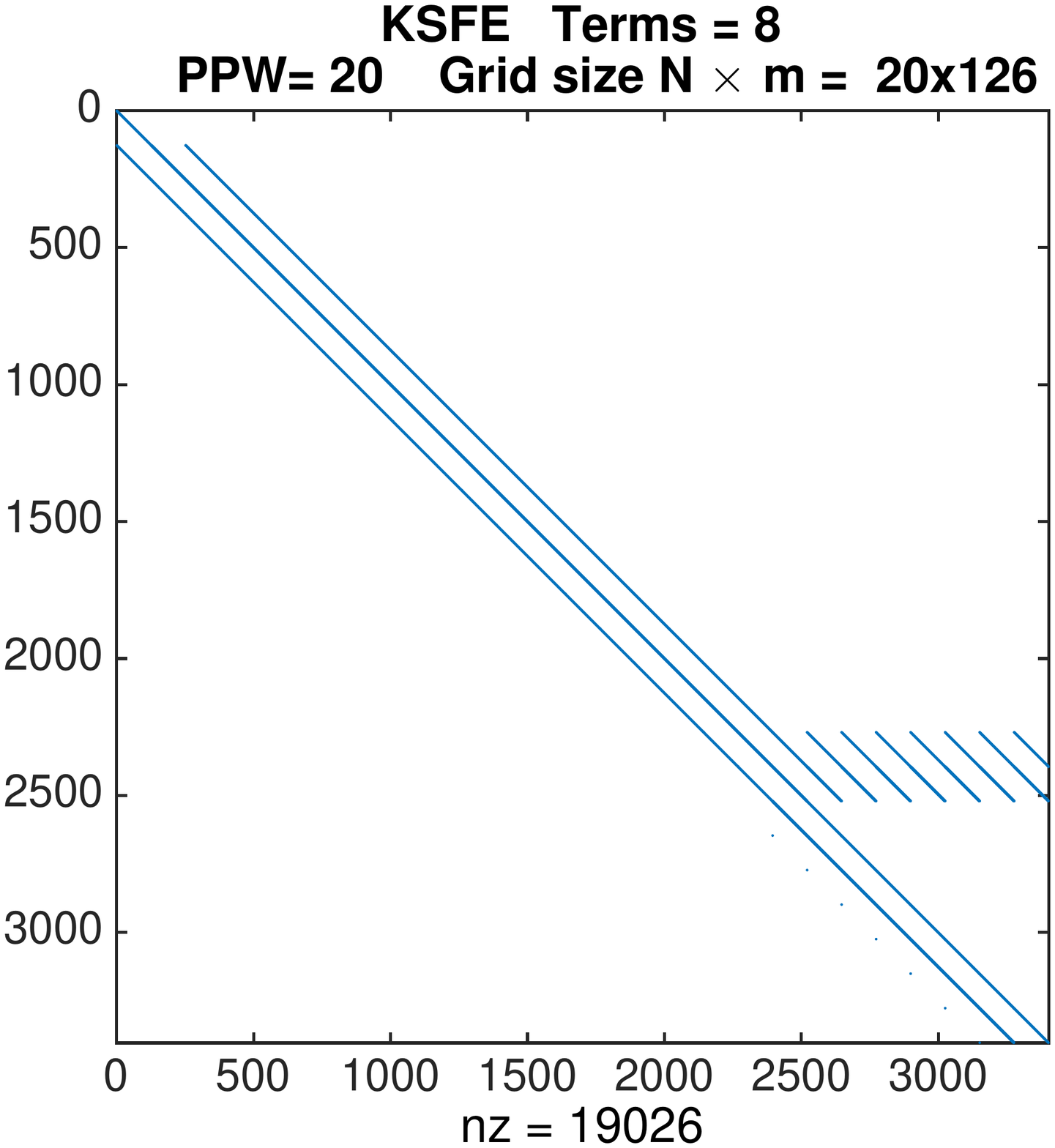}
\caption{}
\label{SpyKSFE20}
\end{subfigure}
\caption{Comparison of the matrix structure for: a) KDFE$_3$, b) DtN, and c) KSFE$_8$ with $R=2$ and $PPW =20$.}
\label{SpyGraph}
\end{center}
\end{figure}

It is timely to comment on the numerical difficulties that can be faced when solving three-dimensional problems modeled by the farfield ABCs. Using our finite difference technique will lead to sparse but very large matrices at the discrete level. Therefore, a direct solver may not be a feasible choice as the mesh is refined. Iterative methods become an imperative choice. Among such methods are Krylov subspace iterative methods, multigrid and domain decomposition methods.
However, their applications to the resulting sparse matrices experience difficulties because
these matrices are known to be non-Hermitian and poorly conditioned. Efforts have been made to develop good preconditioners and parallelizable methods tailored to these wave scattering problems modeled by the Helmholtz equation \cite{Kechroud-Soulaimani2004, Erlanga2008}.  We intend to explore some of these new techniques along with the application of the farfield absorbing boundary conditions to complex 3D problems in future work. 

\subsection{Scattering from a spherical obstacle. Axisymmetric case}
\label{ScattSphere}

In this section, we formulate the BVP corresponding to scattering from a spherical obstacle of an incident plane wave $\uinc = e^{ikz}$ propagating along the positive z-axis. The mathematical model including our novel Wilcox farfield ABC (WFE-ABC) consists of the BVP (\ref{BVP3D1})-(\ref{BVP3D5}) in spherical coordinates $(r,\theta,\phi)$.
This problem is axisymmetric about the $z-$axis. Therefore, the governing Helmholtz equation for the approximation $u$ of the scattered field $\usc$ is independent of the angle $\phi$. As a consequence, it reduces to
\begin{equation}
\frac{\partial ^2u}{\partial r^2} + \frac{2}{r}\frac{\partial u}{\partial r} + 
\frac{1}{r^2\sin \theta}\frac{\partial}{\partial\theta}\left( \sin\theta\frac{\partial u}{\partial \theta} \right) + k^2 u =0, \quad \mbox{in}\,\, \Omega^{-}. \label{Helmholtzsphsymm}
\end{equation}
Obviously, this equation is singular at the poles when $\theta = 0,\pi .$ However, there is not such singularity at these angular values for Helmholtz equation in cartesian coordinates. The singularity arises by the introduction of spherical coordinates. It can be shown \cite{SadikuBook} that equation (\ref{Helmholtzsphsymm}) reduces to $\frac{\partial u}{\partial \theta}(r,\theta)=0,$ when $\theta=0,\pi$. The angular coefficients $F_l$ of the Wilcox farfield expansion are also independent of $\phi$. As in the two-dimensional case, we employ
 a second order centered finite difference scheme as our numerical method to obtain the approximate solution to this scattering problem. 
 Due to the analogy between the KSFE and the WFE absorbing boundary conditions for this axisymmetric case, the discretization of the equations and the structure of the matrix obtained after applying a centered finite difference approximation to the equations defining this BVP
 are similar to those of KSFE-BVP.
 In Section \ref{orderconverg3D}, numerical results for this problem are presented.

\subsection{ Radiation and scattering from complexly shaped obstacles in two-dimensions}\label{ScattComplex}
Since most real applications deal with obstacle of arbitrary shape, in this section, we consider scattering problems for arbitrary shaped scatterers using the farfield absorbing boundary conditions. In order to do this,  we introduce
generalized curvilinear coordinates such that the physical
scatterer boundaries correspond to coordinate lines. These type of
coordinates, called boundary conforming coordinates
\cite{Steinberg}, are generated by invertible transformations $
T:\thinspace \mathcal{D^{\prime}}\rightarrow \mathcal{D}$, from a
rectangular computational domain ${\cal D^{\prime}}$ with
coordinates $(\xi,\eta)$ to the physical domain ${\cal D}$ with
coordinates $(x,y)=(x(\xi,\eta),y(\xi,\eta))$. A common practice in elliptic grid generation is
to implicitly define the transformation $T$ as the numerical
solution to a Dirichlet boundary value problem governed by a system of quasi-linear elliptic equations
for the physical coordinates $x$ and
$y.$ Following this approach, the authors Acosta and Villamizar \cite{JCP2010} introduced the {\it elliptic-polar grids}
as the solution to the following quasi-linear  elliptic system of equations:
\begin{eqnarray}
&& \alpha x_{\xi \xi }-2\beta x_{\xi \eta}+\gamma x_{\eta \eta } +
\frac{1}{2} \alpha_{\xi}x_{\xi} + \frac{1}{2}
\gamma_{\eta}x_{\eta} = 0, \label{Elliptic3} \\
&& \alpha y_{\xi \xi }-2\beta y_{\xi \eta}+\gamma y_{\eta \eta } +
\frac{1}{2} \alpha_{\xi}y_{\xi} + \frac{1}{2} \gamma_{\eta}y_{\eta} = 0.
\label{Elliptic4}
\end{eqnarray}
The symbols $\alpha$, $\beta$, and $\gamma$,
represent the scale metric factors of the
coordinates transformation $T$, respectively. These are defined as
$$\alpha =x_{\eta }^{2}+y_{\eta }^{2}, \qquad
\beta =x_{\xi }x_{\eta }+y_{\xi }y_{\eta }, \qquad \gamma=x_{\xi
}^{2}+y_{\xi }^{2}.
$$
In this work, we adopt the elliptic-polar coordinates in the presence of complexly shaped obstacles. Before we attempt a numerical solution to our BVP with the farfield expansions ABCs in 
these coordinates, we express the governing equations in terms of them. For instance, the two-dimensional 
Helmholtz equation transforms into
\begin{eqnarray} \frac{1}{J^2} \Bigg[\alpha u_{\xi \xi
}-2\beta u_{\xi \eta}+\gamma u_{\eta \eta } +
\frac{1}{2}
\Big(\alpha_{\xi}\,u_{\xi}+\gamma_{\eta}\,u_{\eta}\Big)\Bigg]
+ k^2 u = 0, \label{HelmElliptic}
\end{eqnarray}
where the symbol $J$ corresponds to the jacobian of the transformation $T$. Once the farfield expansions ABC equations are also expressed in terms of elliptic-polar coordinates, we transform all of these continuous equations into discrete ones using centered second order finite difference schemes. This process is described in detail in \cite{JCP2010}. Then, the corresponding linear system is derived in much the same way as we did above for polar coordinates.  Numerical results for several complexly shaped obstacles are discussed in Section \ref{Section.Numerics2D}.

\section{Farfield Pattern definition and its accurate numerical computation}
\label{NumericalFFP}
In scattering problems, an important property  to be determined is the scattered field far from the obstacles. The geometry and physical properties of the scatterers are closely related to it.
In Section 4.2.1 of \cite{MartinBook}, Martin defines the farfield pattern (FFP) as the angular function present in the dominant term of the asymptotic expansions for  the scattered wave when $r\rightarrow\infty$. For instance in 2D, the farfield pattern is the coefficient $f_0(\theta)$ of KSFE,
\begin{equation}
u(r,\theta)=
\frac{e^{ikr}}{(kr)^{1/2}}f_0(\theta)+O\left(1/(kr)^{3/2}\right). \label{KSFE}
\end{equation}
Following Bruno and Hyde \cite{McKaySIAM}, we now describe how the FFP can be efficiently 
calculated from the  approximation of the scattered wave at 
the artificial boundary.

If $r>R$, where $R$ is the
radius of the artificial circular boundary enclosing the
obstacle, then, the scattered wave can be represented as the following complex Fourier series,
\begin{equation}
u(r,\theta)= \sum_{q=-\infty}^{\infty} c_q(r) e^{i
q\theta}=\sum_{q=-\infty}^{\infty} b_q H_q^{(1)}(kr)e^{i q\theta}, \qquad \text{where} \quad b_q=\dfrac{c_q(r)}{H_q^{(1)}(kr)}.
\label{FourierSeries}
\end{equation}
Using the asymptotic
expansion of the Hankel function $H_q^{(1)}(kr)$ when
$r\rightarrow\infty$, equation
(\ref{FourierSeries}) transforms into
\begin{equation}
u(r,\theta) = 
\frac{e^{ikr}}{(kr)^{1/2}}\left(\sqrt{\frac{2}{\pi
}}e^{-i\pi/4}\sum_{q=-\infty}^{\infty} b_q (-i)^q e^{i
q\theta}\right) + \mathcal{O}\left(1/{(kr)^{3/2}}\right). \label{Asymp}
\end{equation}
  By comparing (\ref{Asymp}) with (\ref{KSFE}) the
following expression for $f_0(\theta)$ is derived
\begin{equation}
f_0(\theta)=\sqrt{\frac{2}{\pi}}e^{-i\pi/4}\sum_{q=-\infty}^{\infty} b_q (-i)^q e^{i q\theta}.
\end{equation}
Thus, the FFP can be determined once the coefficients $b_q$
have been calculated. But as pointed out above, the coefficients 
$b_q$ can be determined from the coefficients $c_q(r)$ for $r$ fixed. 
Likewise, approximated
values of $c_q(R)$ can be obtained from the scattered field approximation at the 
artificial boundary
$r=R$, {\it i.e.,} $u_{N,j}$ for $j=1,\dots m$. 
 In fact as stated by Kress \cite{Kress},
 approximations ${\hat
c}_q$ to the coefficients $c_q(R)$, at the fictitious infinite
boundary can be obtained by considering the discrete Fourier
transform vector ${\hat c}_q$ ($q=-m/2 ,\dots m/2 -1$) of the vector $u_{N,j}$, 
 interpolating the points
$\left(\theta_j,u_{N,j}\right)$ for $j=1,\dots
m\, (m\,\, \mbox{even})$. 
 More precisely,
\begin{equation}
{\hat c}_q=\frac{1}{m}\sum_{j=1}^{m-1} u_{N,j}
e^{-i q\theta_j},\qquad\mbox{ for}\quad q=-m/2,\dots m/2-1.
\label{FCoeffs}
\end{equation}
These finite series can be directly evaluated,  or a FFT algorithm
can be used to compute them. The importance of the above
derivation is that  a semi-analytical formula
\begin{equation}
f_0(\theta)=\sqrt{\frac{2}{\pi
}}e^{-i\pi/4}\sum_{q=-m/2}^{m/2-1} {\hat b}_q (-i)^q e^{i
q\theta}, \label{AnalyticalSCS}
\end{equation}
approximating the FFP for arbitrary shaped obstacles, can be
obtained from the numerical approximation of the scattered far
field, where ${\hat b}_q={\hat c}_q/H^{(1)}_q(kR)$. This
formula is extremely accurate as shown in \cite{McKaySIAM}. The
error in the approximation of $f_0(\theta)$ using
(\ref{AnalyticalSCS}) depends almost entirely upon the error made
in the approximation of the coefficients ${\hat b}_q$.


\section{Numerical Results} \label{Section.Numerics2D}
In this Section,   {we present numerical evidences of the advantages of using the exact farfield expansions  ABCs, when dealing with acoustic scattering and radiating problems, compared with other commonly used ABCs.}
First, we numerically solve bounded problems with farfield expansions ABCs as defined in Sections \ref{Section.ABC2D}-\ref{Section.ABC3D}. Then, we show that these numerical solutions indeed converge to the exact solutions of the original unbounded BVPs. As described in Section \ref{Section:NumMethd}, the numerical method employed consists of familiar second order centered finite difference discretizations for Helmholtz equation in polar, spherical, and generalized curvilinear coordinates. This numerical method is completed with the discrete equations of the farfield expansions ABCs on the artificial boundary $S$.  
Our numerical results contain two sources of error. The first one is the error introduced by the finite difference scheme employed to discretize the Helmholtz equation in the computational domain $\Omega^{-}$. This error can be diminished by refining the finite difference mesh as we increase the number of points per wavelength. The second source of error is due to the truncation of the farfield expansion series. This error can be diminished by increasing the number $L$ of terms in the absorbing conditions KSFE$_L$, KDFE$_L$, or WFE$_L$. For example, if a finite difference scheme for a two-dimensional problem in polar coordinates leads to a second order convergence, then the order of the total error introduced by combining the finite difference scheme with the proposed absorbing boundary conditions is given by
\begin{align}
\text{error} = O \left( h^2 \right) + O\left( (kR)^{-L} \right), \label{TotalError}
\end{align}
where $h = r_{0} \Delta \theta = \Delta r$ is the mesh refinement parameter and $L$ is the number of terms in the farfield expansions absorbing boundary conditions. Then for the total error to exhibit second order convergence with mesh refinement, it is necessary to choose $L = O \left( \log( 1/h) \right)$. Therefore, there is no need of large increments of  $L$ to improve the order of convergence. In practice, we can expect that a moderately large (but fixed) choice of $L$ will be sufficient to reveal the order of convergence of the finite difference method for a reasonable range of mesh refinements. We construct our numerical experiments, reported later in this section, according to this fact.

First, we present numerical results for the scattering of a plane wave $\uinc= e^{ikx}$ from a circular obstacle in 2D. As a reference point, we also display the results from the use of the DtN nonreflecting condition \cite{Keller01,Givoli05,Grote-Keller01,GivoliReview}. Since this latter condition is considered \textit{exact}, it serves as a reference point to gauge the error introduced by the finite difference scheme alone. For all ABCs, we use a second order centered finite difference scheme in the interior of the domain $\Omega^{-}.$  In our experiments for the circular scatterer, we are able to obtain second order convergence of the numerical solution to the exact solution. In fact, we found that the numerical solution obtained from KDFE$_L$ and KSFE$_{L}$, for appropriate number of terms $L$, are comparable to the approximation obtained from the DtN absorbing boundary condition. However, the advantage over the DtN-ABC formulation is that the farfield expansions ABCs are local while the former is not. 

Secondly, we numerically solve the scattering from complexly shaped scatterers, using the exact farfield expansions ABCs. As a result, the farfield patterns (FFP) for several obstacles of arbitrary shape are obtained. Then we present our results, with the new ABCs, for an exterior radiating problem obtained from two sources conveniently located inside a domain bounded by complexly shaped curves.   {For these specials radiating problems, it is possible to obtain analytical solutions.} Then, by comparing the numerical approximations and the exact solutions, we determine the order of convergence for several non-separable geometries, as we do in the circular case.

Finally, results for a spherical scatterer are presented.  The numerical method is analogous to the one employed in the two-dimensional case for the KDFE$_L$-BVP: a centered finite difference of second order in  $\Omega^{-}$ and WFE-ABC on the artificial boundary. Again, a second order convergence is reached by using few terms in the WFE farfield expansions. 

\subsection{Scattering from a circular obstacle. Comparison against exact solution and order of convergence}
\label{orderconverg2D}
First, we point out that approximated solutions of the scattering problem obtained for the BVP corresponding to KSFE$_1$ are identical to the numerical solutions obtained for the BVP corresponding to BGT$_1$. This is a numerical evidence of the equivalence between these two problems, as proved in Theorem \ref{Equiv1}. 

Another important result is the convergence of the numerical solutions of KDFE$_L$ and KSFE$_L$ boundary value problems to the exact solution as the number  $L$ of terms  in the farfield expansion is increased for a sufficiently small $h$. In particular, this is shown in Fig. {\ref{fig:ScattFields}}. However, the KSFE$_L$ numerical solutions only converge asymptotically as $kR \to \infty$ when $L$ is fixed. Indeed when $kR$ is fixed, {\it e.g.} $kR=\pi/2$, and $L$ grows then the numerical solution unavoidably diverges as explained at the end of subsection \ref{Section.ABC2DAsymp}. This fact is also discussed in more detail in the Conclusions  Section \ref{Section.Conclusions}. Actually, Fig. {\ref{fig:ScattFields}} (left panel) shows the appearance of unphysical oscillations in the farfield pattern for KSFE$_{9}$ which become larger as $L$ increases.

\begin{figure}[h!]
\begin{subfigure}{0.5\textwidth}
\includegraphics[width=\textwidth]{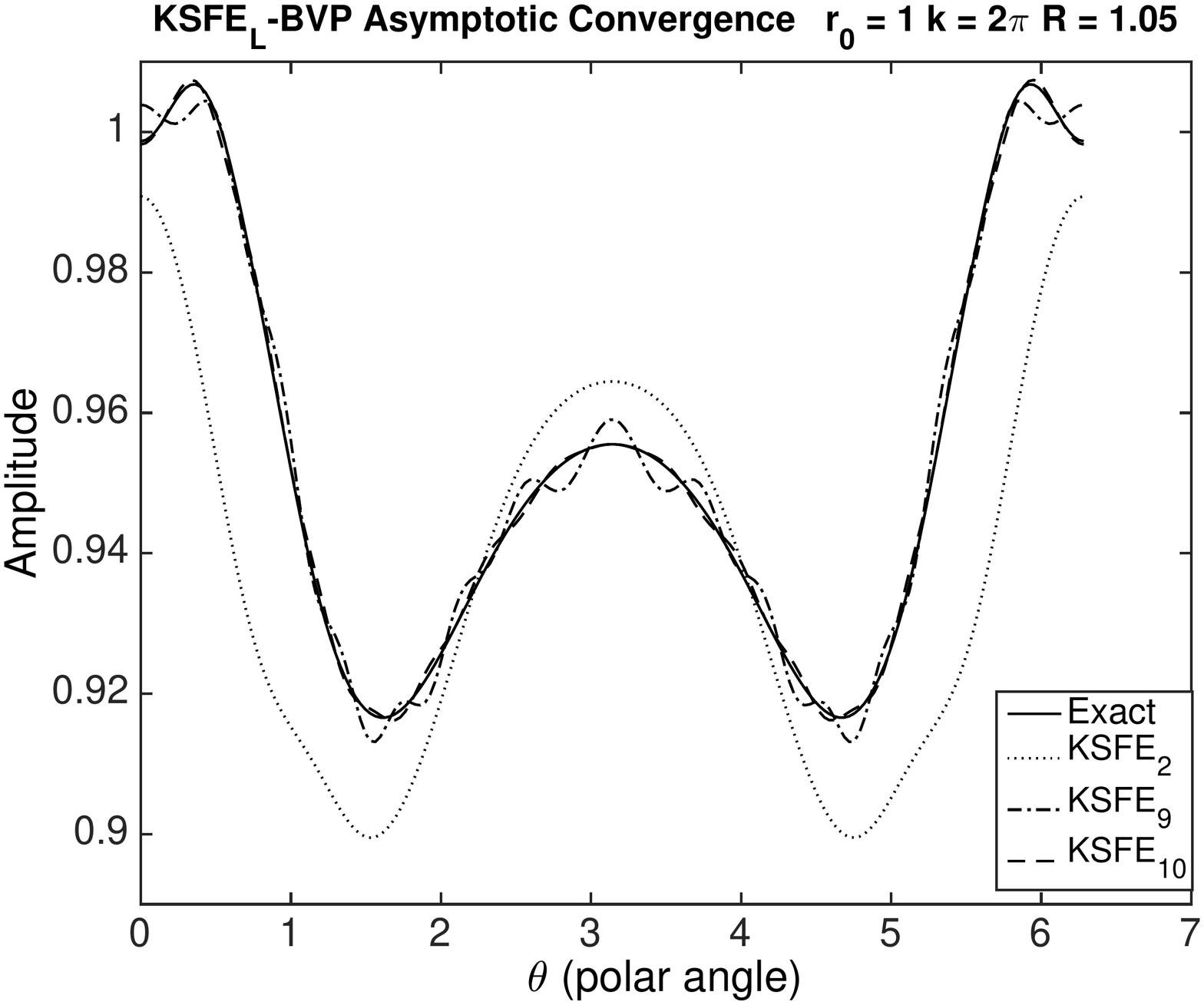} 
\end{subfigure}
\begin{subfigure}{0.5\textwidth}
\includegraphics[width=\textwidth]{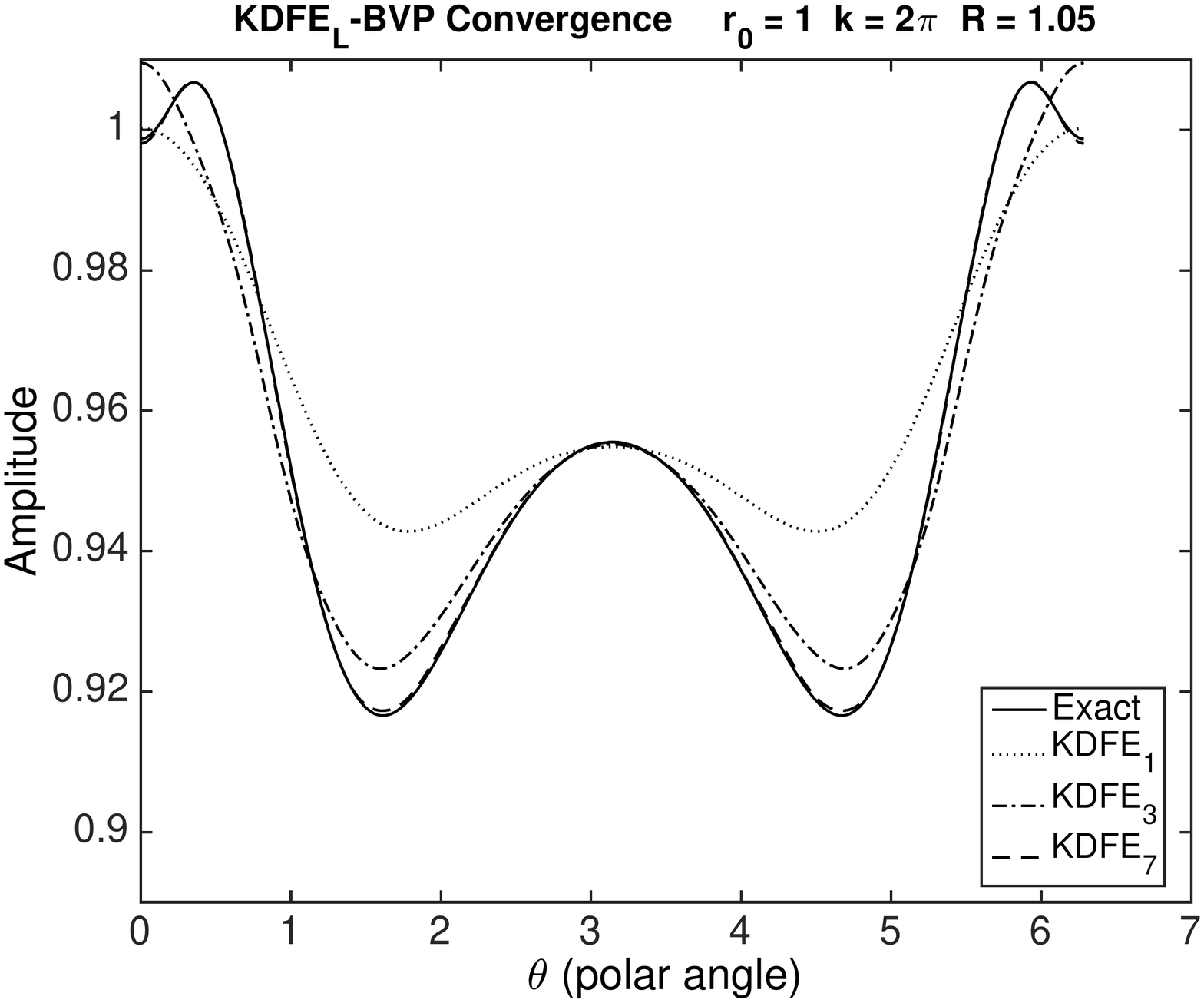}
\end{subfigure}
\caption{Convergence of solutions of KSFE$_L$- and KDFE$_L$-BVPs ($h$ fixed and $L$ increasing) to the exact solution of scattering of a plane wave from a circular scatterer of radius $r_0=1$ 
along the artificial boundary with radius $R=1.05.$}
\label{fig:ScattFields}
\end{figure}

The relevant data used in these problems is the following: wavenumber $k=2\pi$, radius of the circular obstacle is $r_0 =1$, and radius of artificial boundary $R=1.05$. We define the grid such that the number of points per wavelength in all experiments is PPW = 30 in the angular direction and $N=21$ points in the radial direction. This is an extreme problem where the artificial boundary radius has been chosen almost equal to the radius of the circular scatterer. So, the domain of computation is very small. Even in this extreme situation, it is observed how well the numerical solution of KDFE$_7$-BVP approximates the exact solution at the artificial boundary with a $L_2$-norm relative error equal to $3.44 \times 10^{-4}$ with only seven terms in the farfield expansion. Similarly, the numerical solution of KSFE$_{11}$-BVP at the artificial boundary also approximates the exact solution with a $L_2$-norm relative error equal to $3.73 \times 10^{-4}$ with eleven terms in the expansion. This illustrates the slower convergence of the numerical solutions of KSFE$_L$-BVP when compared with the sequence of solutions obtained from KDFE$_L$-BVP.

\begin{figure}[!h]
\begin{subfigure}{0.5\textwidth}
\includegraphics[width=\textwidth]{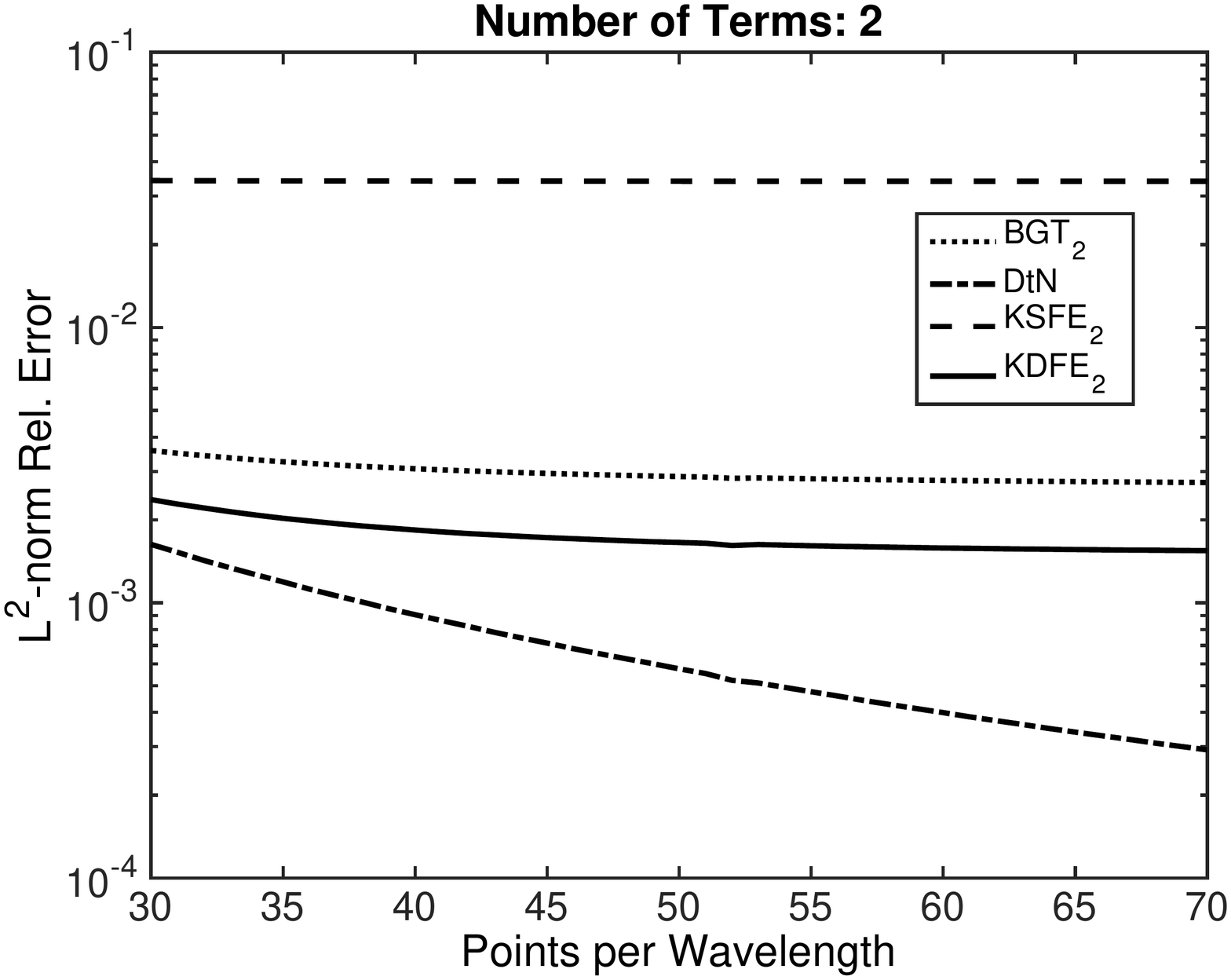} 
\end{subfigure}
\begin{subfigure}{0.5\textwidth}
\includegraphics[width=\textwidth]{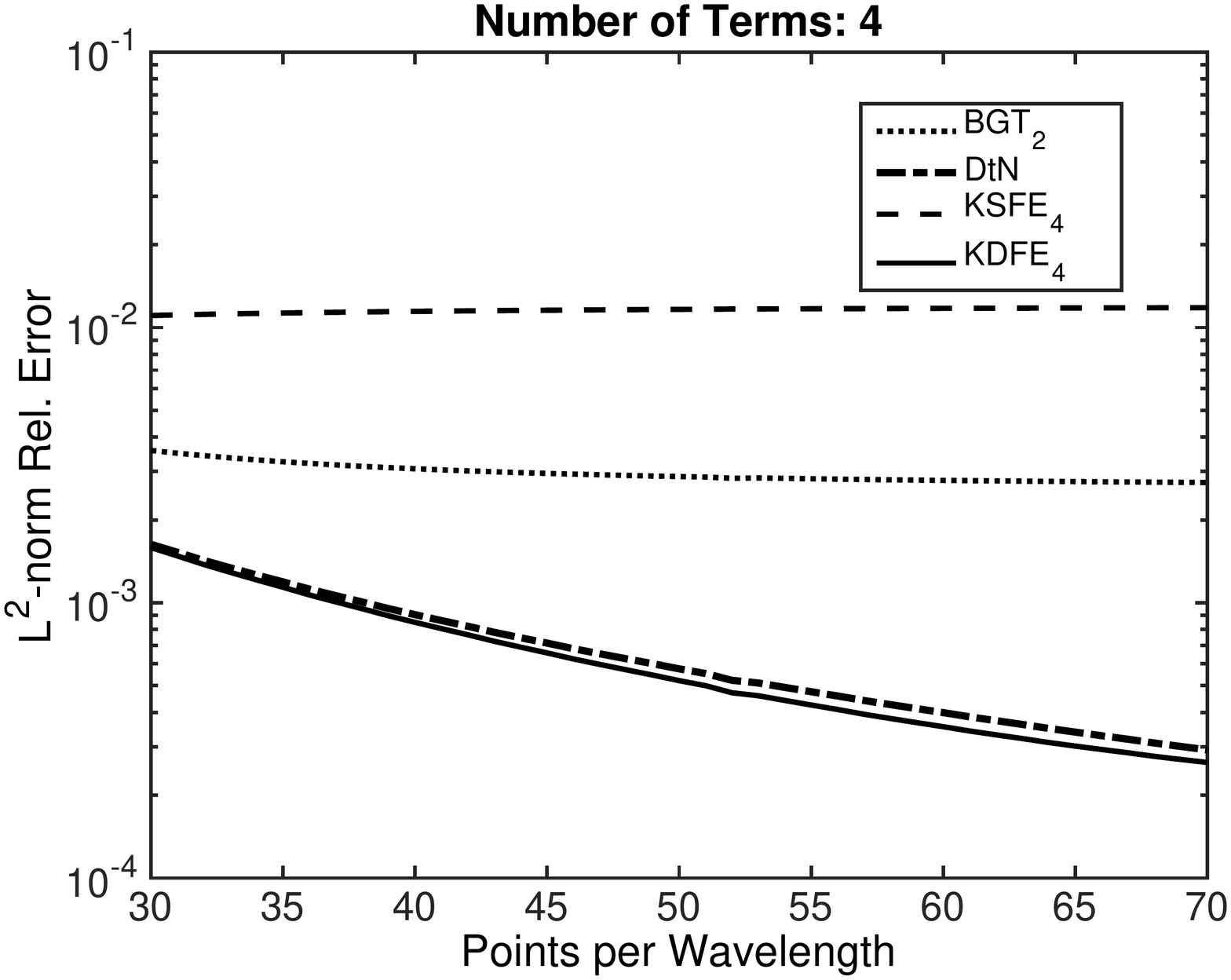}
\end{subfigure}
\begin{subfigure}{0.5\textwidth}
\includegraphics[width=\textwidth]{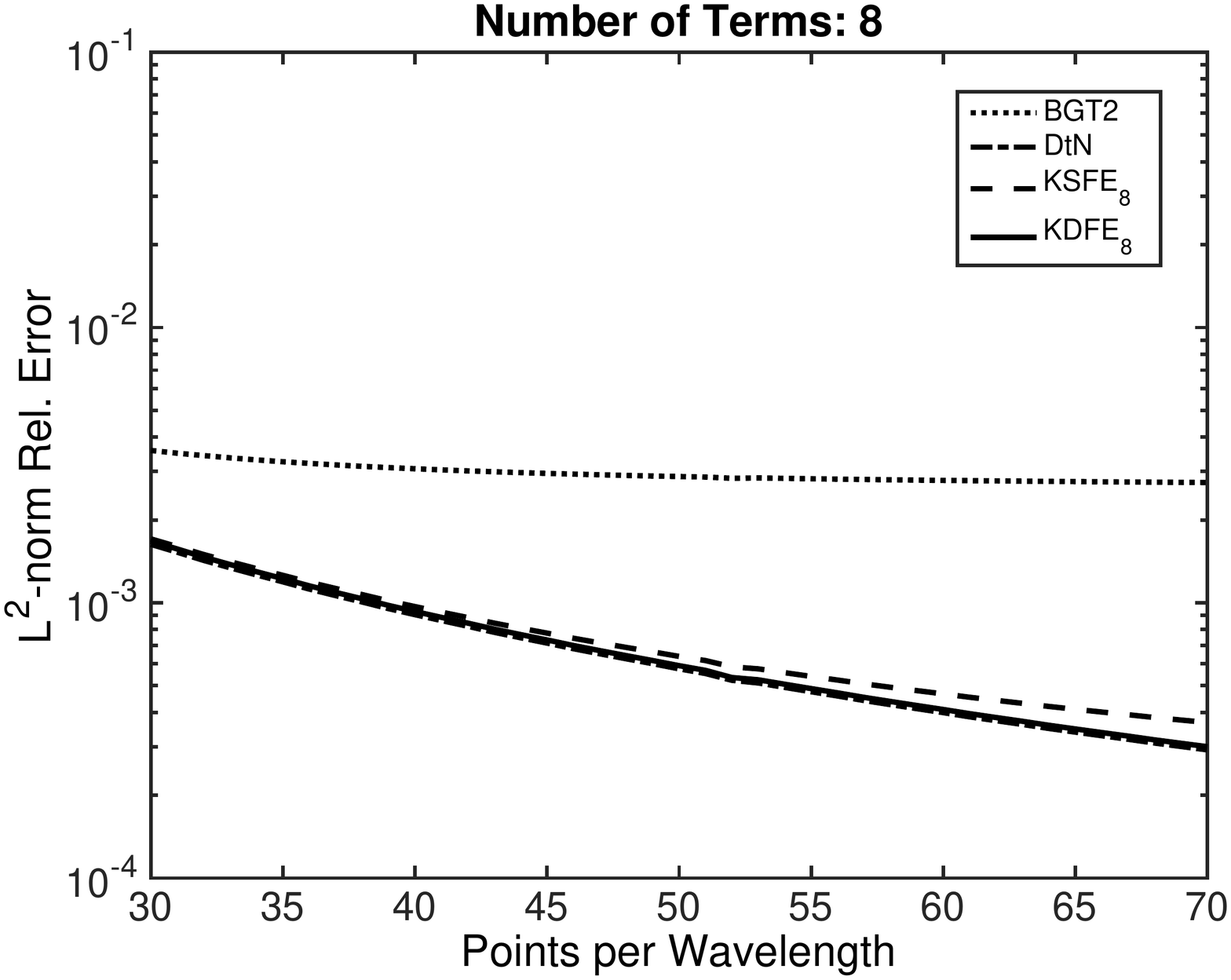}
\end{subfigure}
\begin{subfigure}{0.5\textwidth}
\includegraphics[width=\textwidth]{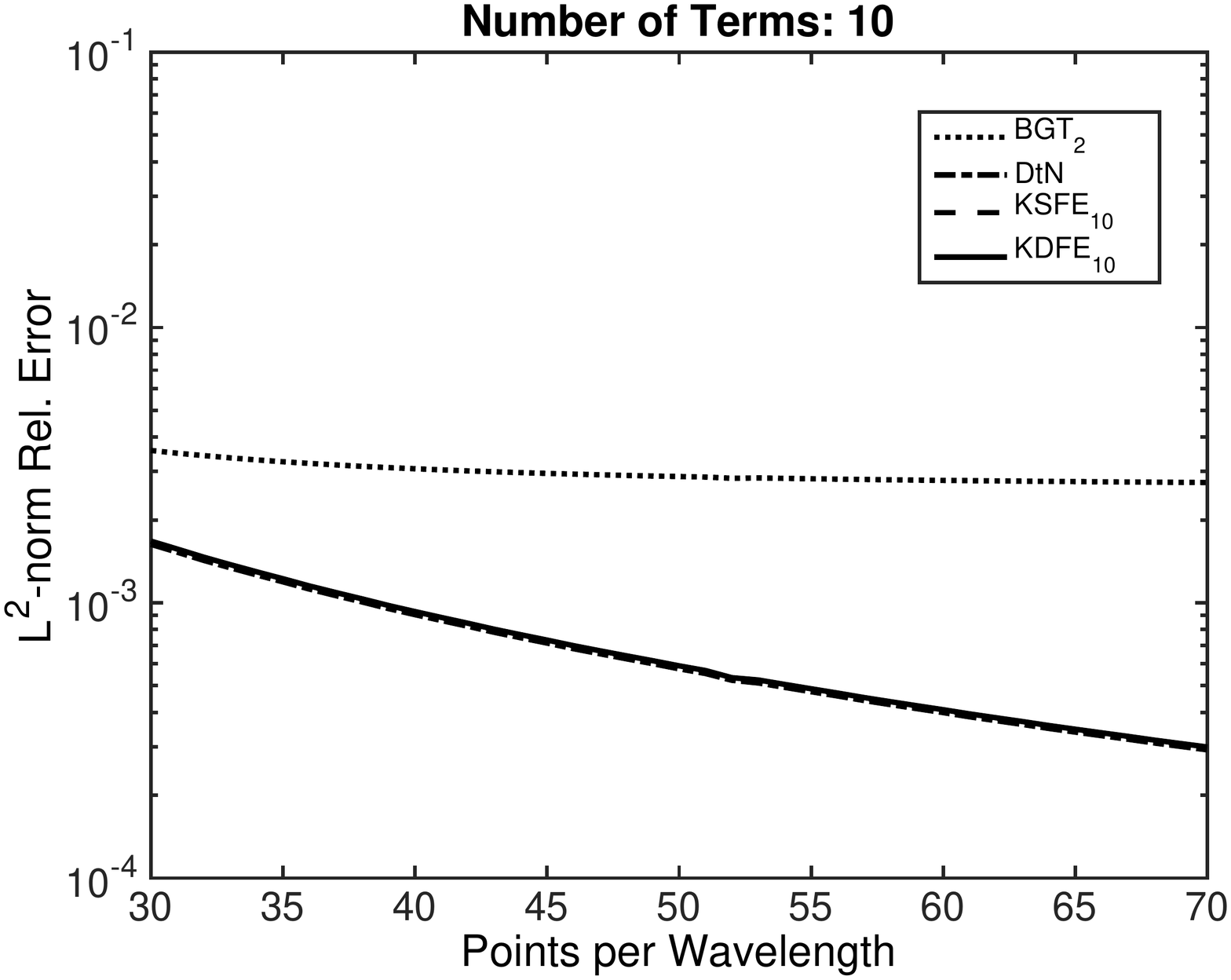}
\end{subfigure}
\caption{Comparison of L$_2$-norm relative error of the Farfield Pattern among DtN, BGT$_2$, KSFE$_L$, and KDFE$_L$ for $L=2,4,8,10$. The data in use is $r_0=1$, $R=2$, and $k=2\pi$.} 
\label{fig:ConvergencePPW}
\end{figure}

In our next set of numerical experiments, we analyze the performance of the second order finite difference method for the scattering from a circular scatterer using the following ABC: BGT$_2$, DtN, KSFE$_L$, and KDFE$_L$ ($L= 2,4,8,10$). By comparing the numerical farfield pattern (FFP) with the one obtained from the exact solution, we obtain the L$^2$-norm relative error. The formula employed to compute the FFP, for all types of ABCs from the numerical solution of the scattered field at the artificial boundary, is the formula (\ref{AnalyticalSCS}) described in Section \ref{Section:NumMethd}. 
The results of these experiments are illustrated in Fig. \ref{fig:ConvergencePPW}. The common data in these numerical 
simulations is the following: frequency $k=2\pi$, radius of the circular obstacle is $r_0 =1$, and radius of artificial boundary $R=2$. In all our experiments, the error reported  is the $L^2$-norm relative error. The grid is systematically refined, as $L$ is kept fixed,  to discover the rate of convergence. For $L=2$ (top left corner subgraph), it is observed that the rate of convergence for three of the four types of ABC is close to zero, while the approximation to the exact solution of the numerical solution corresponding to DtN improves as the grid is refined. The subgraph at the top right corner reveals that the numerical solution of KDFE$_4$-BVP has almost the same rate of convergence than the one corresponding to DtN-BVP. From the subgraph in the lower row left corner, we conclude that the numerical solution of KSFE$_8$-BVP also converges at almost the same rate as the one for KDFE$_4$ and DtN boundary value problems. Finally for ten terms in both farfield expansions, the rate of convergence for the ABC: KSFE, KDFE, and DtN is basically the same. 

The previous discussion illustrated in Fig. \ref{fig:ConvergencePPW} is appropriately summarized by a single graph depicted in Fig. \ref{fig:OrderConv}. This figure clearly shows the second order convergence of the three methods using: DtN, KDFE$_L$ ($L\ge 5$) and KSFE$_{L}$, ($L\ge 9$) while BGT$_2$-BVP order of convergence is around $3.8 \times 10^{-1}$. The set of grids employed to obtain Fig. \ref{fig:OrderConv} consist of PPW = 30, 40, 50, 60, and 70, respectively. 
As a particular case of the quadratic convergence  of KDFE$_L$-BVP ($L\ge 5$), we show the convergence of the numerical solution of KDFE$_5$-BVP in Table \ref{table:1}. The grids are ordered from less to more refine. Furthermore,
Fig. \ref{LeastSqKDFE5} shows the line obtained from the least squares approximation of the orders between progressively finer grids. The slope of this line is 1.99948, which confirms the quadratic order of convergence for the numerical solution of KFDE$_5$-BVP using the technique proposed in this work.

\begin{figure}[h]
\begin{center}
\includegraphics[width=0.6\linewidth]{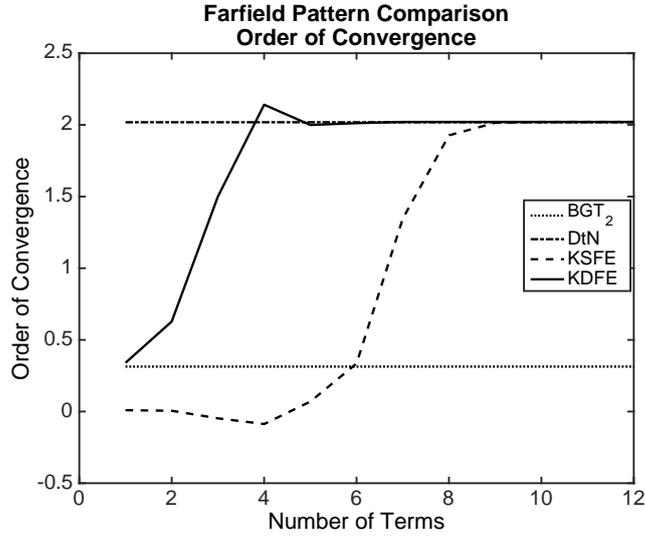} 
\end{center}
\caption{Comparison of order of convergence of FFP approximation for various ABCs versus the number of terms in the farfield expansion. The data in use is $r_0=1$, $k=2\pi$, $R=2$, and PPW = 30, 40, 50, 60, 70.}
\label{fig:OrderConv}
\end{figure}

\begin{table}[h!]
\centering
\begin{tabular}{||c c c c c||}
\hline
   PPW    & Grid size & $h= r_{0} \Delta \theta = \Delta r$   & $L^2$-norm Rel. Error   & Observed order \\ [0.5ex]
   \hline\hline
    $30$   & $30\times 190$  & $0.03324$  & $1.64\times 10^{-3}$  & $ $    \\
 $40$   & $40\times 253$  & $0.02493$  & $9.19\times 10^{-4}$  & $ 2.02$    \\
 $50$   & $50\times 316$  & $0.01995$  & $5.87\times 10^{-4}$  & $ 2.00$    \\
 $60$   & $60\times 378$  & $0.01667$  & $4.10\times 10^{-4}$  & $ 1.99$    \\
$70$   & $70\times 441$  & $0.01428$  & $3.04\times 10^{-4}$  & $ 1.95$    \\
[1ex]
\hline
\end{tabular}
\caption{Order of convergence of FFP approximation using KDFE$_5$-BVP}
\label{table:1}
\end{table}

\begin{figure}[h!]
\begin{center}
\includegraphics[width=0.5\linewidth]{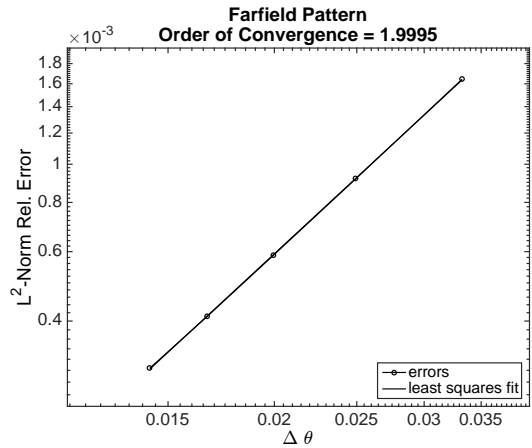} 
\end{center}
\caption{Least squares fitting line for the data in Table \ref{table:1}}
\label{LeastSqKDFE5}
\end{figure}
We observe that the numerical solution of KSFE-BVP also exhibits a second order convergence to the exact solution, although it requires more terms than the solution of KDFE-BVP to converge at the same rate. In fact as shown in Fig. \ref{fig:OrderConv}, nine terms or more are required in the farfield expansion of KSFE-ABC to reach second order convergence while only four or more terms are required in the farfield expansion of KDFE-ABC. Moreover, these numerical experiments provide numerical evidence of the non-equivalence between KSFE$_2$-ABC and BGT$_2$ as established in Theorem \ref{Non-Equiv}.

\subsection{Scattering and radiation from complexly shaped obstacles}
\label{ComplexObst}

Our results in Section \ref{orderconverg2D} for a circular shaped scatterer reveals the high precision that can be achieved by using the farfield expansions as ABCs with the appropriate number of terms and reasonable set of grids. As pointed out above, the accuracy of the overall numerical method is limited by the accuracy of the numerical method employed in the interior of the domain for relatively small number of terms, L, of the farfield ABCs. 
\begin{figure}[!h]
\begin{subfigure}{0.5\textwidth}
\includegraphics[height=0.7\textwidth]{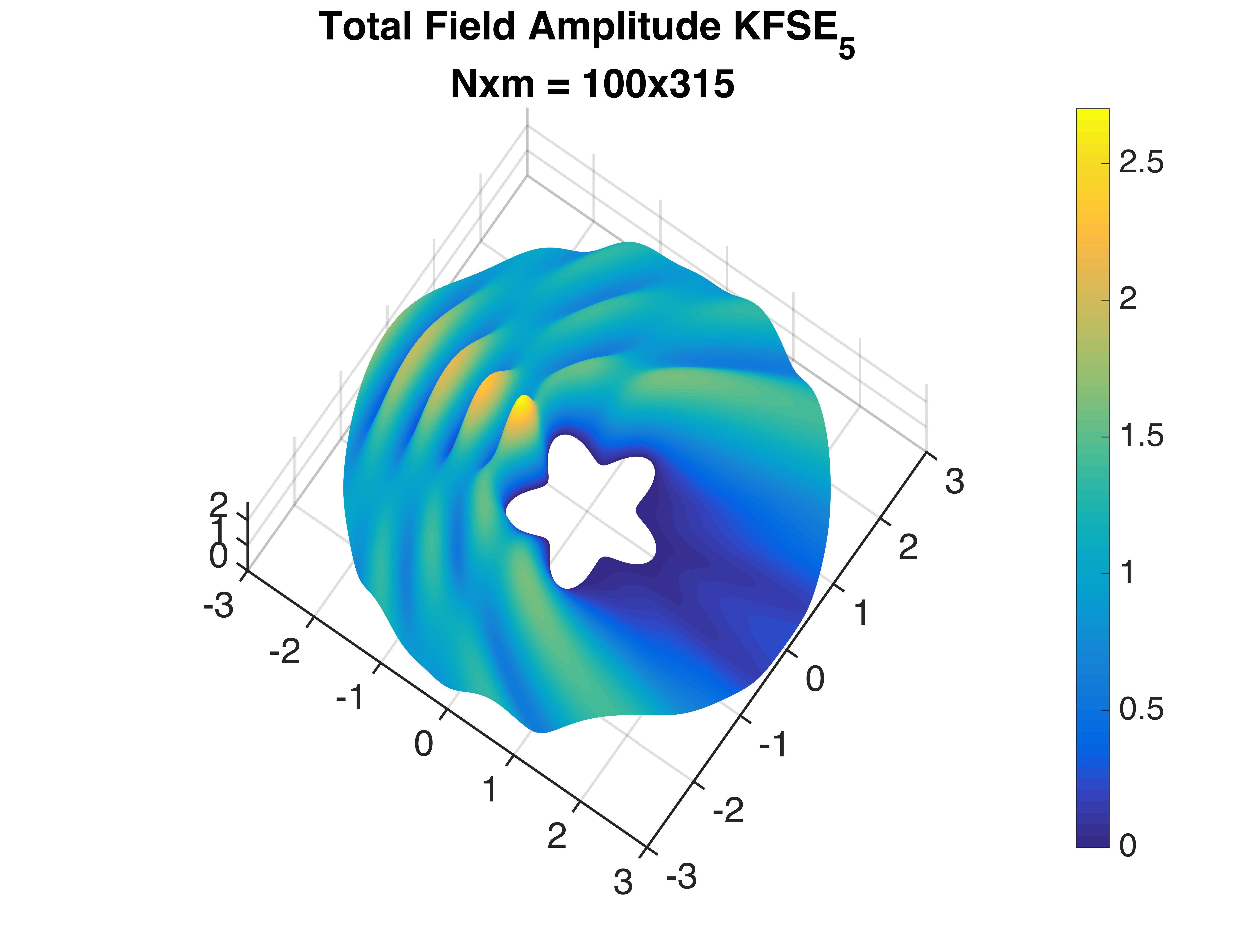} 
\end{subfigure}
\begin{subfigure}{0.5\textwidth}
\includegraphics[height=0.7\textwidth]{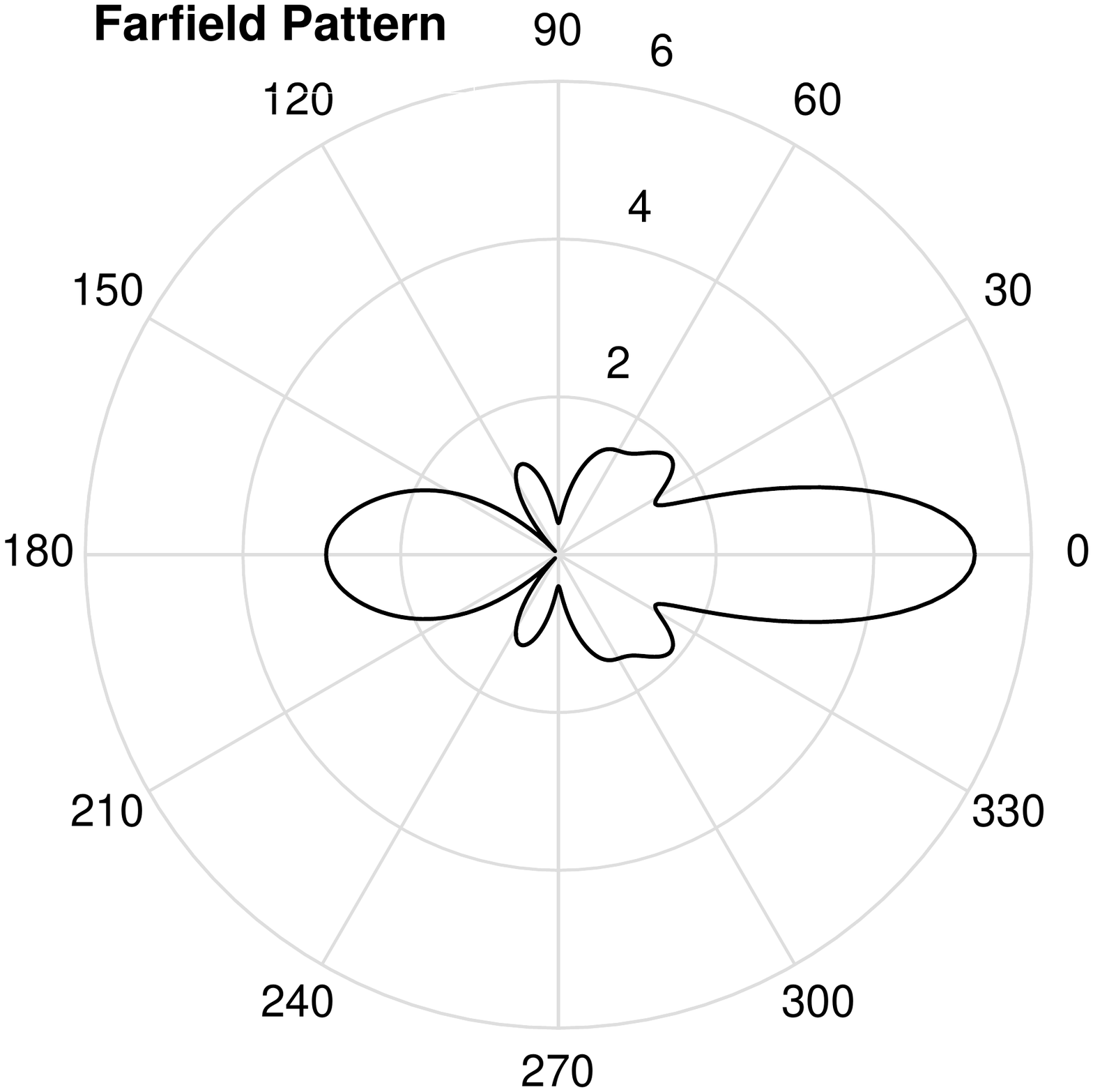}
\end{subfigure}
\begin{subfigure}{0.5\textwidth}
\includegraphics[height=0.7\textwidth]{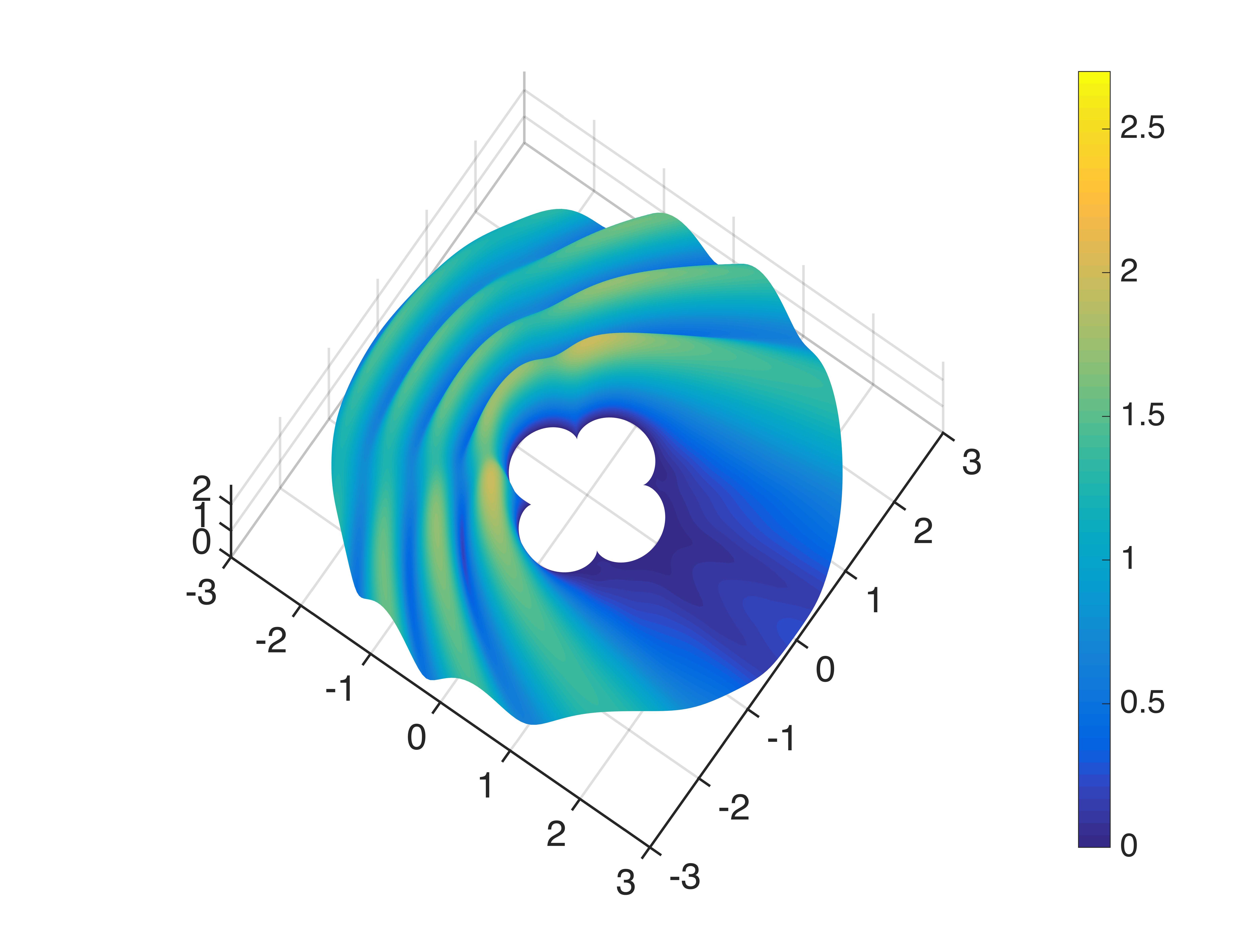}
\end{subfigure}
\begin{subfigure}{0.5\textwidth}
\includegraphics[height=0.7\textwidth]{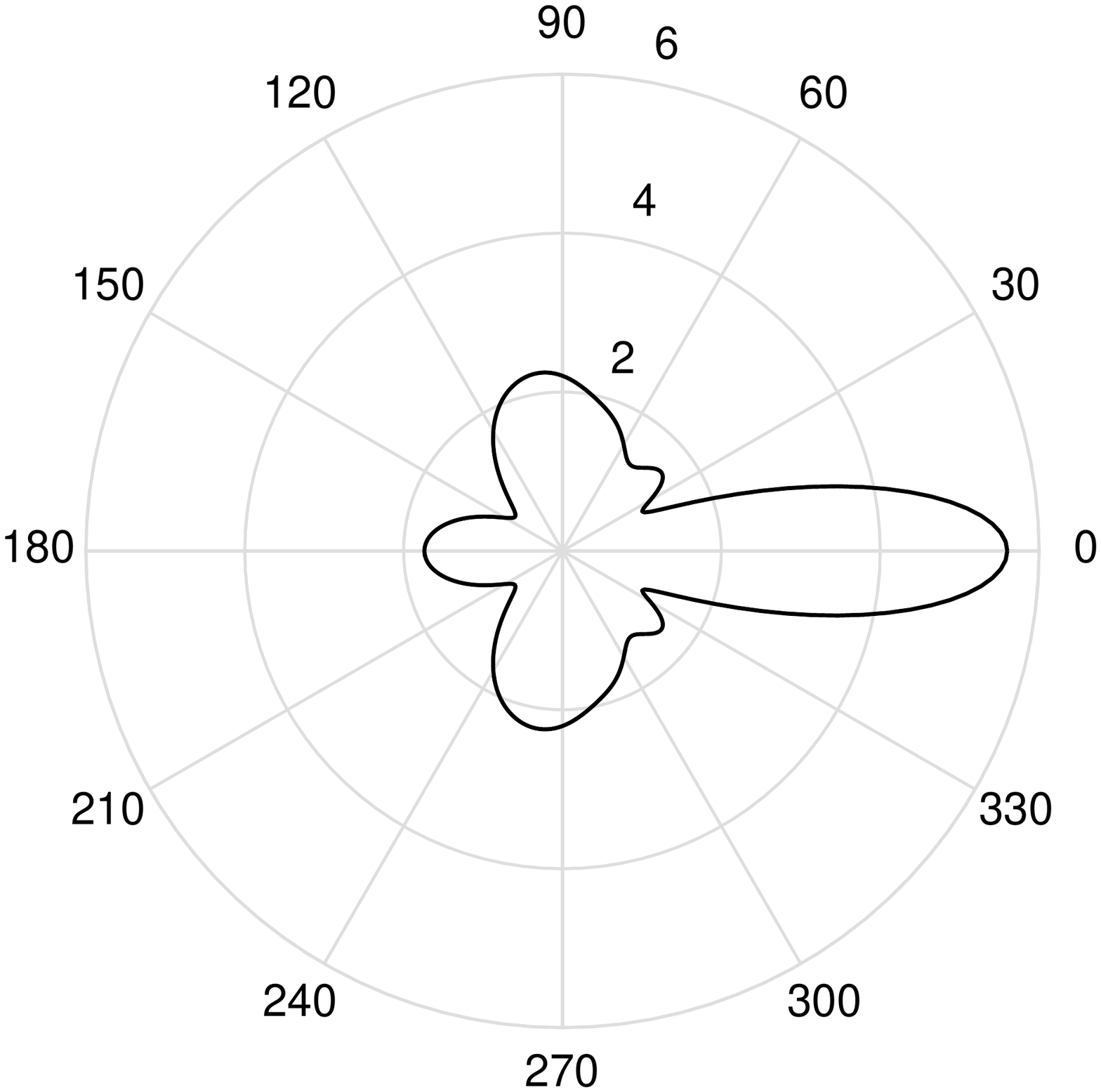}
\end{subfigure}
\begin{subfigure}{0.5\textwidth}
\includegraphics[height=0.7\textwidth]{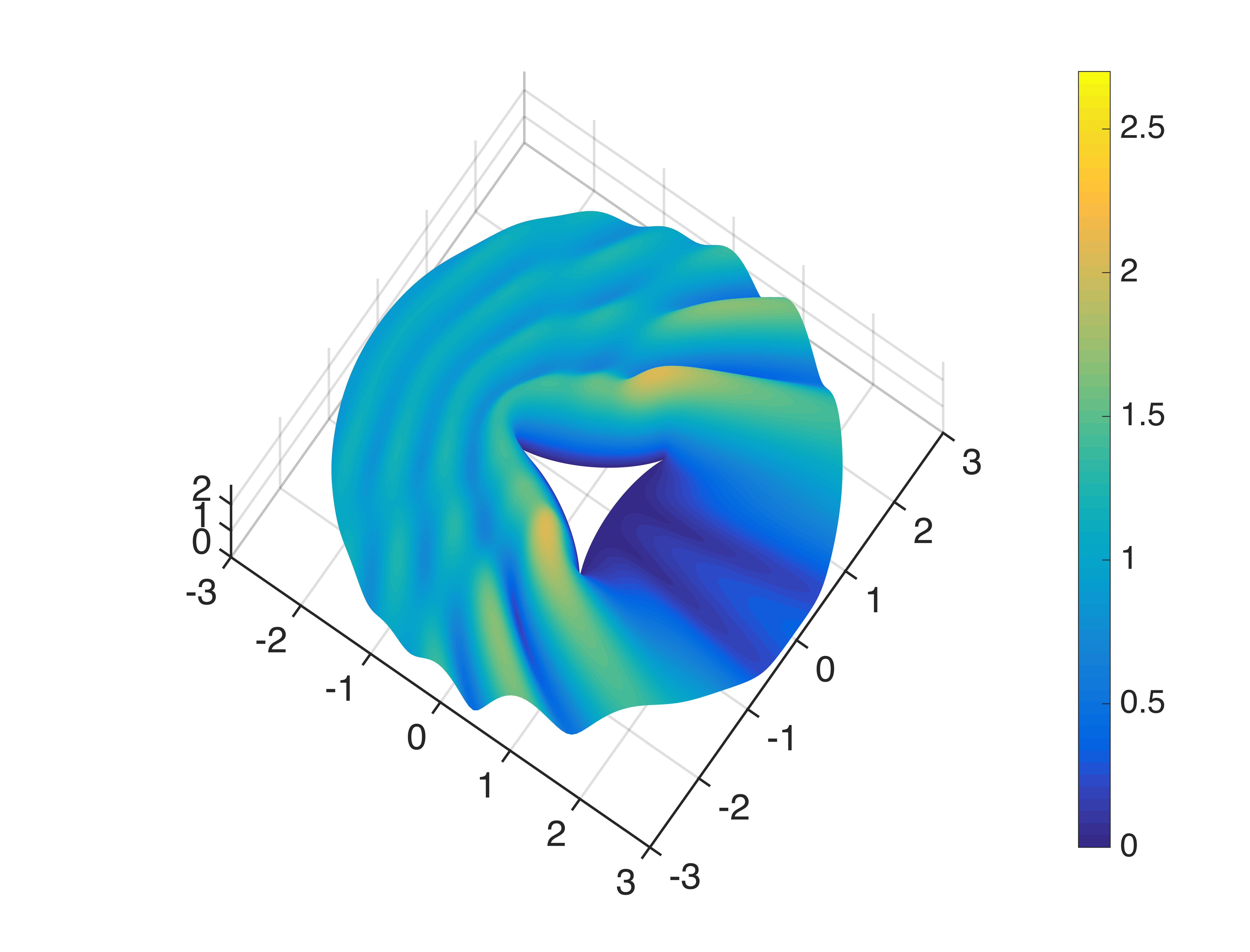}
\end{subfigure}
\begin{subfigure}{0.5\textwidth}
\includegraphics[height=0.7\textwidth]{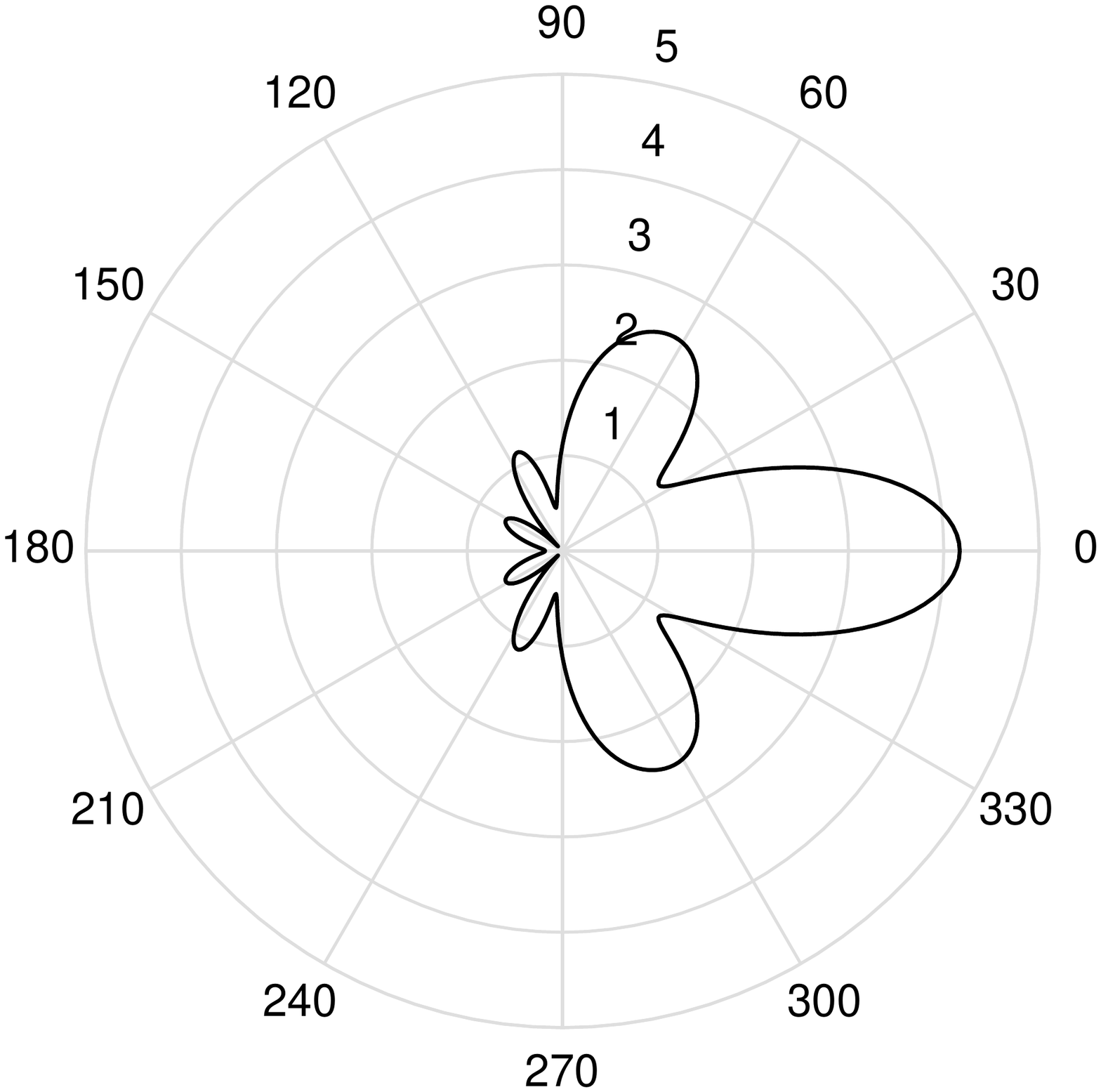}
\end{subfigure}
\caption{Total field and corresponding FFP for scattering from complexly shaped obstacles on elliptic-polar grids using KSFE$_5$-ABC with $k=2\pi$, $R=3$, and PPW=50.} 
\label{fig:ComplexObst}
\end{figure}
In this section, we take advantage of this fact to numerically solve more realistic scattering problems. In fact, we find numerical solutions for acoustic scattering problems from obstacles with complexly shaped bounding curves such as a star, epicycloid, and astroid. We choose as the artificial boundary a circle of radius R~=~3 and the frequency $k =2\pi$.
As described in Section \ref{ScattComplex}, the differential equations defining these BVPs are written in terms of generalized curvilinear coordinates that Acosta and Villamizar derived in 
\cite{JCP2010}. The corresponding grids for these curvilinear coordinates  were obtained from an elliptic grid generator and they were named elliptic-polar grids. Following the circular scatterer case, we use a second order centered finite difference method as our numerical technique for the interior points.  A detailed account of the discretized equations in curvilinear coordinates are also found at \cite{JCP2010}. We employ KSFE$_5$ as our farfield expansion combined with PPW =50 (points per wavelength).  The results are illustrated in Fig. \ref{fig:ComplexObst} where the total field and its corresponding FFP are shown for each one of these obstacles.  The parametric equations of these bounding curves are given by
\begin{eqnarray}
\mbox{Star:}\quad x(\theta)=
0.2(4+\cos(5\theta))\cos(\theta)\qquad
y(\theta)=0.2(4+\cos(5\theta))\sin(\theta),\qquad 0\leq \theta\leq
2\pi. \label{star}
\end{eqnarray}
\begin{eqnarray}
\label{epicycloid}
\mbox{Epicycloid:}\quad& &x(\theta)= ((5\sin(-(\theta+5\pi/4))-
     \sin(-5(\theta+5\pi/4)))\cos(\pi/4)-\nonumber\\\nonumber
     & &\qquad (5\cos(-(\theta+5\pi/4))-
     \cos(-5(\theta+5\pi/4)))\sin(\pi/4))1/6\\
& &y(\theta)= ((5\sin(-(\theta+5\pi/4))-
     \sin(-5(\theta+5\pi/4)))\sin(\pi/4)-\\\nonumber
     & & \qquad(5\cos(-(\theta+5\pi/4))-
     \cos(-5(\theta+5\pi/4)))\cos(\pi/4))1/6,\nonumber
     \qquad 0\leq \theta\leq 2\pi. 
\end{eqnarray}
\begin{eqnarray}
\mbox{Astroid:}\quad& &x(\theta)= \left(2\cos(\theta-\pi/3)+
     \cos(2(\theta-\pi/3))\right)\cos(\pi/3)/3-\nonumber\\
     & &\qquad\left(2\sin(\theta-\pi/3)+
     \sin(2(\theta-\pi/3))\right)\sin(\pi/3)/3\\
& &y(\theta)= \left(2\cos(\theta-\pi/3)+
     \cos(2(\theta-\pi/3))\right)\sin(\pi/3)/3-\nonumber\\\nonumber
     & &\qquad\left(2\sin(\theta-\pi/3)+
     \sin(2(\theta-\pi/3))\right)\cos(\pi/3)/3,
     \qquad 0\leq \theta\leq 2\pi. \label{astroid}
\end{eqnarray}
For the experiments corresponding to the graphs shown in Fig. \ref{fig:ComplexObst}, 
using  relatively fine grids with $PPW=50$, we did not find significant changes in the numerical solution by increasing the number of terms in the KSFE$_L$ condition up to $L=12$ terms. 

Next, we discuss the numerical results for radiating problems defined in the exterior region $\Omega$ bounded internally by an arbitrary simple closed curve $\Gamma$.
These BVPs consist of Helmholtz equation, Sommerfeldt radiation condition, and a Dirichlet condition on the complexly shaped bounding curve $\Gamma$. By imposing an appropriate boundary condition on $\Gamma$, we can easily prescribe a solution for each one of these BVPs. In fact, consider the function $u$ defined in the exterior region $\Omega$ from the superposition of two sources which are located inside the closed region bounded by  $\Gamma$. More precisely, $u$ is given in terms of Hankel functions of first kind of order zero as
\begin{equation}
u({\bf{x}}) = H_0^{(1)}(k r_{1}({\bf x})) + H_0^{(1)}(k r_{2}({\bf x}) ), \qquad {\bf x}\in \Omega\label{SourceSoln}
\end{equation}
where $r_1 = |{\bf{x}}-{\bf x}_1|$, and $r_2 = |{\bf x}-{\bf x}_2|$ with ${\bf x}_1$ and ${\bf x}_2$ inside the region bounded by $\Gamma$. 
Clearly, the function $u$ satisfies Helmholtz equation in $\Omega$ since $H_0^{(1)}(k r_{i})$ does for $i=1,2$. 
It also satisfies the Sommerfeld radiation condition. Thus if we also impose the values of $u$ at the boundary $\Gamma$ (superposition of the two sources) as the boundary condition on $\Gamma$, the function $u$ defined by (\ref{SourceSoln}) satisfies the radiating problem just defined, regardless of the shape of the bounding curve $\Gamma$.

Starting with the previously superimposed boundary condition on the bounding curve $\Gamma$, it is possible to obtain a numerical solution. First, we transform the unbounded radiating problem into a bounded one by introducing the KDFE-ABC or KSFE-ABC on a circular artificial boundary ($r=R$). Then, we apply the proposed numerical technique in generalized curvilinear coordinates in the region $\Omega^{-},$ bounded internally by $\Gamma$ and externally by the circle of radius $R$ to obtain the numerical solution sought.

\begin{figure}[!h]
\begin{subfigure}{0.35\textwidth}
\includegraphics[width=\linewidth, height=5.5cm ]{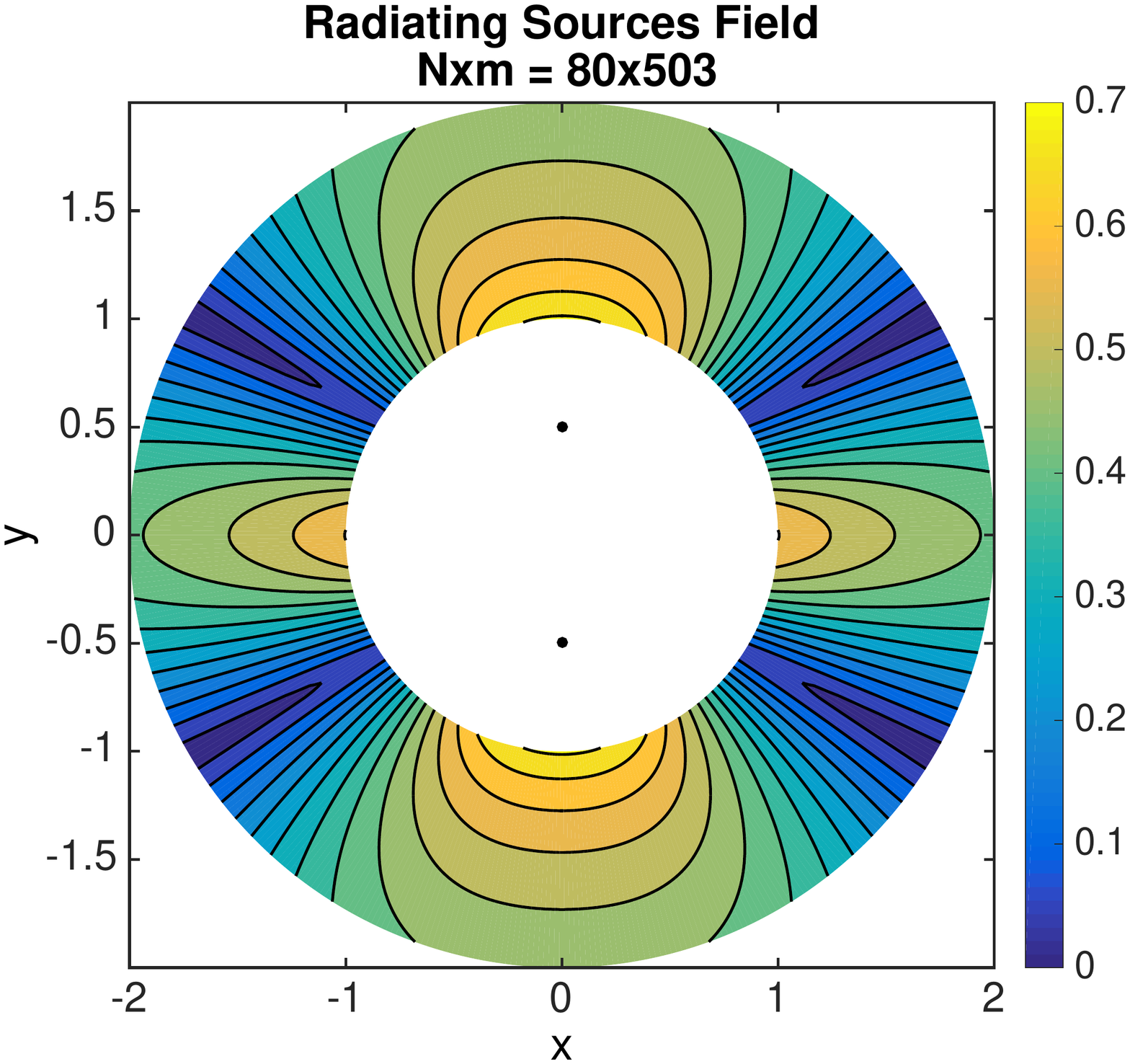} 
\end{subfigure}
\begin{subfigure}{0.3\textwidth}
\includegraphics[width=\linewidth, height=5.2cm]{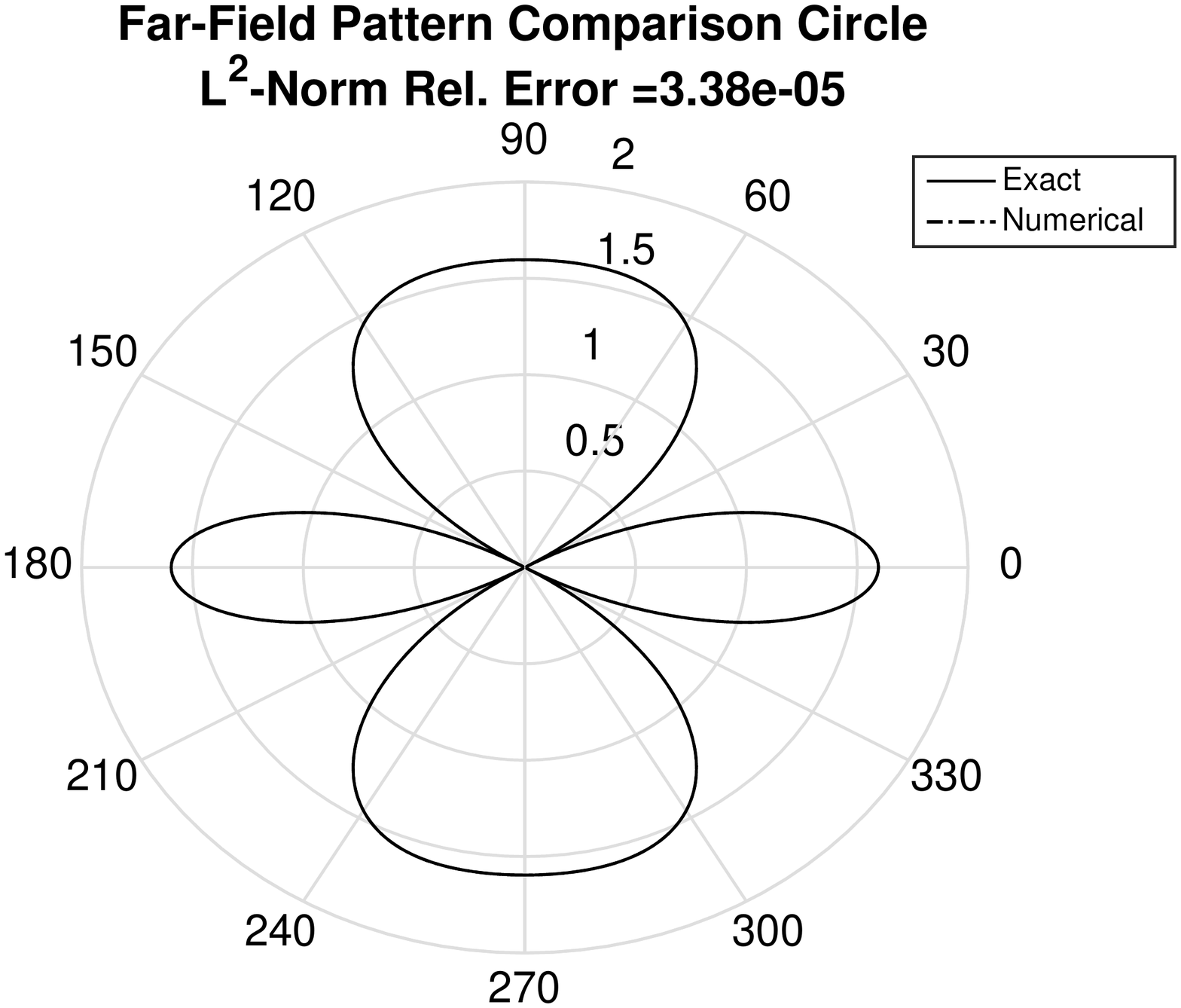}
\end{subfigure}
\begin{subfigure}{0.3\textwidth}
\includegraphics[width=\linewidth, height=5cm ]{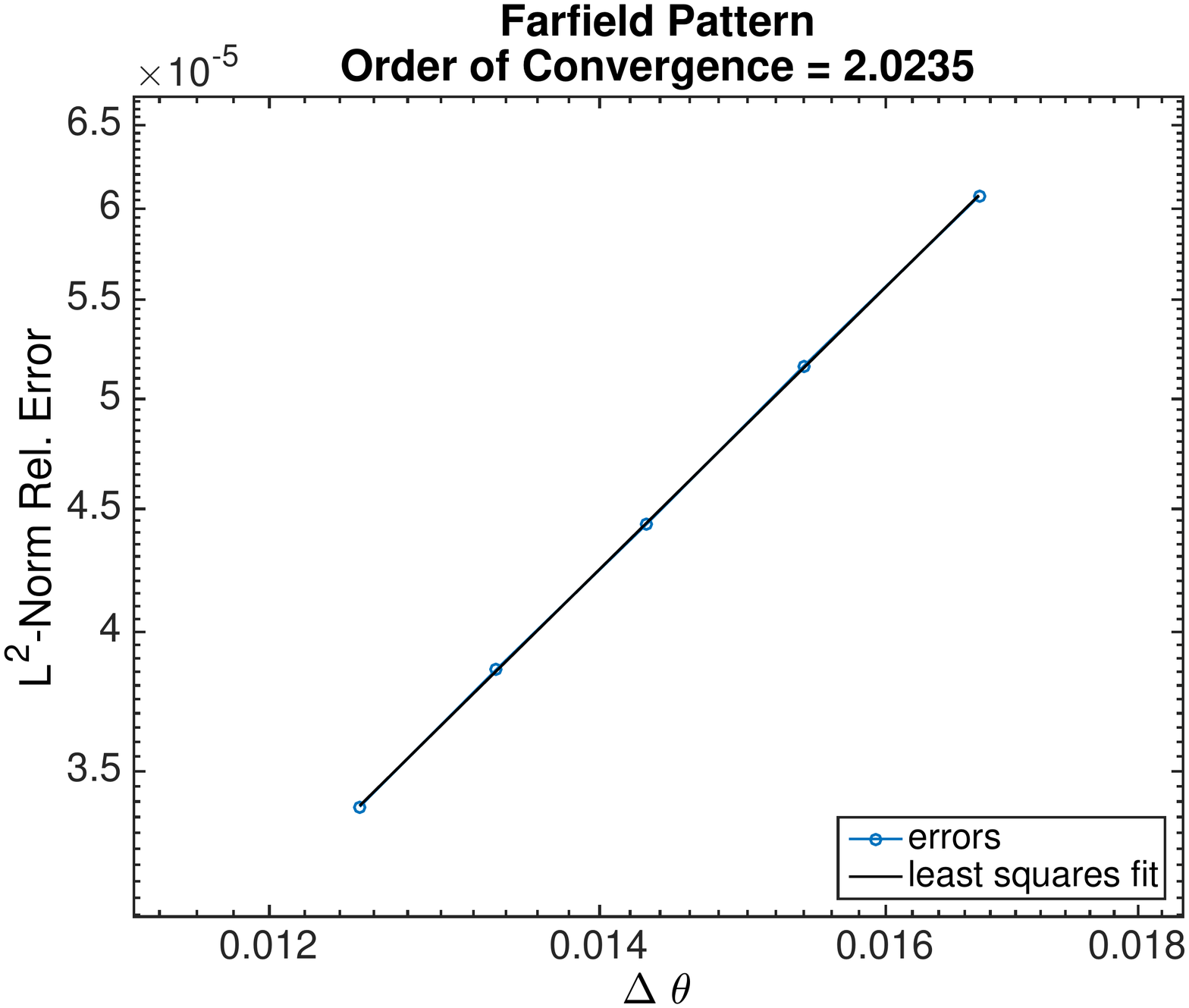}
\end{subfigure}
\begin{subfigure}{0.35\textwidth}
\includegraphics[width=\linewidth, height=5.5cm ]{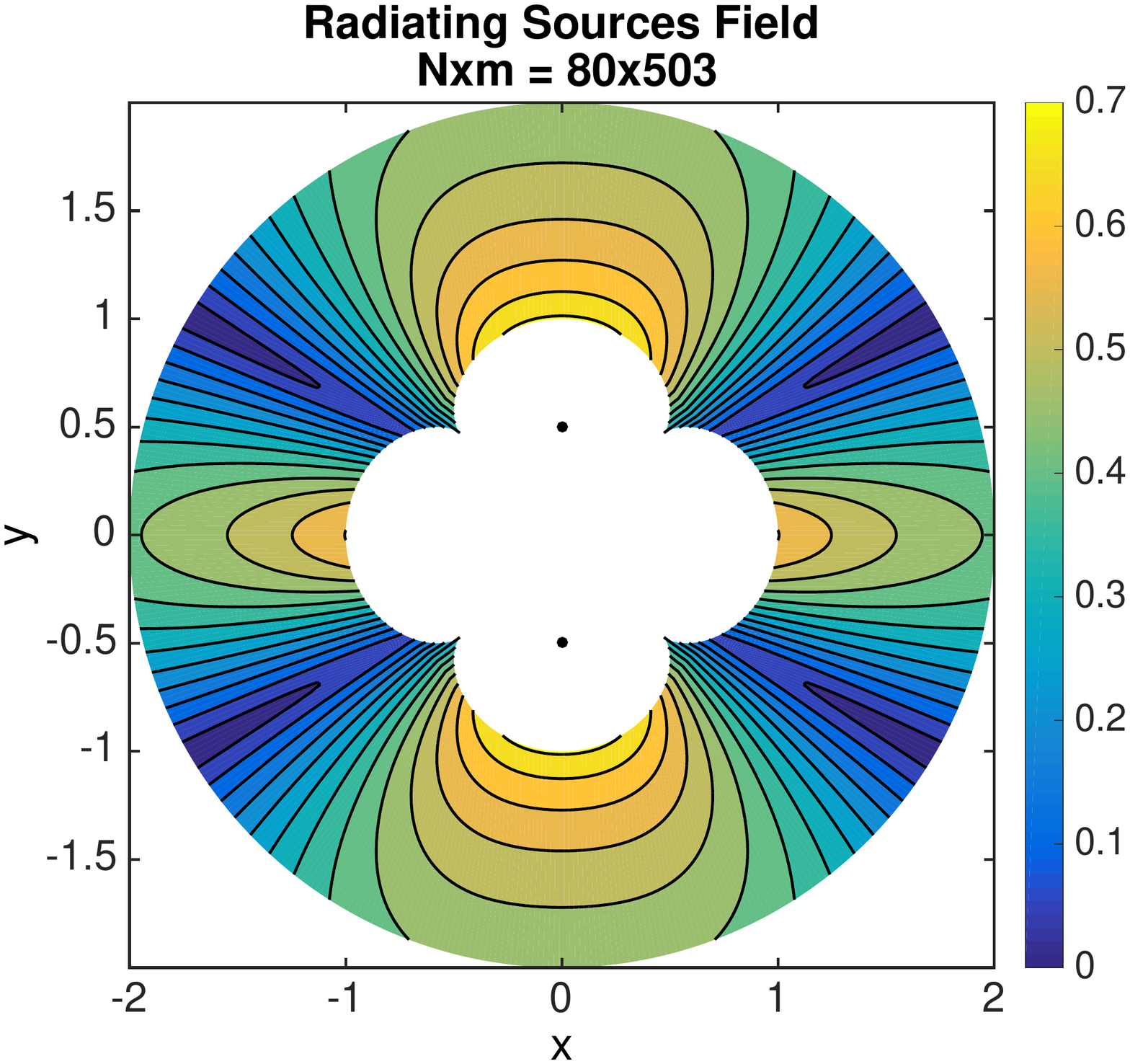} 
\end{subfigure}
\begin{subfigure}{0.3\textwidth}
\includegraphics[width=\linewidth, height=5.2cm]{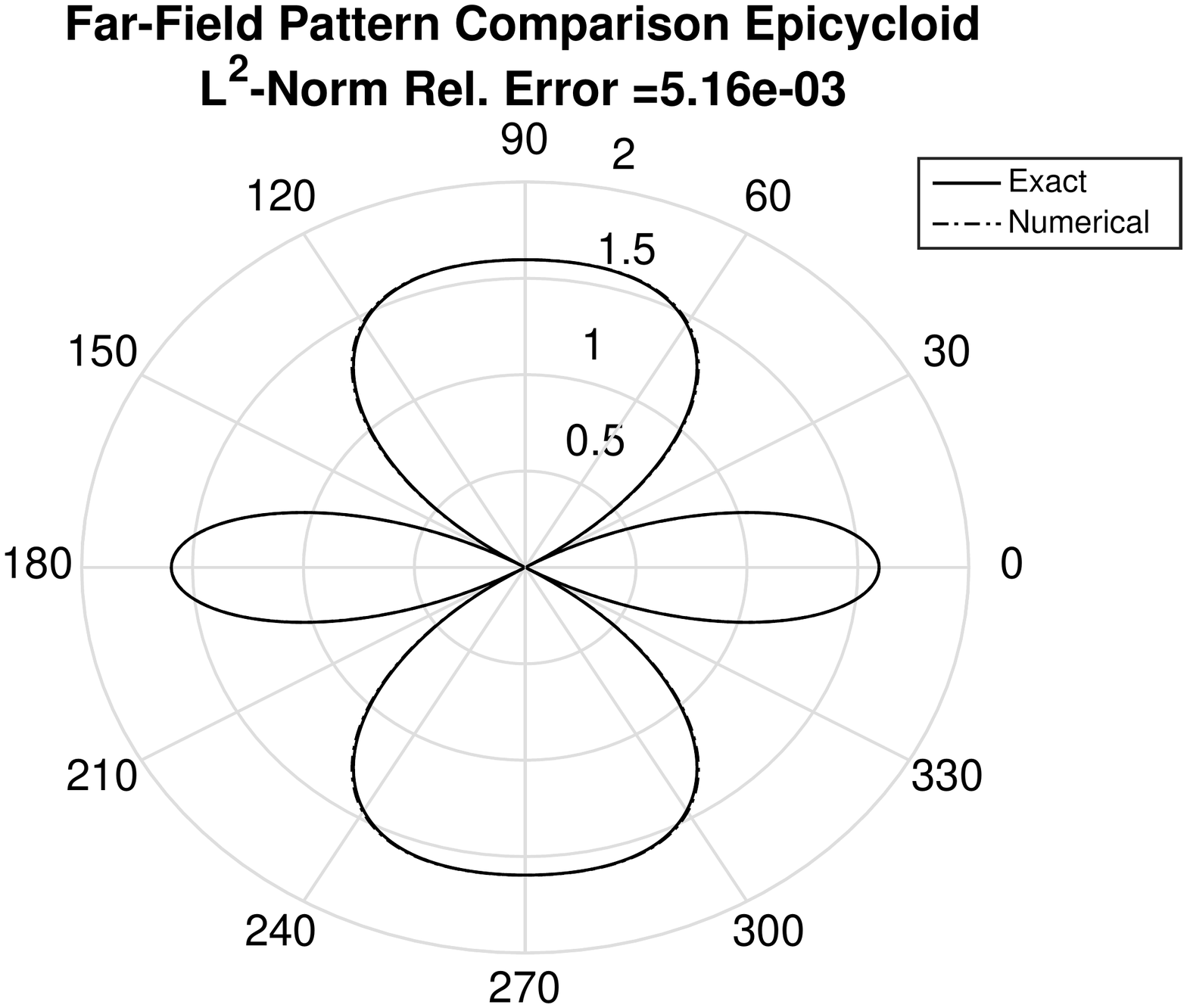}
\end{subfigure}
\hspace{0.1cm}
\begin{subfigure}{0.3\textwidth}
\includegraphics[width=\linewidth, height=5cm ]{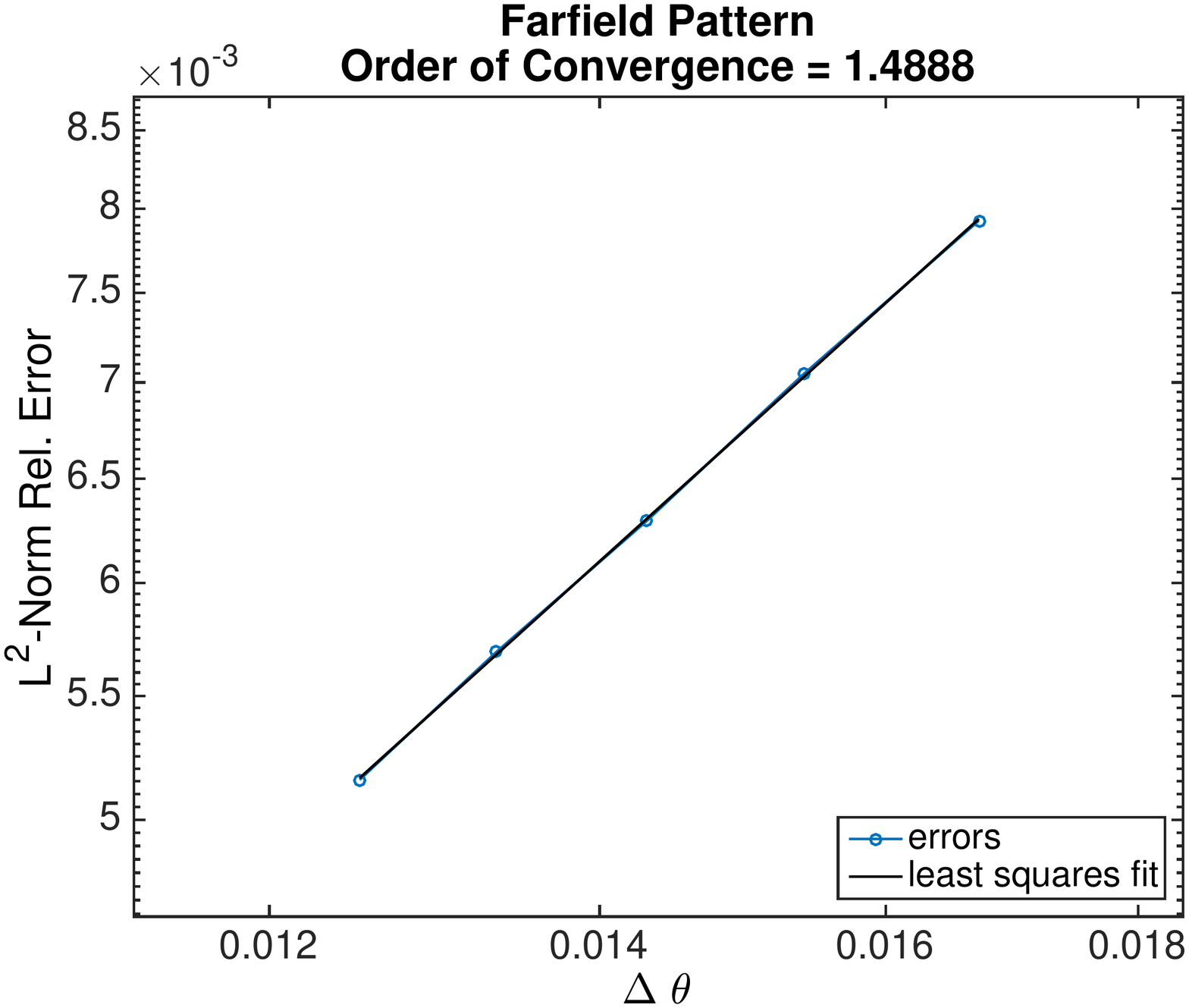}
\end{subfigure}
\begin{subfigure}{0.35\textwidth}
\includegraphics[width=\linewidth, height=5.5cm ]{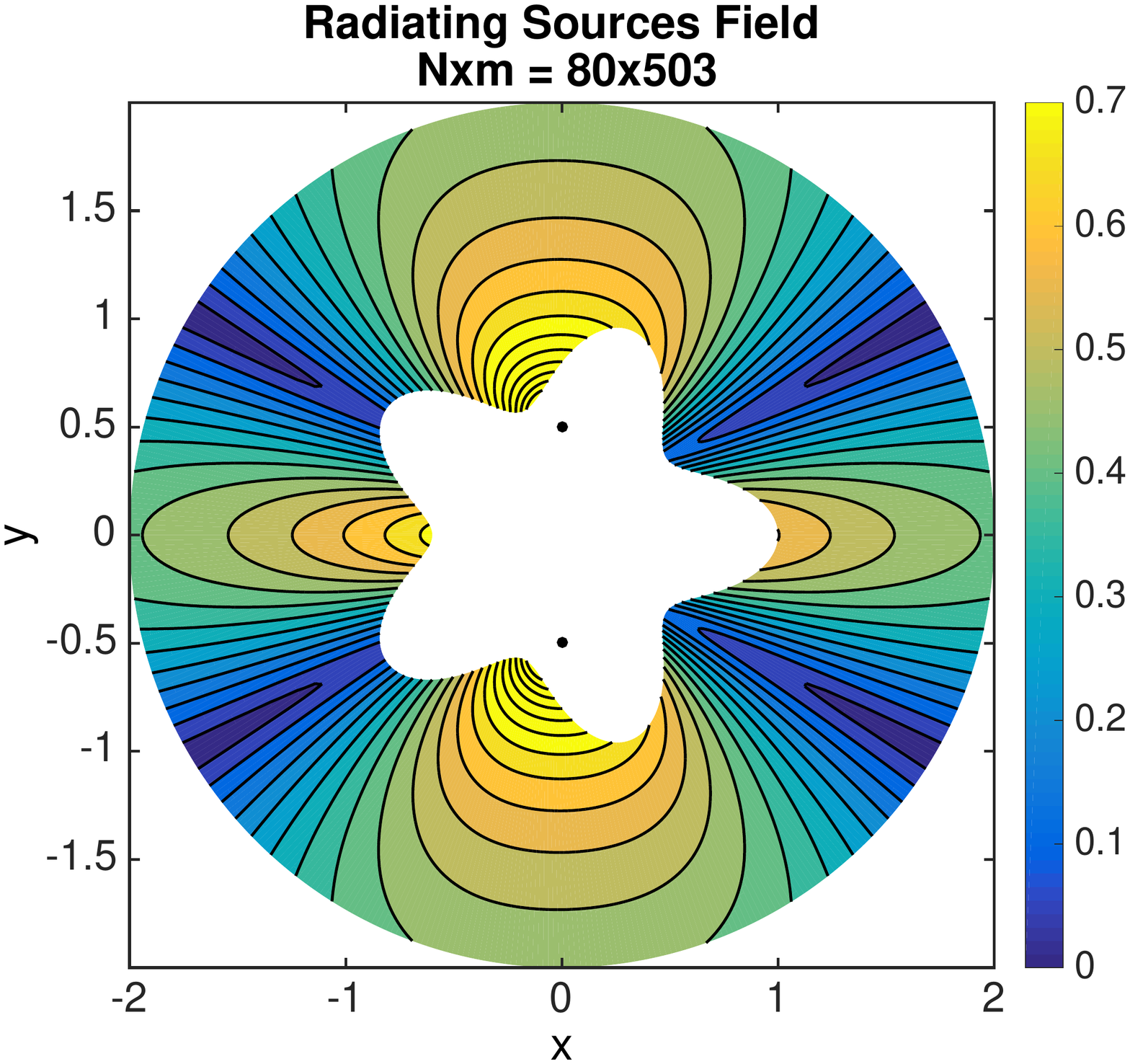} 
\end{subfigure}
\hspace{0.2cm}
\begin{subfigure}{0.3\textwidth}
\includegraphics[width=\linewidth, height=5.2cm]{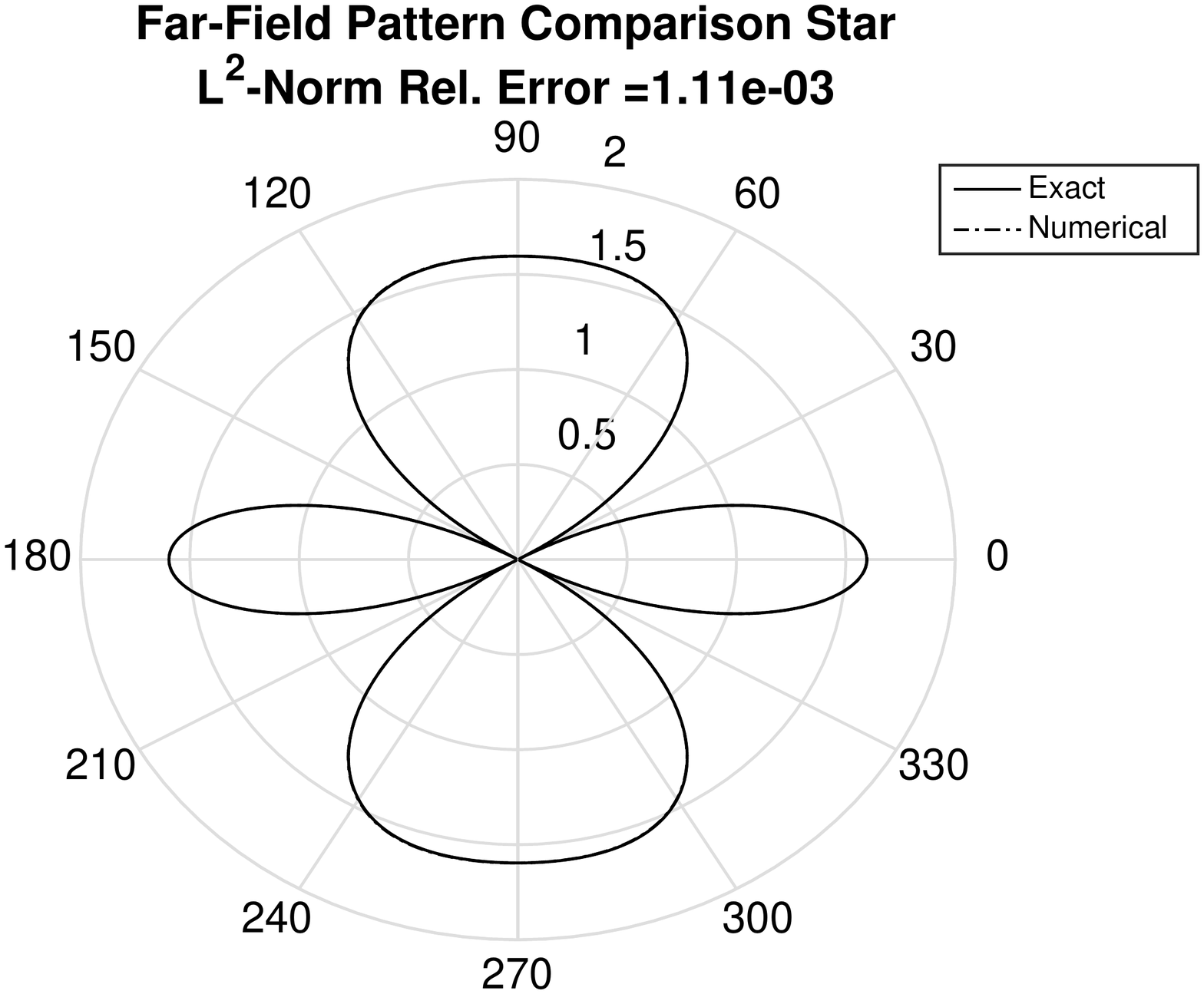}
\end{subfigure}
\hspace{0.2cm}
\begin{subfigure}{0.3\textwidth}
\includegraphics[width=\linewidth, height=5cm ]{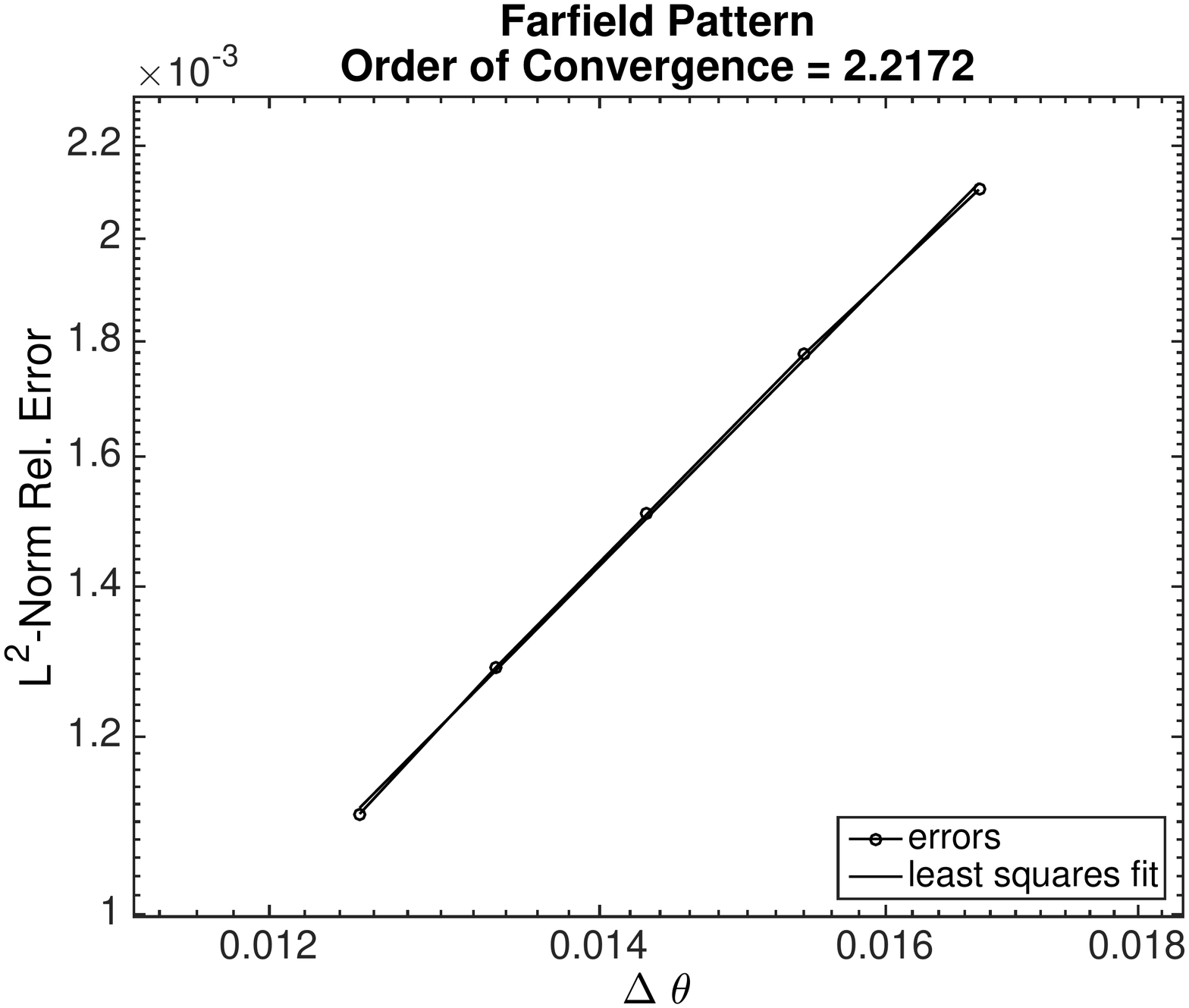}
\end{subfigure}
\vspace{-.7cm}
\caption{Numerical computation of a radiating field from two sources using KSFE$_{10}$-ABC, $k=2\pi$, $R=2$, and PPW=80. Order of convergence of FFP approximation for PPW = 60,65,70,75,80 for complex bounding curves.}
\label{fig:ComplexObstSources}
\end{figure}

The relevant data employed in our numerical experiments is the following: artificial boundary $R=2$, frequency $k=2\pi$, number of terms in the KSFE expansion $L=10$, location of sources ${\bf x}_1 =(0,1/2)$ and ${\bf x}_2 =(0,-1/2)$, set of grid points PPW $= 60, 65, 70, 75, 80$.
We show that these numerical solutions indeed converge to the exact prescribed solution (\ref{SourceSoln}) of the original radiating BVP. 
This is illustrated in Fig. \ref{fig:ComplexObstSources}
where the known radiating field from the two sources is numerically approximated in three different regions $\Omega^{-}$ which are internally bounded by three different curves. They are a circle of radius $r_0=1$, the epicycloid boundary curve defined in (\ref{epicycloid}), and the star curve defined in (\ref{star}).  The relative $L^2$-norm error between the FFP of the prescribed solution and the approximated solution is computed for each different grid. Then, the order of convergence is estimated based on these errors.
As seen in Fig. \ref{fig:ComplexObstSources}, we are able to prove quadratic convergence for the circle and for the star bounding curves. However, for the epicycloid we can only get 1.5 as order of convergence for the same set of grid points and number of terms in the farfield expansion $L$. This is due to the difficulty of generating conforming smooth grids in the neighborhood of the epicycloid singularities.

\subsection{Scattering from a spherical obstacle. The axisymmetric case. Numerical approximation and order of convergence}
\label{orderconverg3D}

In this section, we discuss the results for the scattering from a spherical obstacle modeled by WFE$_L$-BVP as described in Section \ref{ScattSphere}. 
\begin{figure}[!h]
\begin{subfigure}{0.37\textwidth}
\includegraphics[width=\linewidth, height=5.5cm ]{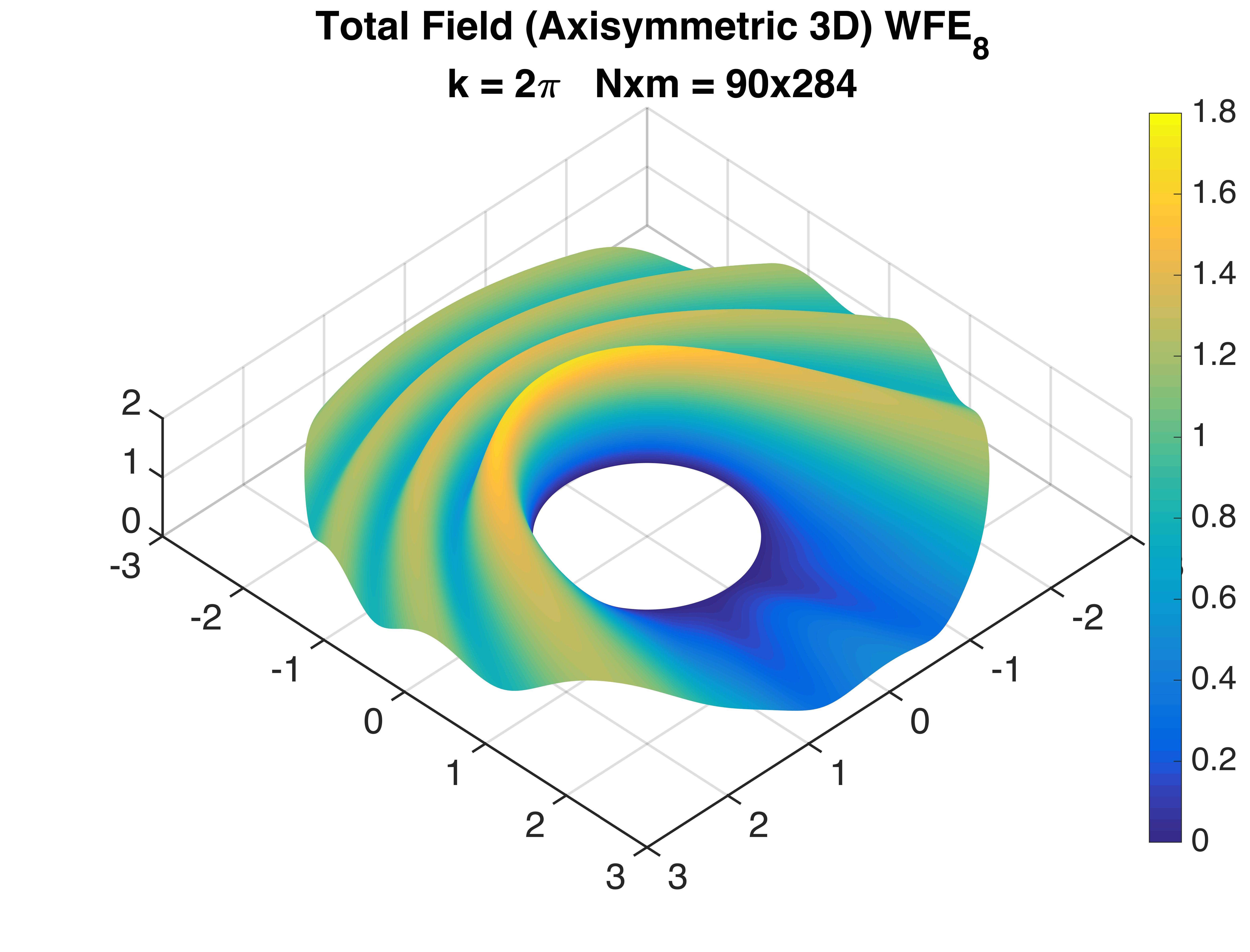} 
\end{subfigure}
\begin{subfigure}{0.37\textwidth}
\includegraphics[width=\linewidth, height=5.5cm]{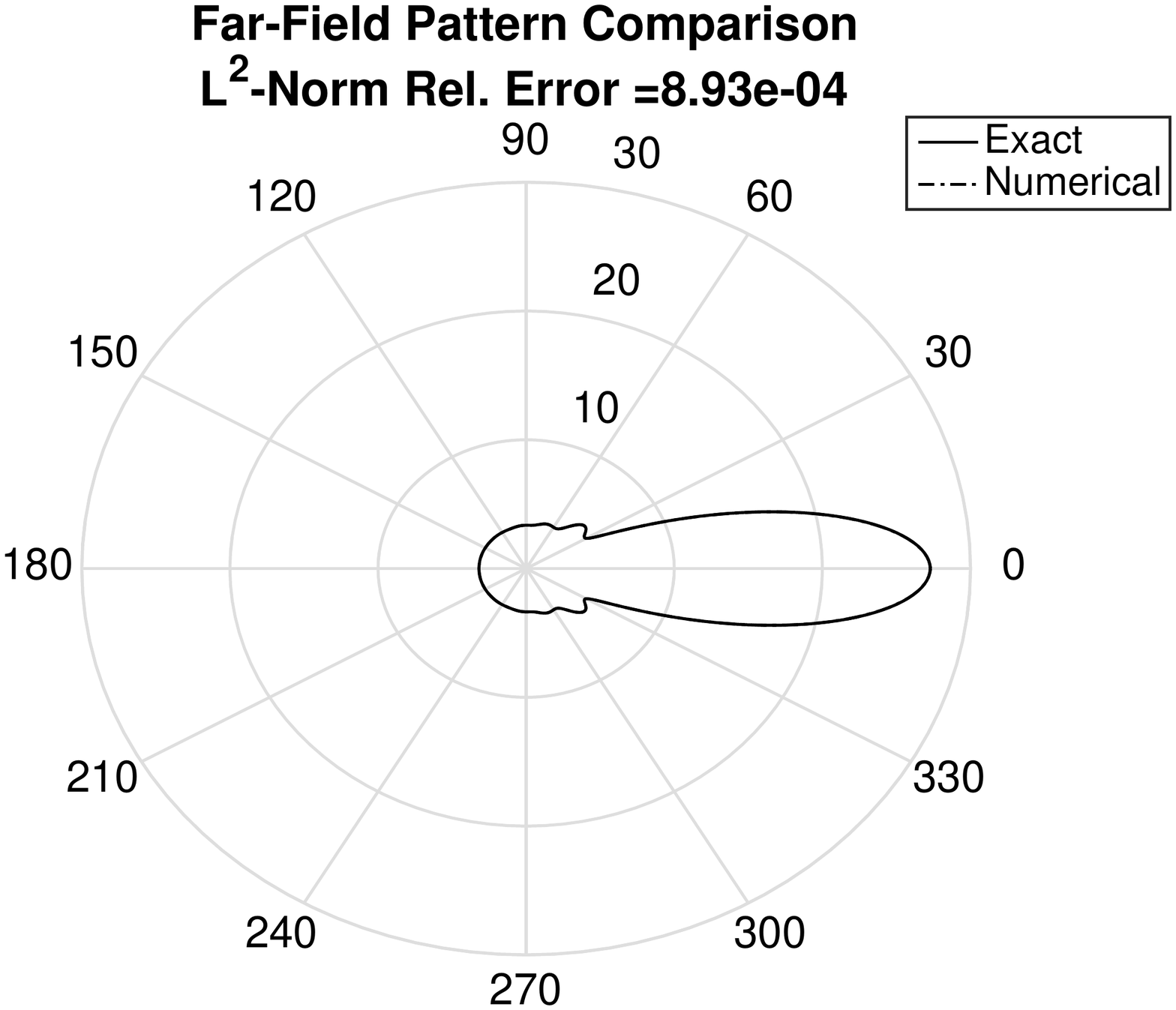}
\end{subfigure}
\begin{subfigure}{0.25\textwidth}
\includegraphics[width=\linewidth, height=4.cm ]{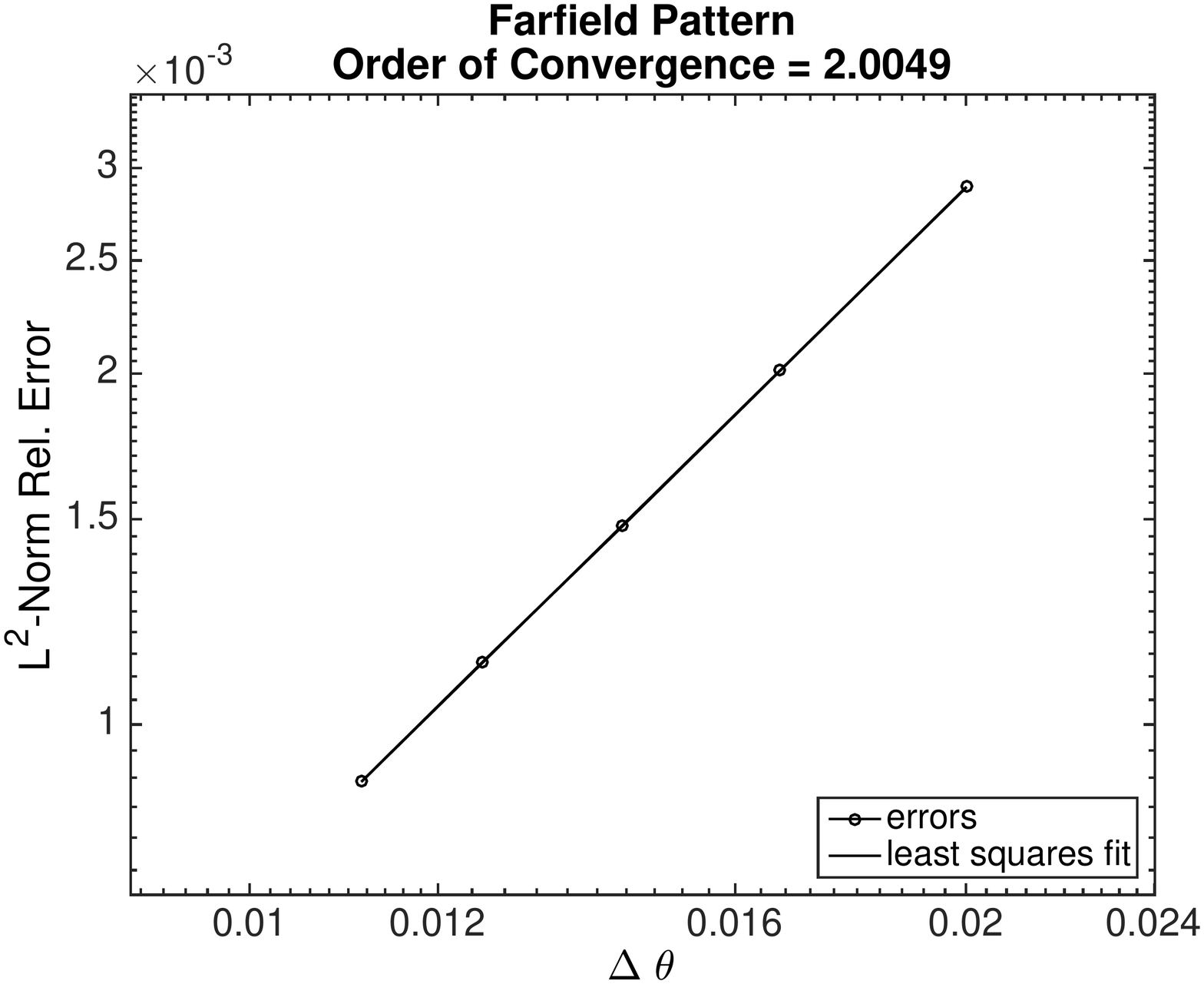}
\end{subfigure}
\begin{subfigure}{0.37\textwidth}
\includegraphics[width=\linewidth, height=5.5cm ]{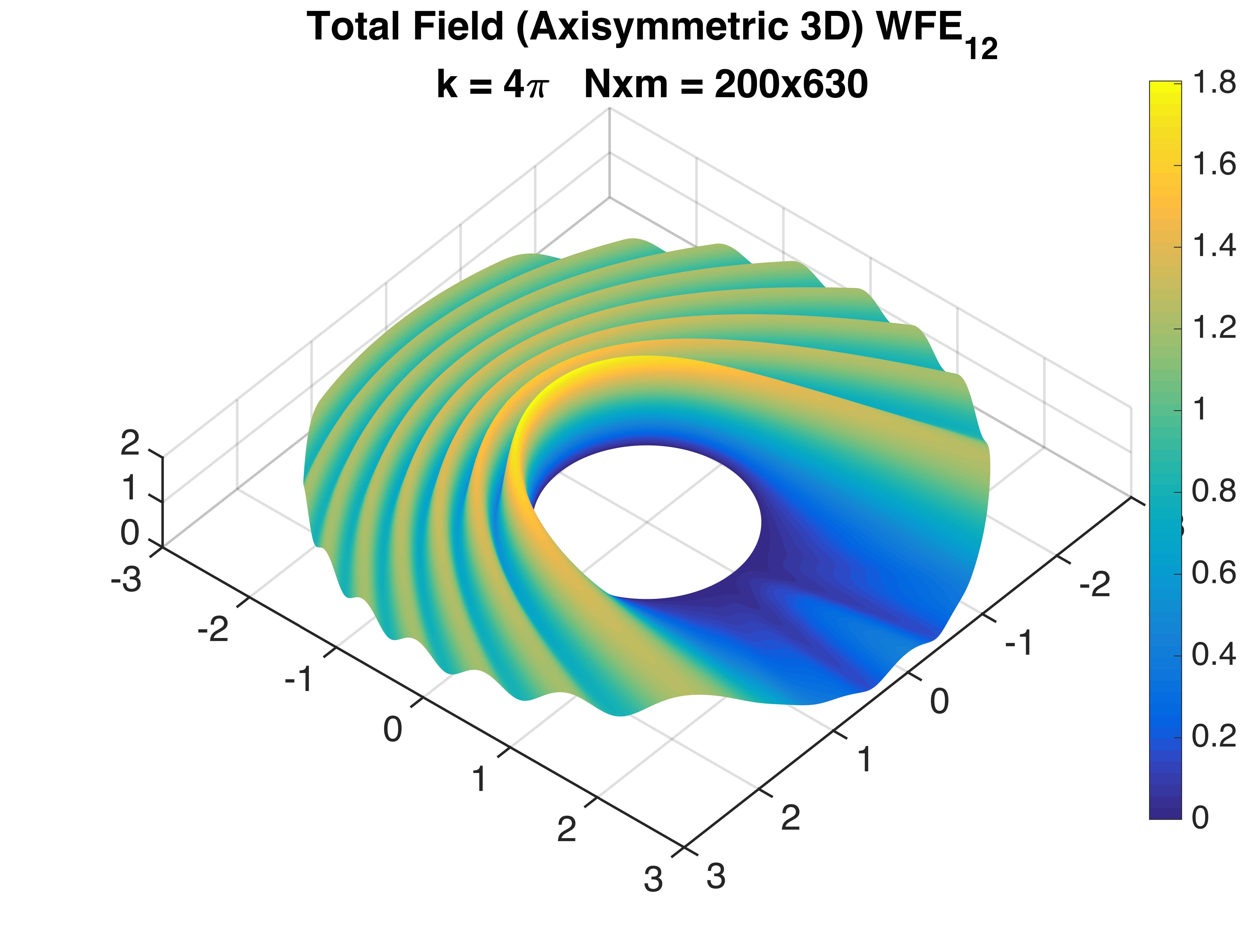} 
\end{subfigure}
\begin{subfigure}{0.37\textwidth}
\includegraphics[width=\linewidth, height=5.5cm]{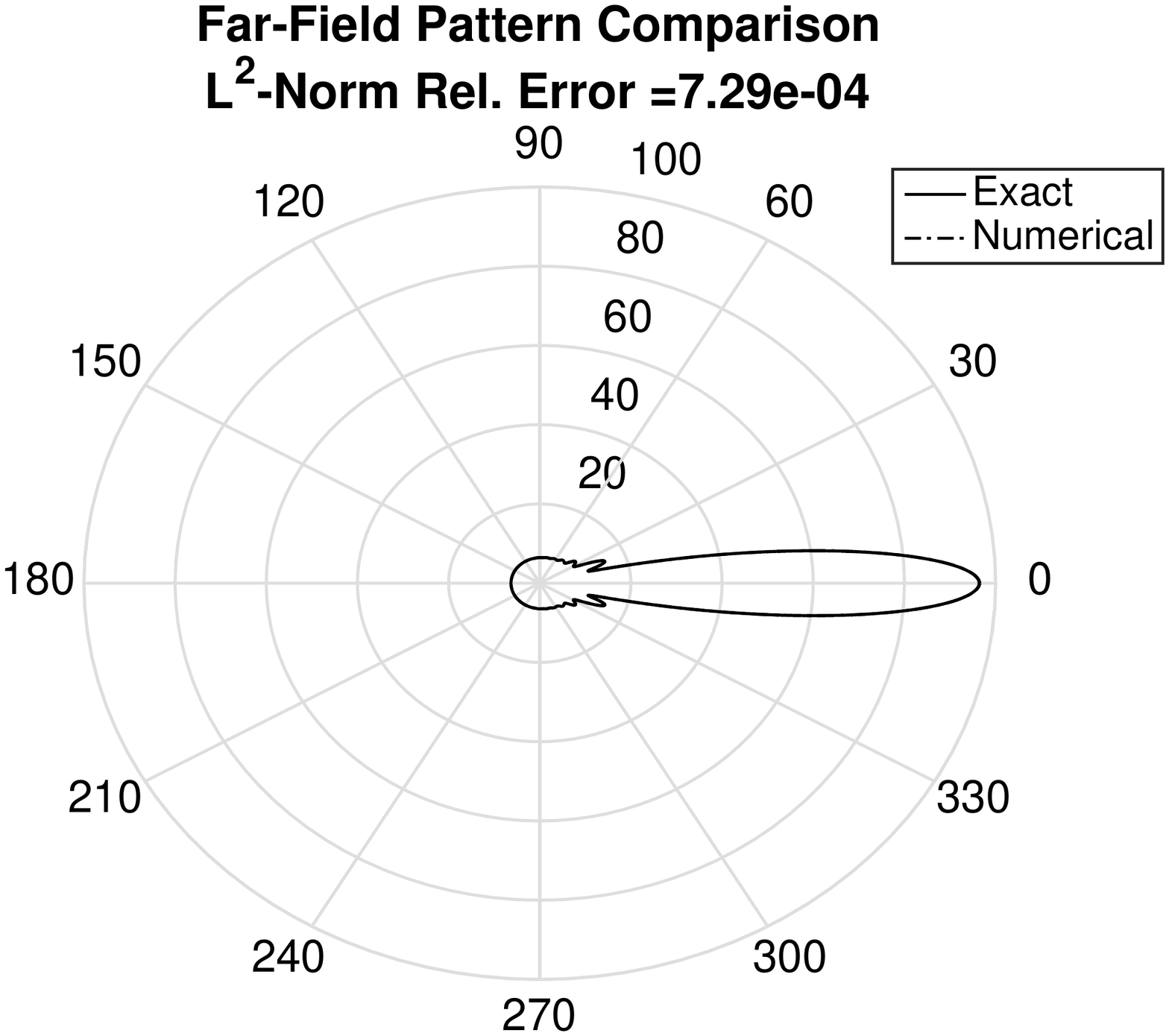}
\end{subfigure}
\begin{subfigure}{0.25\textwidth}
\includegraphics[width=\linewidth, height=4.cm ]{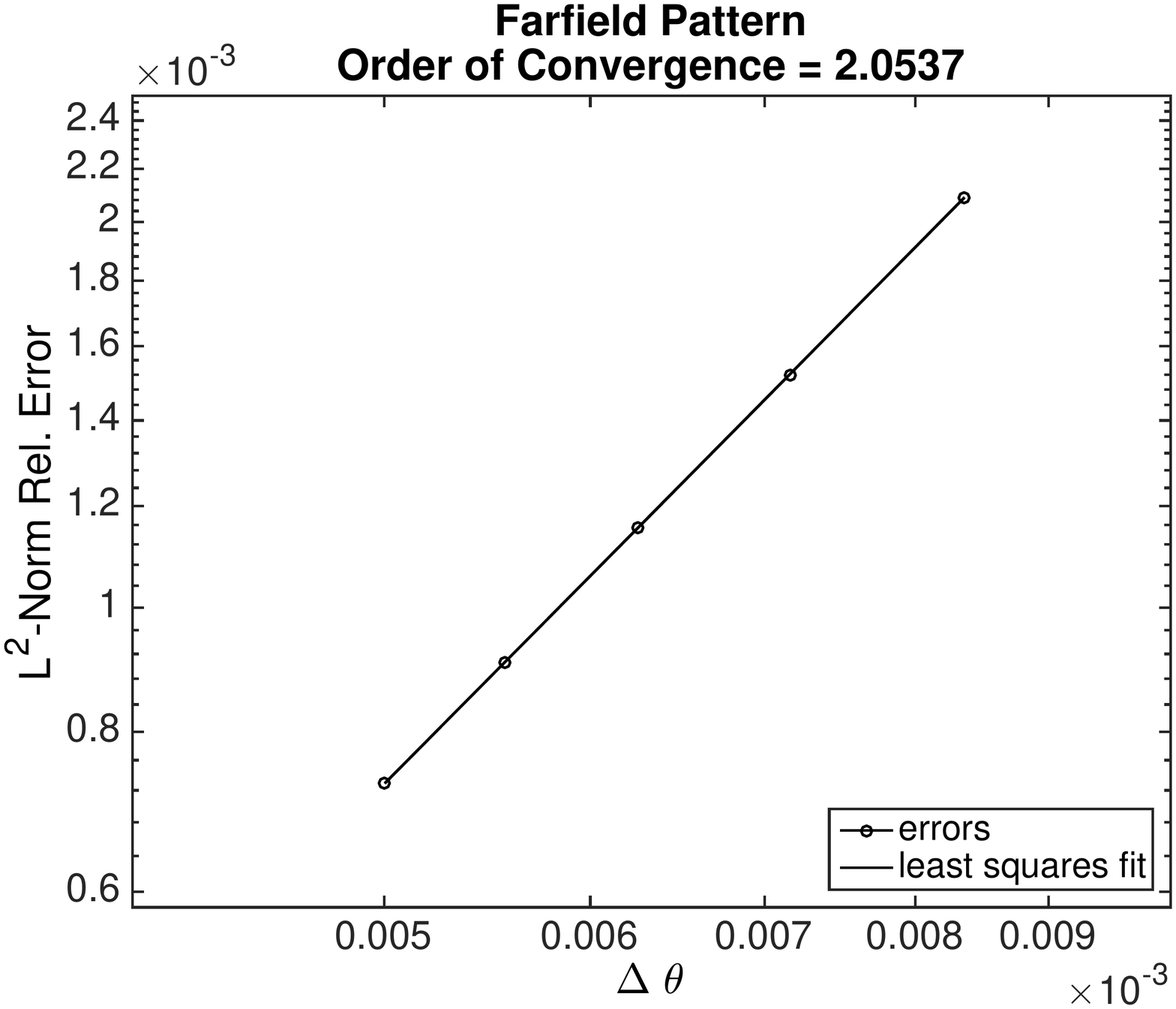}
\end{subfigure}
\vspace{-.7cm}
\caption{Numerical results for scattering from a spherical scatterer using Wilcox farfield ABC: cross-sections of the total field for arbitrary $\theta$, farfield pattern, and order of convergence for two different frequencies $k=2\pi,4\pi$.} 
\label{fig:3DSpericalScattering}
\end{figure}

In Fig. \ref{fig:3DSpericalScattering},  cross-sections of the total field for an arbitrary angle $\theta$ are depicted.  The middle graphs corresponds to the approximation of the farfield pattern of this scattering problem. These  graphs were extended to the interval $[0,2\pi]$ by taking the mirror image of the solution in $[0,\pi]$. Finally, the rightmost graphs show the second order convergence of the numerical solution to the exact solution when Wilcox farfield expansions ABC are employed. The data employed to generate the graphs in the top row of Fig.  \ref{fig:3DSpericalScattering} is: $k=2\pi$, $R=3$,
terms in WFE$_L$,  $L = 8$, and set of grid points used to achieve the second order convergence, PPW = 25, 30, 35, 40, 45. Similarly, the bottom row graphs were obtained using: $k=4\pi$, $R=3$, terms in WFE$_L$,  $L = 8$, and set of grid points used to achieve the second order convergence, PPW = 30, 35, 40, 45, 50.

These results reveals the high accuracy that can be achieved using the exact Wilcox farfield expansions in the 3D case. As we showed in the 2D case, the accuracy of the numerical solutions depends only on the order of approximation of the numerical method employed in the interior of $\Omega^{-}$ when enough terms in the exact farfield expansions ABCs are used.

\section{Concluding remarks}
\label{Section.Conclusions}
We have derived exact local ABCs for acoustic waves in two-dimensions (KDFE), and in three-dimensions (WFE).  We have constructed them directly from Karp's and Atkinson-Wilcox's farfield expansions, respectively. A previous attempt by Zarmi and Turkel \cite{Zarmi-Turkel} to derive a high order local ABC from Karp's expansion was partially successful. However, they were able to obtain other high order local conditions using an annihilating technique more general than the procedure used to obtain HH-ABC.

\begin{figure}[!h]
\begin{subfigure}{0.33\textwidth}
\includegraphics[width=\linewidth, trim = {20 20 20 20}]{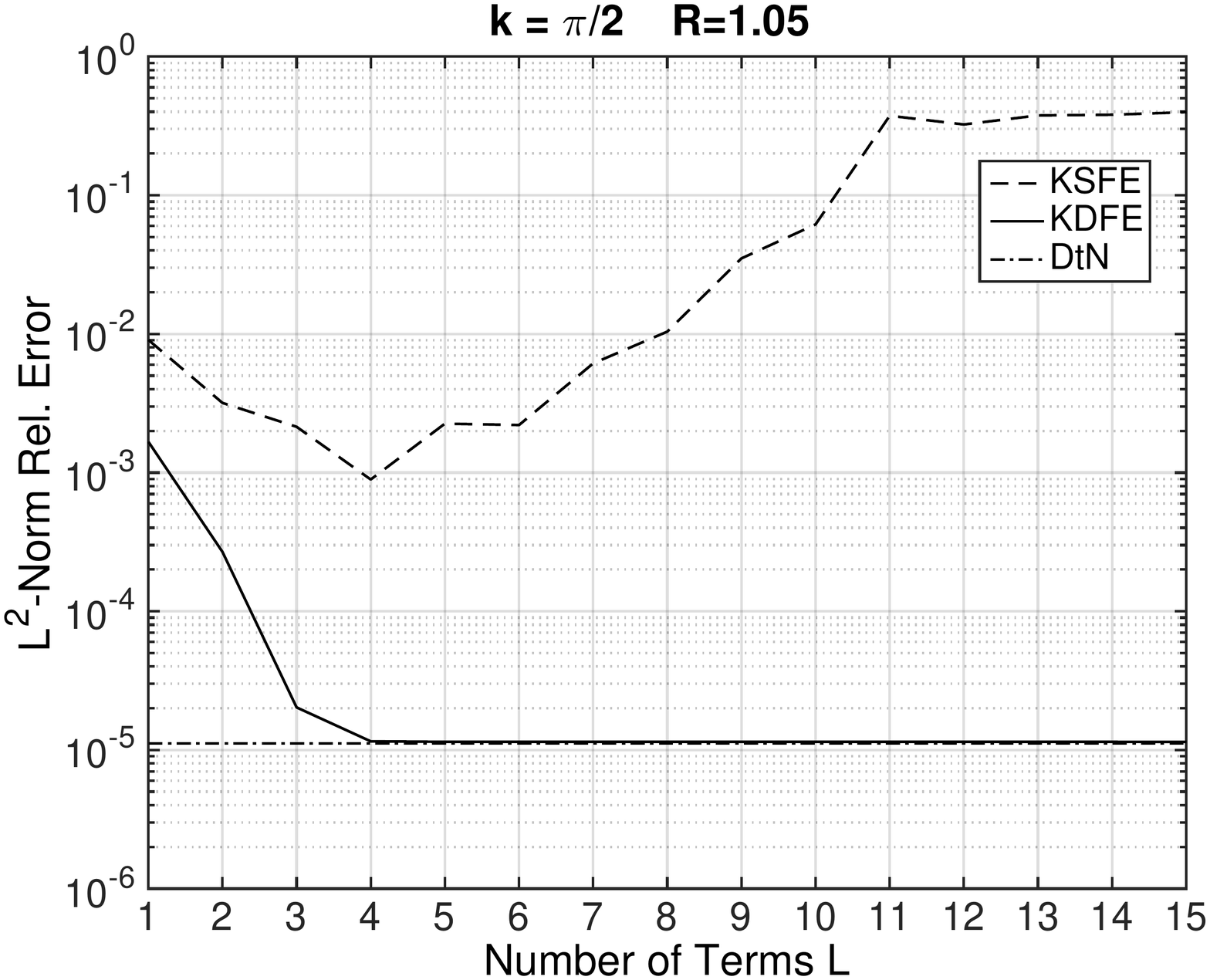} 
\end{subfigure}
\begin{subfigure}{0.33\textwidth}
\includegraphics[width=\linewidth, trim = {20 20 20 20}]{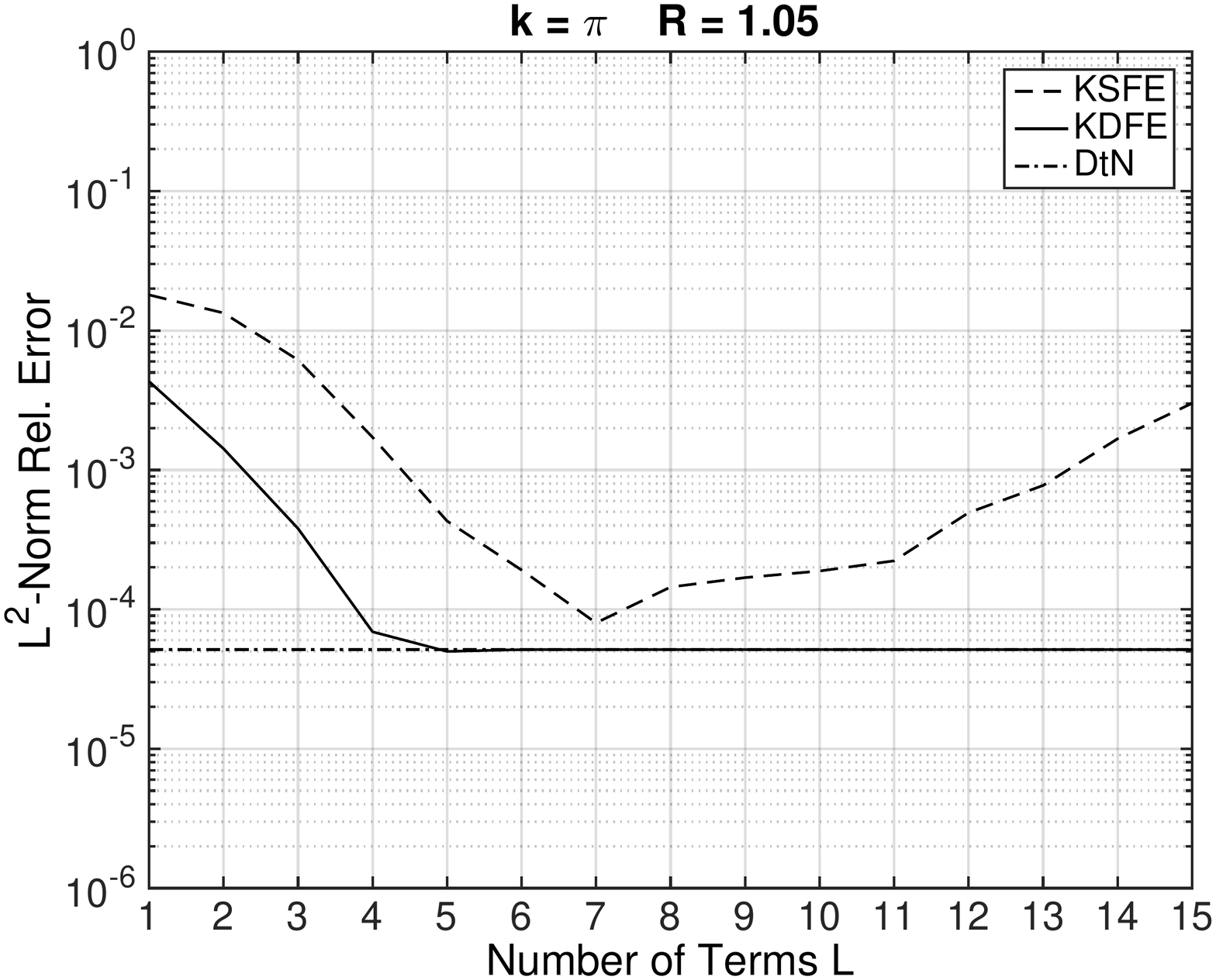}
\end{subfigure}
\begin{subfigure}{0.33\textwidth}
\includegraphics[width=\linewidth, trim = {20 20 20 20}]{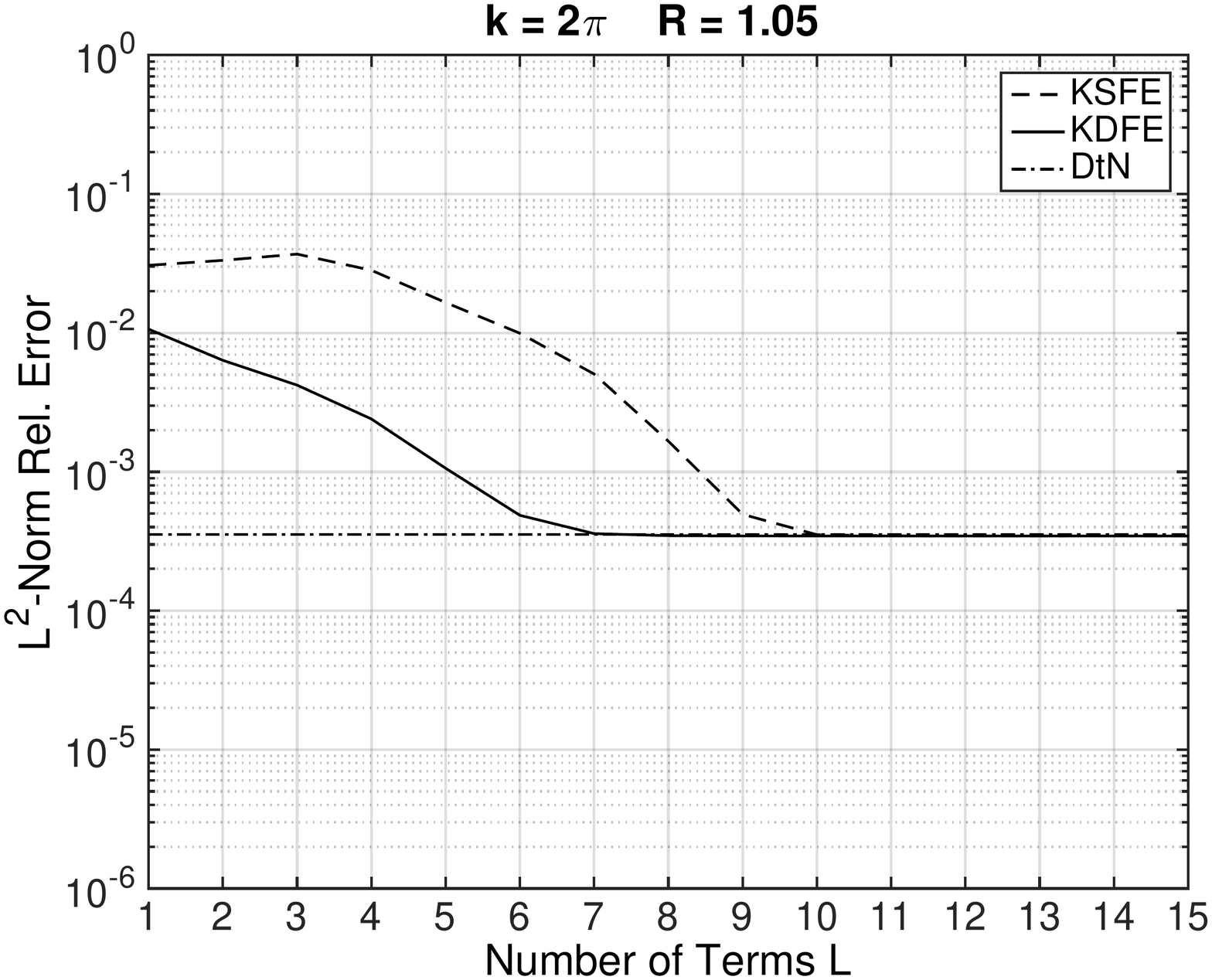}
\end{subfigure}
\caption{Convergence properties of the numerical approximation of the FFP, obtained from KSFE$_L$-BVP and KDFE$_L$-BVP, when $L$ increases for various $kR$ products.} 
\label{FFEConvergence3}
\end{figure}

Some of the attributes of the novel farfield ABCs have been highlighted in various sections of this article. 
Among the most relevant attributes we find the exact character of these absorbing conditions according to \cite{GivoliReview2}. This means the error between the solutions of KDFE$_L$-BVP  and WFE$_L$-BVP, and the solutions of their corresponding original unbounded problems approaches zero when $L\rightarrow\infty$ and the radius $R$ of the artificial boundary is held fixed. Although, it is not possible to prove this exact property merely from numerical experiments, it is still possible to determine this behavior for moderately large values of $L$. A discussion on this convergence properties follows in the next paragraphs. 

As we pointed out earlier, possibly the most well-known higher order local absorbing boundary condition in two-dimensions is due to Hagstrom and Hariharan \cite{Hagstrom98} which we denote as HH-ABC. The advantage of KDFE sequence of ABCs over the HH counterpart is that the former leads to convergence of the numerical approximation to the exact solution for a fixed value of $R$, while the HH-ABC only converges asymptotically as $R$ increases. In addition to KDFE and WFE farfield ABCs, we also derived KSFE in Section \ref{Section.ABC2DAsymp}, which is a farfield expansion  obtained from a classical asymptotic expansion of Karp's series. This asymptotic expansion is the same 
employed in the derivation of the BGT and HH absorbing conditions in two-dimensions. 

In Fig. \ref{FFEConvergence3} the convergence properties of the numerical FFP obtained form KSFE-BVP and KDFE-BVP are compared. The physical problem is the same scattering problem studied in Section \ref{orderconverg2D} and illustrated in Fig. 
\ref{fig:ScattFields}. However, instead of describing the approximation of the outgoing wave at the artificial boundary, we describe the approximation of the farfield pattern for different values of $kR$ which are obtained for a fixed $R=1.05$ combined with appropriate values of $k$. 

Notice that the convergence of the solutions of KSFE$_L$-BVP is conditioned by the value of the frequency $k$ and radius $R$ of the artificial boundary. More precisely,
for $kR=\pi/2$ and $kR=\pi$, the FFP approximation of KSFE$_L$-BVP begins to diverge from the exact FFP for $L\ge 4$ and $L\ge 7$, respectively. However for $kR=2\pi$, the FFP of KSFE$_L$-BVP converges to the exact FFP when $L$ increases, for $1\le L\le 15$. Furthermore, this approximation is as good as the one obtained using the exact DtN
boundary condition for $10\le L\le 15$.
However, as we continue increasing the number of terms $L$, the solutions of KSFE$_L$-BVP will eventually diverge. This behavior parallels the one established by a rigorous proof given by Schmidt and Heier \cite{Kersten} for the convergence properties of the solution obtained using Feng's absorbing boundary conditions. Feng's condition arises from an asymptotic expansion of the exact DtN boundary condition for large $R$. 

In practical terms, the use of KSFE-ABC is advisable only if the product $kR$ is large enough which is also applicable to any absorbing condition obtained from an asymptotic expansion of series representation of the outgoing waves.  The application of KSFE$_L$  is still useful in many physical problems where $kR$ is sufficiently large
since it takes only a few terms to reach the same order of convergence than the one obtained from DtN-ABC.
On the other hand, the exact character of KDFE-BVP is clearly shown in Fig. \ref{FFEConvergence3}. In fact, it only takes four terms of Karp's expansion ($L=4$) to reach the same level of convergence of the solution of the DtN-BVP when $kr=\pi/2$. This level is maintained until $L=15$. Similar behavior is observed for the other two values of the product $kR$. In all these experiments $R=1.05$ and the frequency $k$ was chosen according to the desired value of $kR$. We also employed the same grid in all these experiments. 

A non-asymptotic version of BGT$_2$ can be obtained by constructing  a second order operator that annihilates 
the terms of $O \left( 1\right)$  in 
Karp's expansion. This was the approach followed by Grote and Keller \cite{Grote-Keller01}  to obtain the second order differential operator,
\begin{equation}
L_0u = \partial_r u - k \left( \frac{H_0'(kr)}{H_0(kr)}u - \left(\frac{H_0'(kr)}{H_0(kr)} -\frac{H_1'(kr)}{H_1(kr)}\right)\partial_{\theta}^2 u\right)\label{BGTH2}
\end{equation}
 An alternative derivation of (\ref{BGTH2}) was given by Li and Cendes \cite{Li-Cendes} by requiring that the first two terms of the exact solution of normal modes of Helmholtz equation in cylindrical coordinates were annihilated.  All these authors and more recently Turkel, Farhat, and Hetmaniuk \cite{Turkel-Farhat2004} used the differential operator (\ref{BGTH2}) at the artificial boundary as an ABC for the scattering of a plane wave from a circular obstacle. They noticed the superior accuracy of the solution obtained with this condition compared with the one obtained from the absorbing boundary conditions  BGT$_L$  (for $L=1,2$),
  for low values of the frequency $k$. In particular,  Turkel et al. \cite{Turkel-Farhat2004} showed that for a frequency $k=0.01$ and radius $R=5$ (artificial boundary)  the $L^2$-norm relative error at the artificial boundary  is about 50 times better using (\ref{BGTH2}) over BGT$_2$.
  These results can be considered as a low order version of the results illustrated in Fig. \ref{FFEConvergence3} for the high order local KSFE and KDFE absorbing boundary conditions.  Zarmi and Turkel \cite{Zarmi-Turkel} also arrived to the same conclusion by comparing their higher order version of Li and Cendes' operator in 2D  with the higher order versions of HH operators obtained from the asymptotic Karp's expansion.

We would like to highlight two other valuable attributes of the farfield ABCs. First, the farfield pattern is the coefficient of the leading term of the farfield expansion. This leading coefficient (angular function) is 
one of the unknowns of the linear system to be solved to obtain the approximation of the exact solution. So, there is no additional computation afterward to obtain the FFP.  In most of our experiments, we decided to use the FFP approximation formula obtained in Section \ref{NumericalFFP} for comparison purposes.
Secondly, by increasing the parameter $L$ (number of terms in the expansion), the error introduced by KDFE$_L$ and WFE$_L$ can easily be reduced and made negligible compared with the error from the numerical method in the interior domain $\Omega^{-}$. 

There are numerous directions in which the application of farfield ABCs can be extended. Some  of those on which we are currently working or plan to work are the following:
\begin{enumerate}
\item  {The combined formulation of high order finite difference (or finite element), for the discretization of the Helmholtz equation in $\Omega^{-}$, with the novel exact local farfield ABCs. This will show the high accuracy that can be achieved by simply increasing the number of terms in the ABCs expansion, using relatively coarse grids. For this purpose, we plan to explore several high order compact finite difference schemes that have been recently developed  \cite{Britt-Tsynkov-Turkel2010,Turkel-Gordon2013} and others well-established found in \cite{Lele1992}.}

\item  The extension of the formulation of our ABC to the wave equation (time-domain). This extension is clearly feasible in 3D since the time-domain analogue of the Wilcox expansion is available \cite{Bayliss-TurkelWave1980,Grote03} due to the Fourier duality between $\partial_{t}$ and $ik$, and between $e^{ikr}$ and time shift. This is also valid for the KSFE-ABC in 2D. However, for the KDFE absorbing condition in 2D, Karp's expansion has no closed-form transformation to the time domain due to the complexity of the terms $H_{0}(kr)$ and $H_{1}(kr)$. Such a transformation would lead to nonlocal operators in the time variable similar to the ones discussed in \cite{Alpert-Greengard-Hagstrom2000}.   

\item Construction of exact local farfield ABCs for multiple scattering of time-harmonic waves. The farfield expansions of Wilcox and Karp allow the evaluation of the scattered field semi-analytically at any point outside the artificial boundary. This property is fundamental in the multiple scattering setting for the introduction of artificial sub-boundaries enclosing obstacles disjointly \textit{\`{a} la} Grote-Kirsch \cite{Grote01}.

\end{enumerate}

%


\section*{Acknowledgments}
The first and third authors acknowledge the support provided by the Office of Research and Creative Activities (ORCA) of Brigham Young University.
Thanks are also due to the referees for their useful suggestions.


\bibliographystyle{elsarticle-num}
\bibliography{AcoBib}

\end{document}